\documentclass[12pt]{amsart}
\usepackage{amssymb}
\usepackage{latexsym}
\usepackage{amsfonts,euscript}
\usepackage{tabularx}
\usepackage{array}
\usepackage{booktabs}
\usepackage{tikz}
\usepackage{ctable}
\usetikzlibrary{shapes,arrows}
\usetikzlibrary{positioning,calc}
\usepackage{verbatim}
\usepackage{caption}
\usepackage{subcaption}

\allowdisplaybreaks[1]

\usepackage{lineno}
\usepackage{graphicx}
\usepackage{sidecap}
\usepackage{epstopdf}
\usepackage{epsf}
\usepackage{epsfig}

\usepackage{graphicx}
\makeatletter \newcommand*{\rom}[1]{\expandafter\@slowromancap\romannumeral #1@}

\newtheorem{theorem}{Theorem}[]

\newtheorem{proposition}{Proposition}[section]

\theoremstyle{definition}

\setlength{\marginparwidth}{0.85in}
\setlength{\textwidth}{6.2in}
\setlength{\textheight}{9.0in}
\numberwithin{equation}{section}

\addtolength {\topmargin}{-.5in}
\addtolength {\oddsidemargin}{-.7in}
\addtolength {\evensidemargin}{-.7in}

\newcommand{\D}{\displaystyle}

\newcommand{\bfa}[1]{\mbox{\boldmath $ #1 $}}
\newcommand{\vo}{\vec{o}\@ifnextchar{^}{\,}{}}

\begin{document}
\bibliographystyle{abbrv}
\title{An Immuno-Epidemiological Vector-Host Model with Within-Vector Viral Kinetics } 

\author{Hayriye Gulbudak$^*$}
\address{Mathematics Department,
University of Louisiana at Lafayette,
Lafayette, LA }
\email{hayriye.gulbudak@louisiana.edu}

\thanks{$^*$author for correspondence}

\begin{abstract}
A current challenge for disease modeling and public health is understanding pathogen dynamics across scales since their ecology and evolution ultimately operate on several coupled scales. This is particularly true for vector-borne diseases, where within-vector, within-host, and between vector-host populations all play crucial roles in diversity and distribution of the pathogen. Despite recent modeling efforts to determine the effect of within-host virus-immune response dynamics on between-host transmission, the role of within-vector viral dynamics on disease spread is overlooked. Here we formulate an age-since-infection structured epidemic model coupled to nonlinear ordinary differential equations describing within-host immune-virus dynamics and within-vector viral kinetics, with feedbacks across these scales. We first define the \emph{within-host viral-immune response and within-vector viral kinetics dependent} basic reproduction number $\mathcal R_0.$ Then we prove that whenever $\mathcal R_0<1,$ the disease free equilibrium is locally asymptotically stable, and under certain biologically interpretable conditions, globally asymptotically stable. Otherwise if $\mathcal R_0>1,$ it is unstable and the system has a unique positive endemic equilibrium.  In the special case of constant vector to host inoculum size, we show the positive equilibrium is locally asymptotically stable and the disease is weakly uniformly persistent.  Furthermore numerical results suggest that within-vector-viral kinetics and dynamic inoculum size may play a substantial role in epidemics. Finally, we address how the model can be utilized to better predict the success of control strategies such as vaccination and drug treatment.\\ \noindent
{\sc Keywords: Vector-host model, multi-scale modeling, stability analysis, Lyapunov function, vector to host inoculum size, within-vector viral kinetics, reproduction number}
\bigskip

\noindent
{\sc AMS Subject Classification: 92D30, 92D40}
\end{abstract}
\date{\today}
 \maketitle
\pagestyle{myheadings}
\markboth{\sc Impact of Within-Mosquito Viral Dynamics on the Spread of Arbovirus Diseases}
{\sc }

\baselineskip14pt


\section{Introduction}

The ecology and evolution of infectious diseases operate on several interdependent scales. This is particularly true for vector-borne diseases, where coupled within-vector, within-host, and between vector-host population dynamics together determine the diversity and distribution of the pathogen. One of the major mechanisms in determining disease abundance and virus evolution is within-vector viral kinetics \cite{forrester2014arboviral,sim2014mosquito,tabachnick2013nature}. Yet, from the mathematical point of view, the within-vector viral dynamics has been overlooked, and rarely studied \cite{reiner2013systematic,rock2015age,wang2017global}. There is also a serious need for an integrated modeling approach that links all scales \cite{handel2015crossing}. However, traditional modeling approaches treat within-host, within-vector, and between-host pathogen dynamics as separate systems.  Multi-scale mathematical models can be both an avenue for understanding the complex features displayed by vector-borne diseases and tools for targeting interventions.

More than $17\%$ of all infectious diseases are accounted to be vector-borne diseases, causing more than $700, 000$ deaths annually world-wide according to World Health Organization (WHO). In particular, a vast majority of vector-borne vertebrate infecting viruses (arboviruses) are responsible for a number of severe diseases in humans (yellow fever (YFV), dengue (DENV), various encephalitides, etc.) and livestock (West Nile encephalomyelitis (WNV), Rift Valley fever (RVF), vesicular stomatitis, etc.). 
Mathematical modeling can help us understanding the impact of mechanisms behind the establishment and transmission of vector-borne viruses, which are crucial for developing effective intervention strategies. 

Mosquito-borne diseases are spread when mosquitoes bite hosts and release microscopic parasites, which live in the salivary glands of the mosquitoes, into the hosts' bloodstream. In a recent work, Churcher et al. \cite{churcher2017probability} show that the number of parasites each mosquito carries influences the chance of successful malaria infection. In particular,  it demontrates that the more parasites present in a mosquito's salivary glands, the more likely it was to be infectious, and also the faster any infection would develop, highlighting the importance of within-mosquito viral kinetics. 
So far disease control authorities, including WHO (World Health Organization), has relied on the average number of potentially infectious mosquito bites per person per year. However, not every infectious mosquito bite will result in disease, and not all equally infectious. Therefore, there is an urgent need for unified models to understand how the within-vector viral kinetics can be scaled up to the host population level disease transmission.

A large number of previous studies formulate and analyze coupled nonlinear ODE models, describing population level vector-host disease spread. In a study that extensively reviewed mosquito-borne pathogen models, Reiner et al.\cite{reiner2013systematic} suggests that  ``moving forward in mosquito-borne disease modeling and addressing public health challenges will require modeling efforts on the heterogeneities such as variation in individual hosts and mosquitoes and their consequences for heterogeneous biting."
Some of the previous vector-borne disease modeling studies focus on the impact of mechanisms such as temperature and rainfall on the vector population \cite{eikenberry2018mathematical, taghikhani2018mathematics}, or preventative measures, such as mosquito reduction strategies or personal protection \cite{prosper2014optimal}.  
Due to the extrinsic and intrinsic incubation periods and vector maturation process, time delayed vector-borne disease models have been deployed in \cite{cai2017global,martcheva2013unstable,nah2014malaria,fan2010impact}. Dynamical properties of vector-borne diseases are also studied in age-since-\textit{host} infection structured PDE models with direct transmission, time delay or reinfection \cite{wei2008epidemic, lashari2011global,cai2013dynamical, xu2016hopf}.  Furthermore, in within-host scale,  arbovirus-immune response dynamics such as within-host DENV transmission is modeled and analyzed in \cite{ciupe2017host}.  
Immunology is also coupled with between vector-host disease dynamic models in prior works \cite{gulbudak2017vector, tuncer2016structural}. Gulbudak et al. \cite{gulbudak2017vector} construct a multi-scale ODE-PDE hybrid model (similar to \cite{gilchrist2002modeling}), coupling within-host viral-immune response and between vector-host disease transmission on population scale and study coevolution of both vector-borne pathogen and host.  In a companion paper, Tuncer et. al.\cite{tuncer2016structural} fit RVF multi-scale data  to the multi-scale model in \cite{gulbudak2017vector}, and utilize identifiability analysis for the model parameters. Furthermore, in a recent work, the effect of antibody-dependent enhancement on the transmission dynamics and persistence of multiple strain pathogens is also investigated in a two strain immuno-epidemiological DENV model structured by dynamic antibody size \cite{DENVimmunoepi2019}.

Within-vector viral kinetics are more often overlooked in disease dynamic models, despite its role on disease transmission and virus evolution \cite{tabachnick2013nature,reiner2013systematic}. Many prior modeling studies do not take into account how infectious each of those bites and each bite is considered equally infectious. In a recent article, Rock et al.\cite{rock2015age} develop and numerically study an age-since-vector infection and -bite model for vector-borne diseases and illustrate the distinct dynamics induced by complex interaction between feeding and life-expectancy, vector-infection-age, and bite distributions; yet the heterogeneity among host infectivity is ignored. Similarly, in another study, Wang et al. \cite{wang2017global} formulate and analyze a hybrid hyperbolic ODE-PDE model, where vector and host are stratified by infection ages. Nevertheless, to our best knowledge,  none of the modeling studies exclusively coupled between vector-host disease dynamics, within-vector viral dynamics and within-host virus-immune response, that are crucial competences in disease spread.

Our modeling framework, introduced here, allows for variable vector to host inoculum size and infectivity of mosquitoes, both dependent on within-vector viral kinetics, and both inducing heterogeneity among immune-pathogen dynamics and the infected host population. The multi-scale framework also enables direct incorporation of within-vector viral data, which can be utilized to assess the role of within-vector viral dynamics on the disease spread. As opposed to some other multi-scale models \cite{agusto2019transmission,garira2019coupled,kitagawa2019mathematical}, our PDE-ODE system does not reduce to a simpler set of ODEs because both the within-host and within-vector components operate on a different timescale with dynamics stratifying the structured PDE epidemic model.   Interestingly, analytical and numerical results show that the within-vector-viral dynamics can be a \emph{driving mechanism} in large epidemics. The model here can be applied to many arbovirus diseases including WNV, DENV, RVFV, and may help us understanding the role of mosquitoes, environmental factors, and host immune response on disease dynamics and the impact of disease control strategies, such as vaccination, drug treatment, and Wolbachia biocontrol strategy. It scales within-vector viral kinetics and within-host viral-immune response process up to the host population level disease dynamics. 

Here, we organize the paper as follows: in section \eqref{acc_scales}, we present an immuno-epidemiological model, incorporating within-vector viral kinetics, described with a modified logistic model with Allee effect.  Distinct from the existent studies, the modeling formulation that we introduced here not only tracks the \emph{infection-immune states of host population}, but also the \emph{heterogeneity among infectivity of vectors}, depending on \emph{their within-viral kinetics}. In section \eqref{analytical}, we define the within-host immune-response and within-vector-viral kinetics dependent reproduction number $\mathcal R_0$ and investigate the threshold properties of the multi-scale system such as local and global stability of the equilibria in addition to the disease persistence for a special case.  In section \eqref{implications}, we introduce a feasible way to incorporate ``epidemiological feedback'' between epidemiological and immunological scales. By doing so, we numerically assess the impact of both mechanisms \emph{within-vector viral kinetics} and \emph{host infectivity} on the population scale disease transmission. Finally, we summarize the results in Conclusion section \eqref{conc}.


\begin{table}[h]
\caption{Definition of the within-host model variables and parameters}
\label{table:immuniological parameters} 
\centering 
\begin{tabularx}{\textwidth}{>{} lX}
\toprule
Variable/Parameter  & Meaning \\ [0.5ex]
\toprule
\\
$P_0^{v\rightarrow h}=\hat{c}V(s,V_0)$ & Vector to host inoculum size (\emph{the amount of pathogen that an infectious mosquito, that has infection age $s$, injects to host upon giving a bite}) \\[0.5ex]
$P(\tau,s)$ & Pathogen concentration at $\tau$ days post host infection \emph{with initial condition $(P_0=P_0^{v\rightarrow h}, M_0, G_0),$}  \\[0.5ex]
$M(\tau,s)$ & The concentration of IgM immune response antibodies at host infection age $\tau,$ \emph{with initial condition $(P_0^{v\rightarrow h}, M_0, G_0),$} \\[0.5ex]  
$G(\tau,s)$ &  The concentration of IgG immune memory antibodies at host infection age $\tau,$ \emph{with initial condition $(P_0^{v\rightarrow h}, M_0, G_0),$} \\[0.5ex]
$r_h$ & Within-host parasite growth rate, \\[0.5ex] 
$K_h$ & Within-host parasite carrying cappacity, \\[0.5ex] 
$\epsilon$ & The efficiency of the IgM immune response at killing the parasite,\\[0.5ex] 
$\delta$ & The efficiency of the IgG immune response at killing the parasite, \\[0.5ex]
$a$ & The IgM immune response activation rate, \\[0.5ex]
$q$ & The per-capita rate at which the IgM immune response antibody production switches to the IgG immune response antibody production, \\[0.5ex]  
$b$ & The IgG activation rate upon coming into contact with the pathogen, \\[0.5ex]
$c$ & IgM immune response antibody decay rate, \\[0.5ex]
\bottomrule
\end{tabularx}
\end{table}

\begin{table}[h]
\caption{Definition of the within-vector model variables and parameters}
\label{table:immunological parameters} 
\centering 
\begin{tabularx}{\textwidth}{>{} lX}
\toprule
Variable/Parameter  & Meaning \\ [0.5ex]
\toprule
\\
$V_0^{h\rightarrow v}$ & Host to vector inoculum size (\emph{the amount of pathogen acquired from the blood meal of an infectious host by a susceptible vector upon giving a bite}) \\[0.5ex]
$V(s)=V(s,V_0^{h\rightarrow v})$ & Pathogen concentration within-an infectious vector  at $s$ days post infection \\[0.5ex]
$r_v$ & Within-vector pathogen growth rate, \\[0.5ex] 
$U_v$ & Allee threshold,\\[0.5ex] 
$K_v$ & Within-vector pathogen carrying cappacity. \\[0.5ex]
\bottomrule
\end{tabularx}
\end{table}

\section{Coupling disease dynamics across the scales}\label{acc_scales}

\noindent \emph{\textbf{\underline{Within-vector:}}} Consider a modified logistic model with \emph{Allee effect}, describing the within-vector viral dynamics:

\vspace{-.2cm}
{\small \begin{align}
\dot{V}(s) = r_v V(1-V/ K_v)(V / U_v-1), \label{within_vector} 
\end{align}}\vspace{-.2cm}

\noindent where the variable $V(s)$ represents the pathogen concentration within an infected mosquito at infection age $s;$ i.e. $s$ days passed after vector infection. The parameters $r_v, K_v$ represent the intrinsic pathogen growth rate within a vector, and carrying capacity, respectively.  If the initial  pathogen density, denoted by $V_0^{h\rightarrow v}(=V(0))$ (\emph{the amount of pathogen acquired in the blood meal of an infectious host by a susceptible vector upon a bite}), is less than $U_v,$ then the virus within a vector (mosquito) eventually clears. Otherwise if $V_0^{h\rightarrow v}>U_v,$ then the viruses persist and asymptotically converge to the carrying capacity $K_v.$ The model parameters are fitted to within-mosquito WNV viral data given in \cite{fortuna2015experimental} in Fig.2.  

\noindent  The Allee effect in the model refers that when host to vector inoculum size $V_0^{h \rightarrow v}$ is below $U_v$ (which is assume to be a small threshold size), the virus clears faster due to loss of viruses during transportation of them from foregut to mosquito midgut and mosquito immune response \cite{sim2014mosquito}. However when the inoculum size $V_0^{h \rightarrow v}>U_v,$ despite the loss during transportation, due to larger initial density, it replicates faster and reaches to carrying capacity $K_v$ \cite{franz2015tissue}.

\noindent Here, in particular,  we focus on mosquito-borne viruses such as WNV (Cx. Pipiens), DENV ( Aedes aegypti), and RVFV (most commonly transmited  by the Aedes and Culex mosquitoes). Arbovirus infection of mosquitoes is generally asymptomatic and persists for the life of the vector, but the virus appears to be continually targeted by innate immune response \cite{sim2014mosquito}. Here for simplicity, we do not consider mosquito immune response.

\noindent \emph{\textbf{\underline{Within-host viral-immune response dynamics:}}} To capture the short-term and long-term host-immune response to pathogen ($P$) introduction, we model two particular antibodies IgM ($M$) and IgG ($G$), released by B-cell lymphocytes.  The within-host infection model takes the following form (\cite{gulbudak2017vector, tuncer2016structural}):
{\small \begin{align} \label{PB}
P_{\tau}&=(\tilde{f}(P)-\epsilon M-\delta G)P, \quad M_{\tau}=\left(aP-(q+c)\right)M, 
\quad G_{\tau} =qM + bGP,  \\ \quad  P(0,s)&=h(V(s, V_0)), M(0,s)=M_0, G(0,s)=G_0.\nonumber 
\end{align}}
\vspace{-.55cm}

\noindent
 Note that $s$ in \eqref{PB} in the initial condition $(P(0,s), M(0,s), G(0,s)),$ means that the solutions to system \eqref{PB} depend on the vector to host inoculum size, $P_0^{v\rightarrow h}.$ In particular, $P_0 =P_0^{v\rightarrow h}=h(V(s, V_0)).$  The solutions are denoted by $P(\tau,s)$, $M(\tau,s)$ and $G(\tau,s),$ where the time variable $\tau$ refers time-since-infection within a host, and $s$ denotes vector infection age.
Note that upon \emph{viral progression within an infected mosquito midgut}, the amount of pathogen in an \emph{infected mosquito saliva} dynamically changes \cite{salas2015viral}, determining the vector to host inoculum size $P_0^{v\rightarrow h}$ which is  \emph{the amount of pathogen that is injected to a susceptible host by an infectious vector.}  
If the mosquito has $V(s, V_0)$ amount of pathogen within at the time giving a bite to a host, then we assume that the vector to host inoculum size, $P_0^{v\rightarrow h},$  is a function of the amount of the pathogen within-vector, $V(s, V_0),$ i.e. $P_0^{v\rightarrow h}=h(V(s, V_0))$.  Here we consider $h(V(s, V_0))=\hat{c} V(s, V_0),$ with a constant $\hat{c} \in (0,1)$.   The intuitive assumption is that the more pathogen an infectious mosquito has within-its midgut, the more pathogen it can inject to its hosts.  Furthermore, Fortuna et al. \cite{fortuna2015experimental} displays multiple experimental data suggesting that there is a linear relationship between the amount of pathogen within a vector and the amount of pathogen in the saliva of this mosquito.

\noindent We assume that the parasite replicates with a logistic growth rate $\tilde{f}(P)=r_h(1-P/ K_h),$ where the parameters $r_h$ and $K_h$ represent net viral growth rate, and the carrying capacity within a host, respectively.  Upon exposure to virus, the IgM immune response activates at a rate $a,$ and decays at a rate $c.$  The IgM immune response antibodies are responsible for \emph{rapid destruction of virus} and kills the pathogen at a rate $\epsilon.$ Furthermore, B-cells switch production of IgM antibodies to IgG antibodies with a per-capita rate $q$ \cite{honjo2002molecular}. The pathogen also stimulates the IgG antibody immune response that activates at a rate $b$ and kills the pathogen at a rate $\delta.$ Yet, the IgG immune response antibodies mainly responsible for \emph{life-long immunity}.

All parameters and dependent variables of this within-host model and their definitions are given in Table \ref{table:immunological parameters}.

\begin{theorem}\label{withhost} If $M_0>0 (\text{ or } G_0>0$), then the pathogen (within-the host) eventually clears ($\lim_{\tau \rightarrow \infty}P(\tau,s)=0$), the IgM immune response antibodies decays to zero after viral clearance, and subsequently the IgG immune memory antibodies reach a steady-state; i.e.
$\ \lim_{\tau \rightarrow \infty}M(\tau,s)=0 \ \text{and} \ \lim_{\tau \rightarrow \infty}G(\tau,s)=G^+,$ where $G^+>0$ depends on the initial condition; i.e. $G^+=z(P_0^{v\rightarrow h},M_0,G_0).$
\end{theorem}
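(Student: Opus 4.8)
The plan is to treat $s$ as a fixed parameter (it enters only through the initial datum $P_0=h(V(s,V_0))$), so that \eqref{PB} becomes an autonomous three–dimensional ODE system in the time variable $\tau$, and to analyze convergence to the line of equilibria $\{(0,0,G):G\ge 0\}$. First I would record the basic qualitative facts. The nonnegative octant is positively invariant: $P_\tau$ and $M_\tau$ have the multiplicative form $(\cdots)P$ and $(\cdots)M$, so $P,M$ remain nonnegative (strictly positive when $P_0>0$, $M_0>0$), while on $\{G=0\}$ one has $G_\tau=qM\ge 0$ pointing inward. Next, since $\tilde f(P)=r_h(1-P/K_h)<0$ for $P>K_h$ and $-\epsilon M-\delta G\le 0$, the factor multiplying $P$ in $P_\tau$ is negative whenever $P>K_h$, giving the a priori bound $P(\tau)\le \bar P:=\max\{P_0,K_h\}$. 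Finally, because $G_\tau=qM+bGP\ge 0$, the memory antibody $G$ is nondecreasing; moreover $G_\tau(0)\ge qM_0>0$ when $M_0>0$ (and $G\ge G_0>0$ when $G_0>0$), so $G$ is eventually bounded below by some $G_1>0$.

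The decisive step---and the main obstacle---is to show that $G$ stays bounded, equivalently that $\int_0^\infty P\,d\tau<\infty$; monotonicity then forces $G(\tau)\to G^+$ for some finite $G^+$. I would argue via the dichotomy ``$G\to G^+<\infty$ or $G\to\infty$'' and rule out blow-up self-consistently. Suppose $G\to\infty$. Then for $\tau$ large $\delta G\ge r_h+1$, so $P_\tau\le(r_h-\delta G)P\le -P$ and $P$ decays exponentially; hence $P\to 0$, which in turn makes $aP-(q+c)$ eventually at most $-(q+c)/2$, so $M$ also decays exponentially. Writing $(\ln G)_\tau=qM/G+bP$ and using $G\ge G_1$ together with $\int_0^\infty M\,d\tau<\infty$ and $\int_0^\infty P\,d\tau<\infty$ (both finite by the exponential decay) shows $\ln G$ is bounded, contradicting $G\to\infty$. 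Therefore $G\to G^+<\infty$.

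Once $G^+<\infty$ is known, integrating $G_\tau=qM+bGP$ over $[0,\infty)$ gives $\int_0^\infty M\,d\tau<\infty$ and $\int_0^\infty GP\,d\tau<\infty$; since $G\ge G_1>0$ eventually, the latter yields $\int_0^\infty P\,d\tau<\infty$. To pass from integrability to decay I would first establish $M\to 0$: the bound $|(\ln M)_\tau|=|aP-(q+c)|\le a\bar P+q+c$ prevents $M$ from dropping quickly, so if $M$ did not tend to $0$ it would contribute a fixed positive amount to $\int_0^\infty M$ on infinitely many disjoint unit intervals, a contradiction; this argument also shows $M$ is bounded. With $M$ and $G$ bounded, $P_\tau=(\tilde f(P)-\epsilon M-\delta G)P$ is bounded, so $P$ is uniformly continuous, and Barbalat's lemma applied to $\int_0^\infty P\,d\tau<\infty$ gives $\lim_{\tau\to\infty}P(\tau)=0$. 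Finally, $G^+>0$ because $G$ is nondecreasing and becomes strictly positive (immediately, since $G_\tau(0)\ge qM_0>0$ when $M_0>0$, or already at $\tau=0$ when $G_0>0$); the limit clearly depends on $(P_0^{v\rightarrow h},M_0,G_0)$, which defines the map $z$. I expect the verification that $G$ cannot blow up---i.e. the integrability $\int_0^\infty P\,d\tau<\infty$---to be the genuinely delicate point, with positivity, boundedness, and the Barbalat step being routine.
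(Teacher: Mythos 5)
Your proof is correct, but it takes a genuinely different route from the paper's. The paper argues by comparison from the bound $P(\tau,s)\le P_0^{v\rightarrow h}e^{\int_0^\tau(r_h-\delta G(\sigma,s))d\sigma}$ and splits into cases according to whether $G$ crosses the threshold $r_h/\delta$: once $G>r_h/\delta$ the exponent is negative, $P\to 0$, then $M\to 0$, and $G$ is asserted to saturate; in the sub-threshold case the paper claims that if $P$ stays positive then $G$, being strictly increasing, must eventually exceed $r_h/\delta$, reducing to the first case. You instead exploit monotonicity of $G$ directly via the dichotomy ``$G\to G^+<\infty$ or $G\to\infty$,'' rule out blow-up self-consistently (if $G\to\infty$ then $P$ and $M$ decay exponentially, so $(\ln G)_\tau=qM/G+bP$ is integrable and $G$ is bounded, a contradiction), and then harvest the $L^1$ bounds $\int_0^\infty M\,d\tau<\infty$ and $\int_0^\infty GP\,d\tau<\infty$ by integrating the $G$-equation, finishing with the log-Lipschitz argument for $M\to 0$ and Barbalat for $P\to 0$. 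This is more than a stylistic difference: the paper's case (ii) step is actually a gap, since a strictly increasing $G$ can perfectly well converge to a limit below $r_h/\delta$ (e.g., large $\epsilon M_0$ crushing $P$ before $G$ builds up), in which case the paper's reduction fails even though the theorem's conclusion still holds; your dichotomy-plus-integrability argument covers that regime uniformly, with no threshold case analysis, and additionally proves finiteness of $G^+$ rather than asserting saturation. What the paper's approach buys is brevity and the biologically transparent threshold $r_h/\delta$; what yours buys is a complete proof (including the boundedness of $G$, which the paper leaves implicit) together with reusable integrability facts such as $\int_0^\infty P\,d\tau<\infty$. Your own assessment is accurate: the delicate point is exactly the exclusion of blow-up of $G$, and your handling of it is sound, provided you note (as you do) that the exponential-decay estimates hold from a finite time onward and that $G$ is eventually bounded below by some $G_1>0$ so that $\ln G$ and the quotient $qM/G$ are controlled.
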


\begin{proof} Let $P_0^{v \rightarrow h}(=P(0,s))>0.$ By the first equation in (\ref{PB}), we obtain 
 \begin{align}
		\label{Plim}
		P(\tau,s)\leq P_0^{v \rightarrow h}e^{\int_0^\tau{\left[r_h-\delta G(\sigma,s)\right]}d\sigma},
		\end{align}
where $s$ is fixed.  Without loss of generality, assume that  $M_0>0.$ Then $\exists \tilde{\tau}:$ $\frac{\partial}{\partial\tau}G(\tau,s) \geq 0$ for all $\tau \geq \tilde{\tau}.$ Therefore $G(\tau,s)$ increases for all $\tau \geq \tilde{\tau}$.\\
 
\noindent Case (i) Assume $G(\tilde{\tau},s)>\dfrac{r_h}{\delta}$. Then, we obtain $G(\tau,s)>\dfrac{r_h}{\delta},$ for all $\tau > \tilde{\tau}.$ Therefore as $\tau \rightarrow \infty$, the RHS of the inequality (\ref{Plim}) goes to zero. Then by comparison principle, we obtain $\lim_{\tau \rightarrow \infty}P(\tau,s)=0.$ Thus $\lim_{\tau \rightarrow \infty}M(\tau,s)=0$. Then $G(\tau,s)$ saturates as $\tau \rightarrow \infty;$ i.e $\lim_{\tau \rightarrow \infty}G(\tau,s)=\bar{G},$ for some $\bar{G}>0,$ depending on the initial condition $(P_0^{v \rightarrow h}, M_0, G_0).$ \\

\noindent Case (ii)Now suppose that $G(\tilde{\tau},s)<\dfrac{r_h}{\delta}.$ Assume that there exists $\hat{\tau}$: $P(\hat{\tau},s)=0. $ Then we obtain that $P(\tau,s)=0,$ for all $\tau> \hat{\tau}$ and $\lim_{\tau \rightarrow \infty}M(\tau,s)=0$ and $\lim_{\tau \rightarrow \infty}G(\tau,s)=\bar{G}$, for some $\bar{G}>0.$ Now assume that $P(\tau,s)>0,$ for all $\tau>0.$ Then $\frac{\partial}{\partial\tau}G(\tau,s)>0,$ for all $\tau>0.$ Hence there exists $\tau^{+}>0: \ G(\tau^{+})>\dfrac{r}{\delta}$. The rest of the proof follows the argument in case (i), completing the proof.
\end{proof}

\begin{table}[h]
\caption{Definition of the between-host model variables  and parameters}
\centering 
\begin{tabularx}{\textwidth}{>{} lX}
\toprule
Variable/Parameter  & Meaning \\ [0.5ex]
\toprule
\\
$S_v(t)$ & The number of susceptible vectors at time $t$, \\[0.5ex]
$i_v(t,s)$ &  The density  of infected vectors with infection age $s$ at time $t$, \\[0.5ex]
$S_H(t)$ & The number of susceptible hosts at time $t,$ \\[0.5ex]
$ i_H(t,\tau,s)$ & The density of the infected hosts (whom infected with a vector with infection age $s$) with host infection age $\tau$ at time $t,$\\[0.5ex]
$ R_H(t)$ & The number of recovered hosts at time $t,$ \\[0.5ex]
$ \Lambda$ & Susceptible host recruitment rate \\[0.5ex]
$ \eta$ & Susceptible vector recruitment rate  \\[0.5ex]
$\beta_v(s)$ & Infected vector transmission rate at $s$ days post infection \\[0.5 ex]
$\beta_H(\tau,s)$  & Infected host transmission rate (whom infected with a vector with infection age $s$) at $\tau$ days post infection \\[0.5 ex] 
$\nu_H(\tau,s)$ & Additional host mortality rate (whom infected with a vector with infection age $s$) due to disease at $\tau$ days post infection  \\[0.5 ex]	
$\gamma_H (\tau,s)$ & Per capita host recovery rate (whom infected with a vector with infection age $s$) at $\tau$ days post infection \\[0.5 ex]
$d$ & Host natural death rate  \\[0.5 ex]
$\mu$ & Vector natural death rate  \\[0.5 ex]
\bottomrule
\end{tabularx}
\label{table:variables} 
\end{table}

\begin{table}[h]
\caption{Definition of the linking parameters}
\label{table:variables2} 
\centering 
\begin{tabularx}{\textwidth}{>{} lX}
\toprule
Variable/Parameter  & Meaning \\ [0.5ex]
\toprule
\\
$b_0$ & the parasite cost coefficient, \\[0.5 ex]
$a_0$  & the transmission efficiency of the parasitic infection, \\[0.5 ex]
$b_1$ & the immune response cost coefficient, \\[0.5 ex]
$a_1$ &  half-saturation constant in transmission rate, \\[0.5 ex]
$c_0$ &  saturation constant in recovery rate, \\[0.5 ex]
$\epsilon_0$ &  half-saturation constant in recovery rate, \\[0.5 ex]
$d_1$ &  half-saturation constant of vector transmission rate, \\[0.5 ex]
$d_0$ & saturation constant of vector transmission rate, \\[0.5 ex]
\bottomrule
\end{tabularx}
\end{table}
\begin{figure}
\begin{tikzpicture}

\node[draw= black,rounded corners,text depth = 2cm,minimum width=15cm] (main){Vector Population};
\node[draw =black, ellipse,minimum width =3cm,minimum height=2.25cm,align=center] (sv) at ([xshift=2cm]main.west){};
\node[below right] (svtext) at ([xshift=-0.7cm,yshift=-0.2cm]sv.north) {$S_v(t)$};
\node [draw, diamond, aspect=2] (d1) at ([yshift =-0.3cm,xshift=-0.5cm]svtext.south) {};
\node [draw, diamond, aspect=2, right of =d1] (d2) {};
\node [draw, diamond, aspect=2,below of =d1,yshift=0.5cm] (d3) {};
\node [draw, diamond, aspect=2,right of =d3] (d4) {};
\node[draw =black, ellipse,minimum width =3cm,minimum height=2.25cm,align=center, right of =sv,node distance = 10cm] (Iv) {};
\node[below right] (Ivtext) at ([xshift=-0.7cm,yshift=-0.2cm]Iv.north) {$i_v(t,s)$};
\node [draw, diamond, fill =red!10, aspect=2] (dr1) at ([yshift =-0.3cm,xshift=-0.5cm]Ivtext.south) {};
\node [draw, diamond, fill =red!30,  aspect=2, right of =dr1] (dr2) {};
\node [draw, diamond, fill =red!55,  aspect=2,below of =dr1,yshift=0.5cm] (dr3) {};
\node [draw, diamond, fill =red!90,  aspect=2,right of =dr3] (dr4) {};

\node[draw= black,rounded corners,text depth = 2.5cm,minimum width=15cm, below of = main, node distance=3.5cm] (host){Host  Population};
\node[draw =black, ellipse,minimum width =3cm,minimum height=2.25cm,align=center] (sh) at ([xshift=2cm,yshift=-0.35cm]host.west){};
\node[below right] (shtext) at ([xshift=-0.7cm,yshift=-0.2cm]sh.north) {$S_H(t)$};
\node [draw, circle, minimum size=0.5cm ] (c1) at ([yshift =-0.3cm,xshift=-0.5cm]shtext.south) {};
\node [draw, circle, minimum size=0.5cm, right of =c1] (c2) {};
\node [draw, circle, minimum size=0.5cm, below of =c1,yshift=0.4cm] (c3) {};
\node [draw, circle ,minimum size=0.5cm, right of =c3] (c4) {};
\node[draw =black, ellipse,minimum width =3cm,minimum height=2.25cm,align=center, right of =sh,node distance = 5cm] (Ih) {};
\node[below right] (Ihtext) at ([xshift=-0.7cm,yshift=-0.2cm]Ih.north) {$i_H(t,\tau,s)$};
\node [draw, circle, fill=red!10, minimum size=0.5cm ] (cr1) at ([yshift =-0.3cm,xshift=-0.5cm]Ihtext.south) {};
\node [draw, circle, fill=red!30, minimum size=0.5cm, right of =cr1] (cr2) {};
\node [draw, circle, fill=red!55, minimum size=0.5cm, below of =cr1,yshift=0.4cm] (cr3) {};
\node [draw, circle, fill=red!90, minimum size=0.5cm, right of =cr3] (cr4) {};
\node[draw = black, ellipse,minimum width =3cm,minimum height=2.25cm,align=center, right of =Ih,node distance = 5cm] (rh) {};
\node[below right] (rhtext) at ([xshift=-0.7cm,yshift=-0.2cm]rh.north) {$R_H(t)$};
\node [draw, circle, fill=green!20, minimum size=0.5cm ] (cb1) at ([yshift =-0.3cm,xshift=-0.5cm]rhtext.south) {};
\node [draw, circle, fill=green!20, minimum size=0.5cm, right of =cb1] (cb2) {};
\node [draw, circle, fill=green!20, minimum size=0.5cm, below of =cb1,yshift=0.4cm] (cb3) {};
\node [draw, circle, fill=green!20, minimum size=0.5cm, right of =cb3] (cb4) {};

\node[draw,circle, minimum size =2.5cm, node distance =2.5cm, below of =cr3, xshift = -1cm](whd){};
\node[draw = red,fill=red, ellipse,star,star points=10, align=center] (pathogen) at ([yshift =-0.4cm]whd.north){};
\node[draw=none,fill=none,right of =pathogen, xshift =3.5cm] (pt) {Pathogen level in host ($P(\tau,s)$)};
\node[draw = none, fill =none, below of =pathogen, yshift =0.2cm] (antibody) {\includegraphics[scale=0.08]{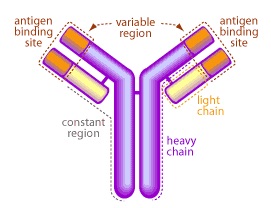}};
\node[draw=none,fill=none,right of =antibody, xshift =3.1cm](a1t)  {IgM level in host ($M(\tau,s)$)};
\node[draw = none, fill =none, below of =antibody, yshift =0.2cm] (antibody2) {\includegraphics[scale=0.05]{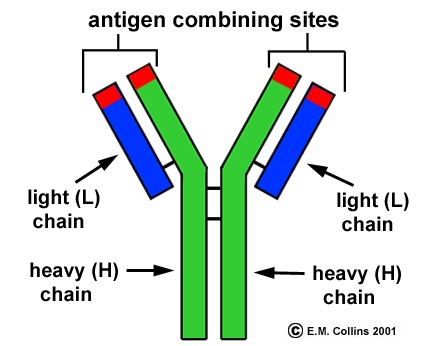}};
\node[draw=none,fill=none,right of =antibody2, xshift =3.1cm] (a2t) {IgG level in host ($G(\tau,s)$)};
\node[draw=none,fill=none,left of =whd, text width =  2.5cm, xshift =-2.5cm ]{Within Host Dynamics};

\draw[thick,->] ([xshift =0.3cm]pathogen.east) -- (pt);
\draw[thick,->] (antibody) -- (a1t);
\draw[thick,->] (antibody2) -- (a2t);
\draw (tangent cs:node=cr3,point={(whd.west)},solution=2) -- (tangent cs:node=whd,point={(cr3.west)});
\draw (tangent cs:node=cr3,point={(whd.east)},solution=1) -- (tangent cs:node=whd,point={(cr3.east)},solution=2);

\draw[thick,->] (sv.east) -- node[name=empty,yshift=0.2cm,text=black]{Transmission} (Iv.west);
\draw[->,dashed] (Ih.north) --  (empty);

\node [left of = sv,xshift=-1.8cm,yshift =-0.5cm](svout){};
\node [left of = sv,xshift=-1.8cm,yshift =0.5cm ](svin){};
\node [right of = Iv,xshift=1.8cm ](Ivout){};
\draw [thick,->, below] ([yshift =-0.5cm]sv.west)-- node[xshift=-0.2cm] {Deaths}(svout);
\draw [thick,->, above] (svin)-- node[xshift=-0.2cm] {Births}([yshift =0.5cm ]sv.west);
\draw [thick,->, below] (Iv.east)-- node[xshift=0.2cm]{Deaths}(Ivout);

\node [left of = sh,xshift=-1.8cm,yshift =-0.5cm](shout){};
\node [left of = sh,xshift=-1.8cm,yshift =0.5cm ](shin){};
\node [right of = Ih,xshift=1.8cm,yshift=1.2cm ](Ihout){};
\node [right of = rh,xshift=1.8cm](rhout){};
\draw [thick,->, below] ([yshift =-0.5cm]sh.west)-- node[xshift=-0.2cm] {Deaths}(shout);
\draw [thick,->, above] (shin)-- node[xshift=-0.2cm] {Births}([yshift =0.5cm ]sh.west);
\draw [thick,->, above] ([yshift=0.5cm]Ih.east)-- node[xshift=0.2cm]{Deaths}(Ihout);
\draw [thick,->, below] (rh.east)-- node[xshift=0.2cm]{Deaths}(rhout);
\draw[thick,->] (sh.east) -- node[name=empty2,yshift=0.2cm,text=black]{Transmission} (Ih.west);
\draw[->,dashed] ([yshift=-0.2cm]Iv.west) --  (empty2);
\draw[thick,->,below] (Ih.east) -- node{Recovery} (rh.west);
\end{tikzpicture}
\label{nestedframe}
\caption{Schematic illustration of the multi-scale model, structured by host and vector infection age.}
\end{figure}

\noindent \emph{\textbf{\underline{The structured epidemic system:}}} 
To incorporate heterogeneity among vector to host inoculum size (across the vectors with different infection age), we formulate the infected host compartment as follows:

\vspace{-.5cm}
{\small \begin{align}\label{SIRhost_n}
\dfrac{\partial i_H(t,\tau,s)}{\partial t}+\dfrac{\partial i_H(t,\tau,s)}{\partial \tau} &=  -(\nu_H(\tau,s)+\gamma_H(\tau,s)+ d)i_H(t,\tau,s), \quad i_H(t,0,s) = S_H(t)\beta_v(s)i_v(t,s), 
\end{align}}
\vspace{-.5cm}

\noindent where $i_H(t,\tau,s)$ represents \emph{the density of hosts infected at time $t-\tau$ by a vector with infection age $s.$} In other words, $i_H(t,\tau,s)$ represents the density of hosts infected at time $t-\tau,$ with an infectious mosquito with vector to host inoculum size $P_0^{v\rightarrow h}$ (which is a function of viral density within-the infectious vectors; i.e. 
$P_0^{v\rightarrow h}=h(V(s, V_0)).$)

\noindent Furthermore the rates of change in the dynamics of susceptible ($S_H(t)$ ), and recovered ($R_H(t)$ ) host population size are described as follows:

\vspace{-.5cm}
{\small \begin{align*}
\dfrac{d S_H(t)}{dt} &= f(N(t)) -S_H(t)\int_0^{\infty} {\beta_v(s)i_v(t,s)ds}-d S_H(t), \quad
\dfrac{d R_H(t)}{dt} &\hspace{-4mm}= \int _0^{\infty}{\int _0^{\infty} {\gamma_H(\tau,s)i_H(t,\tau,s) ds} d\tau}-d R_H(t),
\end{align*}}
\vspace{-.5cm}

\noindent where parameters for the host population include: $f(N(t))$ the host recruitment rate with total host population size $N_H(t)=S_H(t)+\int_0^\infty\int_0^\infty{i_H(t,\tau,s)}d\tau ds+R_H(t)$, $d$ the natural death rate of host, $\beta_v$ the transmission rate of infection from infected vectors to hosts, $\gamma_H$ host recovery rate, and $\nu_H$ host disease-induced death rate. For simplicity, we consider the recruitment rate to be constant:$f(N(t))= \Lambda.$ \\

\noindent Vectors are the only mechanism transmitting the disease to susceptible hosts. The \emph{age-since-infection structured vector model} is given by:

\vspace{-.5cm}
{\small \begin{align}
\dfrac{d S_v}{dt} &=\eta -  S_v(t) \int_0^{\infty}\int_0^{\infty}\beta_H(\tau,s)i_H(t,\tau,s) ds d\tau-\mu S_v(t), \notag \\
\dfrac{\partial i_v(t,s)}{\partial t}+\dfrac{\partial i_v(t,s)}{\partial s} &=  -\mu i_v(t,s),\quad i_v(t,0)  = S_v(t) \int_0^{\infty}\int_0^{\infty}\beta_H(\tau,s)i_H(t,\tau,s) ds d\tau, \label{SIvector_n} 
\end{align}}
\vspace{-.4cm}

\noindent where $i_v(t,s)$ represents the density of infected vectors at time $t$ with infection age $s.$ The parameters related to vector dynamics are: $\eta$ the birth/recruitment rate of vectors, $\mu$ the natural death rate of vectors, $\beta_H$ the transmission rate of infection from infected hosts to vectors. In the system \eqref{SIRhost_n}-\eqref{SIvector_n}, a portion of the susceptible hosts move to the infected compartment with a rate $\int_0^{\infty} {\beta_v(s)i_v(t,s)ds}$ through  bites by infected vectors $i_v(t,s)$ with infection age $s$. 

\noindent \emph{\textbf{\underline{Linking within-vector dynamics:}}}
The epidemiological parameters $\beta_H(\tau,s)$, $\gamma_H(\tau,s)$, and $\nu_H(\tau,s)$ are formulated similar to previous studies \cite{gulbudak2017vector} (confirmed by data \cite{handel2015crossing, fraser2014virulence}) as follows:

\vspace{-.8\baselineskip}
{\small\begin{align} \label{linkfns}
\beta_H(\tau,s)=\dfrac{a_0 P(\tau,s)^2}{a_1+ P(\tau,s)^2}, \ \nu_H(\tau,s)=b_0P(\tau,s)+b_1M(\tau,s)P(\tau,s), \ \gamma_H(\tau,s)=\dfrac{c_0G(\tau,s)}{G(\tau,s)+\epsilon_0}e^{-P(\tau,s)}. 
\end{align}}

\vspace{-.8\baselineskip}
The data \cite{handel2015crossing, fraser2014virulence} suggests that the transmission rate $\beta_H(\tau,s)$ is a Holling type II function with respect to the within-host pathogen load $P(\tau,s),$ where within-host pathogen load depends on host and vector infection age: $P(\tau,s)=P(\tau,V(s,V_0)).$ The parameters $a_0$ and $a_1$ are transmission and half saturation constants, respectively. In addition, similar to previous study \cite{gulbudak2017vector}, we formulate the recovery rate as a function of immune response $G(\tau,s)$ and inversely related to the viral load as shown in \eqref{linkfns}, where  $c_0$ is transmission constant and $\epsilon_0>0$ is proportionality constant and a small number.  This formulation of recovery translates into low pathogen load with sufficient IgG memory antibodies to prevent subsequent rise in pathogen load. Thus recovery rate is a decreasing function of pathogen ($P$) and increasing function of IgG immune response ($G$).  Furthermore, disease induced death rate $\nu$ is simply formulated as a linear function of $P$ (death due to pathogen resource use), and a function of $M$ (death due to aggressive immune response).

\noindent In addition, we formulate  the transmission rate from an infectious vector to susceptible host ($\beta_v(s)$) as follows:  $ \beta_v(s)=\dfrac{d_0 V(s)}{d_1+V(s)},$ where the parameters $d_0, d_1$ are saturation and half saturation constants, respectively. \\

\noindent Note that the within-vector viral kinetics $V(s)$ affects both the vector to host transmission $\beta_v(s)$ and vector to host inoculum size $P_0^{v\rightarrow h}$, and the latter alters the within-host virus-immune dynamics, and in turn the host to vector transmission $\beta_H(\tau,s)$.  This chain of across-scale interactions effectively introduces feedback from both scales.  In contrast to previous attempts of incorporating feedback between scales \cite{gandolfi2015epidemic}, our approach is amenable to analysis and biologically relevant for vector-borne diseases.  Although the host to vector inoculum size $V_0^{h \rightarrow v}$ (amount of pathogen in blood meal) can affect within-vector kinetics, the within-host dynamics are more sensitive to vector to host inoculum size.  Thus we reserve the significant mathematical complexity of a two-way ``infinite-dimensional'' feedback for potential future modeling work, and introduce a "friendly" formulation in next section \eqref{two_way_feedback} to vary the host to vector inoculum size $V_0^{h \rightarrow v}.$ 
 

\section{Analytical results}\label{analytical}

\subsection{Basic properties of the system} We assume that all parameters of the model are non-negative. In addition to that, we also consider the immune model initial conditions to be nonnegative: $P_0=P_0^{v\rightarrow h}, \ M_0, \ G_0\geq0$. Through this article, it satisfies that  
\begin{align*}
\beta_H(.,.),\nu_H(.,.), \gamma_H(.,.)\in L^\infty(0,\infty)^2, \quad \beta_v(.) \in L^\infty(0,\infty)
\end{align*}
We introduce 
\begin{equation}
\pi_H(\tau,s) =e^{-\D\int_0^{\tau}{\left( \nu_H(u,s)+\gamma_H(u,s)+d \right)du}}, \text{ with } \tau, s \geq 0.
\end{equation}
\noindent Integrating the second equation of the system \eqref{SIRhost_n}-\eqref{SIvector_n} along the characteristic lines, we obtain
\begin{equation}\label{charac_equ}
i_H(t,\tau,s)=
\begin{cases}
\pi_H(\tau,s)i_H(t-\tau,0,s), & \text{ if } t > \tau \geq 0,\\
\dfrac{\pi_H(\tau,s)}{\pi_H(\tau-t,s)}i_H(0,\tau-t,s), & \text{ if }  \tau>t\geq 0,\\
\end{cases}
\end{equation}
where $\pi_H(\tau,s)$ can be interpreted as the probability of host (whom is bitten with a vector at infection age $s$) still being in the infected class at host infection age $\tau.$ 

\noindent First note that by the equation \eqref{charac_equ}, we  have $\lim_{\tau \rightarrow \infty}i_H(t,\tau,s)=0, \ \forall \ t \in [0,\infty).$ Then by integrating both side of the equation \eqref{SIRhost_n} with respect to both independent variables $s$ and $\tau,$ we obtain the following equation:
\begin{equation}\label{IH}
I_H^{\prime}(t)=S_H(t)\int_0^\infty{\beta_v(s)i_v(t,s)ds}-\int_0^\infty{\int_0^\infty{\left( \nu_H(\tau,s)+\gamma_H(\tau,s)+d \right)i_H(t,\tau,s)d\tau}ds}
\end{equation}
where $I_H(t)=\int_0^\infty \int_0^\infty{i_H(t,\tau,s)d\tau}ds.$ 
Similarly, integrating the last equation of the system \eqref{SIRhost_n}-\eqref{SIvector_n} along the characteristic lines, we obtain
\begin{equation}\label{charac_equ_vector}
i_v(t,s)=
\begin{cases}
\pi_v(s)i_v(t-s,0), & \text{ if } t > s \geq 0,\\
\dfrac{\pi_v(s)}{\pi_H(s-t)}i_v(0,s-t), & \text{ if }  s>t\geq 0,\\
\end{cases}
\end{equation}
where $\pi_v(s)$ can be interpreted as the probability of vector still being in the infected class at infection age $s.$ 
Then by similar argument above, we also obtain 
\begin{equation}\label{Iv}
I_v^{\prime}(t)=S_v(t) \int_0^{\infty}\int_0^{\infty}\beta_H(\tau, s)i_H(t,\tau,s) ds d\tau - \mu I_v(t)
\end{equation}
Upon plugging in the boundary and initial conditions of \eqref{SIRhost_n} and \eqref{SIvector_n} to \eqref{IH} and \eqref{Iv}, respectively, we obtain a system of integro-differential equations.   Utilizing contraction mapping arguments, similar to methods in \cite{gulbudak2014modeling,webb}, the existence of unique solutions to the coupled system \eqref{SIRhost_n}-\eqref{SIvector_n} can be shown, which remain non-negative for any time $t.$ 
After adding all equations in the system  \eqref{SIRhost_n}-\eqref{SIvector_n}, we obtain
\begin{align*}
N_H^{\prime}\leq \Lambda -d N_H, \text{ and } N^{\prime}_v =\eta -\mu N_v, \text{ implying }
\limsup\limits_{t\rightarrow \infty} N_H(t)\leq \dfrac{\Lambda}{d} \text{ and } \lim_{t\rightarrow \infty} N_v(t)=\dfrac{\eta}{\mu}.
\end{align*}
Since solutions $\phi(t):=(S_H(t),i_H(t,\tau,s),R_H(t), S_v(t), i_v(t,s))$ remain non-negative, we have 
\begin{align*}
0\leq \limsup\limits\limits\limits_{t\rightarrow \infty} S_H(t), I_H(t), R_H(t) \leq \limsup\limits_{t\rightarrow \infty} N_H(t)\leq \dfrac{\Lambda}{d} \text{ and }0\leq \limsup\limits_{t\rightarrow \infty} S_v(t), I_v(t) \leq lim_{t\rightarrow \infty} N_v(t)=\dfrac{\eta}{\mu}.
\end{align*}
 \noindent Thus  solutions remain bounded for all time $t$ and are attracted to a bounded set as $t\rightarrow\infty$.  Furthermore, solutions to the system  \eqref{SIRhost_n}-\eqref{SIvector_n} form a $C^0$-semigroup, denoted $\Psi(t)$, in the state space $X:=\mathbb R_+\times L_+^1(0,\infty)^2 \times \mathbb R_+\times L_+^1(0,\infty)$ \cite{thieme}.  In particular, for $x\in X$, where $\phi(t)=\Psi(t)x$ denotes solution with initial condition $x$ (written above), the following holds: 
 \begin{align}\Psi(t+s)x=\Psi(t)(\Psi(s)x). \label{semi}\end{align}\\


\noindent The long term behavior of the solutions is determined in part by the equilibria that are time-independent solutions of the system \eqref{SIRhost_n}-\eqref{SIvector_n}. The system \eqref{SIRhost_n}-\eqref{SIvector_n} has a DFE $\mathcal E_0=(S_H^0, 0, 0, S^0_v,0),$ where $S^0_H=\Lambda / d, S_v^0=\eta/ \mu.$\\

\noindent Define the reproduction number as follows: 
\vspace{-.1cm}
{\small \begin{align}       
\mathcal R_0 = S_H^0S_v^0 \D\int_0^\infty  \beta_v(s)\pi_v(s) \D\int_0^\infty \beta_H(\tau,s) e^{-\D\int_0^\tau ( \nu_H(u,s)+\gamma_H(u,s)+ d)du} d\tau ds.
\end{align}}
\vspace{-.1cm}

\noindent
The \emph{basic reproduction number} $\mathcal R_0$ keeps track of the number of secondary infectious hosts produced by one infected host during its infectious time period in an entirely susceptible host population. The term $ S_v^0 \D\int_0^\infty \beta_H(\tau,s) e^{-\D\int_0^\tau ( \nu_H(u,s)+\gamma_H(u,s)+ d)du} d\tau$ is the average number of secondary infectious vectors produced one infectious host (whom is bitten by an infectious vector at infection age $s$) during its lifespan in a wholly susceptible vector population. In addition, $e^{-d\tau}$ is the probability of host having survived to infection age $\tau.$

\begin{theorem}The DFE $\mathcal E_0$ is locally asymptotically stable if $\mathcal R_0<1$ and unstable if $\mathcal R_0>1.$
\end{theorem}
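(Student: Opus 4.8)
The plan is to establish the result through the principle of linearized stability, reducing the question to the location of the roots of a characteristic equation built from $\mathcal{R}_0$. First I would linearize the system \eqref{SIRhost_n}--\eqref{SIvector_n} about $\mathcal{E}_0$, writing $S_H = S_H^0 + x$, $S_v = S_v^0 + y$, $R_H = z$ and treating $i_H, i_v$ (which vanish at the DFE) as first-order perturbations. The key structural observation is that the linearization is block triangular: the perturbation equations for the infected densities $i_H(t,\tau,s)$ and $i_v(t,s)$ close among themselves, with boundary conditions $i_H(t,0,s) = S_H^0 \beta_v(s) i_v(t,s)$ and $i_v(t,0) = S_v^0 \int_0^\infty \int_0^\infty \beta_H(\tau,s) i_H(t,\tau,s)\,d\tau\,ds$, while the susceptible and recovered perturbations are merely forced by the infected ones and relax at rates $-d$ and $-\mu$. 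Consequently the spectrum splits into the trivially stable eigenvalues $\{-d,-\mu\}$ and the roots of the characteristic equation generated by the infected subsystem.

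Next I would insert the separated ansatz $i_H(t,\tau,s) = e^{\lambda t}\phi(\tau,s)$, $i_v(t,s) = e^{\lambda t}\psi(s)$ into the linearized infected subsystem. Integrating the resulting transport equations along characteristics gives $\phi(\tau,s) = \phi(0,s)\,e^{-\lambda \tau}\pi_H(\tau,s)$ and $\psi(s) = \psi(0)\,e^{-\lambda s}\pi_v(s)$. Feeding these back through the two boundary conditions and cancelling $\psi(0)$ (nonzero for a nontrivial eigenvector) yields the characteristic equation $\mathcal{G}(\lambda)=1$, where
\[
\mathcal{G}(\lambda) = S_H^0 S_v^0 \int_0^\infty \beta_v(s)\,e^{-\lambda s}\pi_v(s) \int_0^\infty \beta_H(\tau,s)\,e^{-\lambda \tau}\pi_H(\tau,s)\,d\tau\,ds.
\]
By construction $\mathcal{G}(0) = \mathcal{R}_0$, so the threshold quantity controls the dominant root.

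For the dichotomy I would argue as follows. Restricted to real $\lambda$, the function $\mathcal{G}$ is continuous, strictly decreasing (each factor $e^{-\lambda s},e^{-\lambda \tau}$ decreases), with $\mathcal{G}(\lambda)\to 0$ as $\lambda\to+\infty$. If $\mathcal{R}_0>1$ then $\mathcal{G}(0)>1$, and the intermediate value theorem produces a unique real root $\lambda^*>0$; hence $\mathcal{E}_0$ is unstable. If $\mathcal{R}_0<1$, suppose toward a contradiction that some root $\lambda$ has $\mathrm{Re}\,\lambda\ge 0$. Using $|e^{-\lambda \tau}| = e^{-\mathrm{Re}(\lambda)\tau}\le 1$ together with the nonnegativity of $\beta_v,\beta_H,\pi_v,\pi_H$, one gets $1 = |\mathcal{G}(\lambda)| \le \mathcal{G}(\mathrm{Re}\,\lambda) \le \mathcal{G}(0) = \mathcal{R}_0 < 1$, a contradiction. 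Therefore every root satisfies $\mathrm{Re}\,\lambda<0$, and together with the decay eigenvalues $-d,-\mu$ this gives local asymptotic stability.

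The main obstacle is not these sign computations but justifying the principle of linearized stability itself for this PDE--ODE system, i.e.\ showing that the growth bound of the linearized $C_0$-semigroup is determined by the point spectrum (the roots of $\mathcal{G}(\lambda)=1$) rather than by an essential spectrum intruding into the right half-plane. I would handle this by appealing to the Volterra-integral reformulation of the characteristic system along the lines of \eqref{charac_equ}--\eqref{charac_equ_vector}, verifying that the relevant perturbation is of Volterra type so that the essential growth bound coincides with $-\min\{d,\mu\}<0$ and stability is governed entirely by $\mathcal{G}$; the requisite abstract results are standard in the age-structured setting (Webb, Thieme) already cited in the paper.
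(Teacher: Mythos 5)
Your proposal is correct and follows essentially the same route as the paper: linearize about $\mathcal{E}_0$, reduce to the characteristic equation $\mathcal{G}(\lambda)=1$ with $\mathcal{G}(0)=\mathcal{R}_0$, obtain a positive real root by monotonicity and the intermediate value theorem when $\mathcal{R}_0>1$, and rule out roots with $\mathrm{Re}\,\lambda\ge 0$ via the modulus estimate $|\mathcal{G}(\lambda)|\le\mathcal{G}(\mathrm{Re}\,\lambda)\le\mathcal{R}_0<1$ when $\mathcal{R}_0<1$. Your only addition is spelling out the block-triangular structure and the essential-spectrum justification for the principle of linearized stability, which the paper handles by citation (to Martcheva--Thieme-type results) rather than argument.
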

\begin{proof}To study the behavior of the solutions nearby an equilibrium, we first linearize the vector-host model ($\ref{SIRhost_n})-(\ref{SIvector_n}$) about the equilibrium $(S_H^+, \ \bar{i}_H(\tau,s),\ R_H^+,\ S_V^+, \ \bar{i}_v(s))$ by taking $S_H(t)= S_H^+ + x_H(t), i_H(t,\tau,s)= \bar{i}_H(\tau,s)+ y_H(t,\tau,s),  R_H(t)= R^+_H + z_H(t)$, $S_v(t)=S_v^++ x_v(t)$ and $i_v(t,s)=\bar{i}_v(s) + y_v(t,s).$ We look for eigenvalues of the linear operator - that is we look for solutions of the form $x_H(t)= \overline{x}_He^{\lambda t}, y_H(t,\tau,s)= \overline{y}_H(\tau,s)e^{\lambda t},z_H(t)= \overline{z}_He^{\lambda t}$, $x_v(t)=\overline{x}_v e^{\lambda t},$ and $y_v(t,s)= \overline{y}_v(s)e^{\lambda t}$,  where $\overline{x}_H,\ \overline{y}_H(\tau), \ \overline{z}_H, \ \overline{x}_v$ , \ $\overline{y}_v$ are arbitrary non-zero constants (a function of $\tau$ or $s$ in the case of $y_i$ for $i \in \{ H,v\} $). This process results in the following system (the bars have been omitted):

\small{\begin{equation}\label{linearized system}
\begin{cases}
\lambda x_H&\hspace{-4mm}=  - x_H\D\int_0^{\infty}{\beta_v(s)\bar{i}_v(s)ds}-S^+_H\D\int_0^{\infty}{\beta_v(s)y_v(s)ds}+\omega z_H-d x_H, \vspace{1.0mm}\\
\D\frac{d y_H(\tau,s)}{d \tau} +\lambda y_H(\tau,s)  
&\hspace{-4mm}=  
                 - \left( \nu_H(\tau,s)+\gamma_H(\tau,s)+d \right)y_H(\tau,s), \vspace{1.0mm}\\ 
y_H(0,s) &\hspace{-4mm} = x_H\beta_v(s)\bar{i}_v(s) + S^+_H \beta_v(s)y_v(s)  \vspace{1.0mm},\\ \\
\lambda z_H &\hspace{-4mm} =\D\int_0^{\infty}\D\int_0^{\infty}{\gamma_H(\tau,s)y_H(\tau,s)ds d\tau}-(d+\omega)z_H  \vspace{1.0mm},\\ 
\lambda x_v &\hspace{-4mm}=  -S_v^+ \D\int_0^\infty \D\int_0^{\infty}{\beta_H(\tau,s)y_H(\tau,s)ds d\tau} - x_v \D\int_0^\infty \D\int_0^\infty{\beta_H(\tau,s)\bar{i}_H(\tau,s)}ds d\tau -\mu x_v \vspace{1.0mm},\\
\D\frac{d y_v(s)}{ds}+\lambda y_v(s) &\hspace{-4mm}=-\mu y_v(s)   \vspace{1.0mm}.\\
y_v(0) &\hspace{-4mm} = S_v^+ \D\int_0^\infty \D\int_0^{\infty}{\beta_H(\tau,s)y_H(\tau,s)ds d\tau} + x_v \D\int_0^\infty \D\int_0^\infty{\beta_H(\tau,s)\bar{i}_H(\tau,s)}ds d\tau  \vspace{1.0mm},\\ \\ 
\end{cases}
\end{equation}}
\vskip-0.09in
\noindent 

\noindent Solutions of the system (\ref{linearized system}) give the eigenvectors and eigenvalues $\lambda$ of the differential operator.  As in \cite{martcheva2003progression},  it can be shown that knowing the distribution of the eigenvalues is sufficient to determine the stability of a given equilibrium for PDEs operators. In other words, as similar to ODEs, if all eigenvalues have negative real parts, the corresponding equilibrium is locally stable; if there is an eigenvalue with a positive real part, then the equilibrium is unstable. Because of that, we will concentrate on investigating eigenvalues.

\noindent Equilibrium of interest is DFE $\mathcal E_0=(S^0_H,0,0,S^0_v,0)$. Hence the system
(\ref{linearized system}) simplifies to the following  system:

\begin{equation}\label{linearized system at DFE}
\begin{cases}
\lambda x_H&\hspace{-4mm}= -S^0_H\D\int_0^{\infty}{\beta_v(s)y_v(s)ds}+\omega z_H-dx_H, \vspace{1.0mm}\\

\D\frac{d y_H(\tau,s)}{d \tau} +\lambda y_H(\tau,s)  
&\hspace{-4mm}=  
                 - \left( \nu_H(\tau,s)+\gamma_H(\tau,s)+d \right)y_H(\tau,s), \vspace{1.0mm}\\ 
                 
y_H(0,s) &\hspace{-4mm} =  S^0_H \beta_v(s)y_v(s)  \vspace{1.0mm},\\ \\

\lambda z_H &\hspace{-4mm} =\D\int_0^{\infty}\D\int_0^{\infty}{\gamma_H(\tau,s)y_H(\tau,s)ds d\tau}-(d+\omega)z_H  \vspace{1.0mm},\\ 

\lambda x_v &\hspace{-4mm}=  -S_v^0 \D\int_0^\infty \D\int_0^{\infty}{\beta_H(\tau,s)y_H(\tau,s)ds d\tau} -\mu x_v \vspace{1.0mm},\\

\D\frac{d y_v(s)}{ds}+\lambda y_v(s) &\hspace{-4mm}=-\mu y_v(s)   \vspace{1.0mm}.\\

y_v(0) &\hspace{-4mm} = S_v^0 \D\int_0^\infty \D\int_0^{\infty}{\beta_H(\tau,s)y_H(\tau,s)ds d\tau}   \vspace{1.0mm},\\ \\ 
\end{cases}
\end{equation}
\vskip-0.09in
\noindent 
Solving the differential equations in the linearized system \eqref{linearized system at DFE}, we obtain 
\begin{equation}\label{y_H}
y_H(\tau,s) = y_H(0,s)e^{-\lambda \tau}\pi_H(\tau,s), \textit{ where } \pi_H(\tau,s) =e^{-\int_0^{\tau}{\left( \nu_H(u,s)+\gamma_H(u,s)+d \right)du}},
\end{equation}
and
\begin{equation}\label{y_v}
y_v(s)=y_v(0)e^{-\lambda s}\pi_v(s),  \textit{ where } \pi_v(s)=e^{-\mu s}.
\end{equation}
Also by the boundary conditions in \eqref{linearized system at DFE}, we have 
\begin{equation}\label{y_0}
y_H(0,s)=S^0_H\beta_v(s)y_v(s), \ y_v(0)=S_v^0 \D\int_0^\infty \D\int_0^{\infty}{\beta_H(\tau,s)y_H(\tau,s)ds d\tau}.
\end{equation}
Substituting (\ref{y_H}) and \eqref{y_0} into (\ref{y_v}) and canceling $y_v(0)$, we get the following characteristic equation for $\lambda:$
\begin{equation}\label{chrac_eqn}
 1=S_H^0S_v^0 \D\int_0^\infty \D{ \beta_v(s)\pi_v(s) \int_0^\infty{\beta_H(\tau,s) \pi_H(\tau,s) e^{-\lambda(\tau+s)}d\tau}ds}.
\end{equation} 
The equation \eqref{chrac_eqn} is a trancedental equation; i.e. it involves trancedental functions. The above equation may have many solutions. To show stability of the DFE, we need to show that all solutions $\lambda$ of the above equation have negative real parts. If there is a solution $\lambda$ with positive real part, then the DFE is unstable. To investigate this, we denote by 
\begin{align*}
G (\lambda) =  S_H^0S_v^0 \D\int_0^\infty \D{ \beta_v(s)\pi_v(s) \int_0^\infty{\beta_H(\tau,s) \pi_H(\tau,s) e^{-\lambda(\tau+s)}d\tau}ds.}
\end{align*}
Notice that $G(0)=\mathcal R_0.$ If $\mathcal R_0>1$ and $\beta_H(\tau,s)$ is strictly positive on a positive interval, then the function $G(\lambda)$ is a decreasing function of $\lambda.$ Since $G(0)>1$ and $\lim_{\lambda \rightarrow \infty} G(\lambda)=0,$ then there exists $\lambda^+>0:\ G(\lambda^+)=1.$ Hence the DFE is unstable. \\
Now let $\mathcal R_0<1.$ Then for all $\lambda=a+ib$ with $a \geq 0,$ we have 

\begin{align*} 
\left| G(\lambda) \right| & \leq S_H^0S_v^0 \D\int_0^{\infty} \D \beta_v(s)\pi_v(s) \int_0^{\infty} \beta_H(\tau,s) \pi_H(\tau,s) \left| e^{-\lambda(\tau+s)} \right| d\tau ds  \\
&\leq S_H^0S_v^0 \D\int_0^{\infty} \D \beta_v(s)\pi_v(s) \int_0^{\infty} \beta_H(\tau,s) \pi_H(\tau,s) e^{-a(\tau+s)}d\tau ds  \leq \mathcal R_0 <1. 
\end{align*}

Then $\lambda$'s whose real part is non-negative can not satisfy the equation $G(\lambda)=1.$ Therefore the DFE is locally asymptotically stable in this case. 
\end{proof}

Under certain conditions, this result can be extended to global stability of $\mathcal E_0$ by means of Lyapunov functions.
\begin{theorem}  Suppose that 
$$\mathcal R_v:=S_H^0\int_0^{\infty}{\beta_v(s)\pi_v(s)ds}\leq 1, \quad \textit{and} \quad  
\mathcal R_H:=S_v^0 \left( \max\limits_{s \in [0,\infty)}\int{\beta_H(\tau,s)\pi_H(\tau,s)d\tau}\right)\leq 1.$$
Then the DFE $\mathcal E_0$ is globally asymptotically stable. 
\end{theorem}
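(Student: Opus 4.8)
The plan is to prove global asymptotic stability by constructing a Volterra-type Lyapunov functional on the infected compartments, exploiting the dissipativity bounds already established: on the attracting set one has $S_H(t)\le S_H^0$ and $S_v(t)\le S_v^0$ (these follow from $\limsup_t N_H(t)\le \Lambda/d$ and $\lim_t N_v(t)=\eta/\mu$). The idea is to weight each infected individual by its \emph{remaining} capacity to generate new infections further down the transmission chain, so that the recruitment (boundary) terms produced upon differentiation telescope against the loss terms, leaving an expression whose sign is controlled entirely by $\mathcal R_v$ and $\mathcal R_H$.

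Concretely, I would first introduce the integrating kernels
\[ \Phi_H(\tau,s):=\int_\tau^\infty \beta_H(\sigma,s)\frac{\pi_H(\sigma,s)}{\pi_H(\tau,s)}\,d\sigma,\qquad \Phi_v(s):=\int_s^\infty \beta_v(\sigma)\,e^{-\mu(\sigma-s)}\,d\sigma, \]
which are bounded because $\beta_H,\beta_v\in L^\infty$ and $\pi_H(\sigma,s)/\pi_H(\tau,s)\le e^{-d(\sigma-\tau)}$. Differentiating under the integral sign yields the key identities $\partial_\tau\Phi_H=(\nu_H+\gamma_H+d)\Phi_H-\beta_H$ and $\Phi_v'=\mu\Phi_v-\beta_v$, together with the boundary values $\Phi_H(0,s)=\int_0^\infty\beta_H(\sigma,s)\pi_H(\sigma,s)\,d\sigma\le \mathcal R_H/S_v^0$ and $\Phi_v(0)=\int_0^\infty\beta_v(\sigma)\pi_v(\sigma)\,d\sigma=\mathcal R_v/S_H^0$. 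I then set
\[ V(t):=S_v^0\int_0^\infty\!\!\int_0^\infty \Phi_H(\tau,s)\,i_H(t,\tau,s)\,d\tau\,ds+S_H^0\int_0^\infty \Phi_v(s)\,i_v(t,s)\,ds, \]
which is nonnegative, finite on the state space $X$, and vanishes exactly when $i_H\equiv i_v\equiv 0$.

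Next I would differentiate $V$ along solutions, integrating by parts in $\tau$ and $s$ (the boundary terms at $\infty$ drop because $\pi_H,\pi_v$ decay and $\Phi_H,\Phi_v$ are bounded) and using the renewal conditions $i_H(t,0,s)=S_H(t)\beta_v(s)i_v(t,s)$ and $i_v(t,0)=S_v(t)\int\!\int\beta_H i_H$. The identities above make the interior integrands collapse to $-\beta_H$ and $-\beta_v$, so after grouping like terms,
\[ V'(t)=\bigl(\mathcal R_v S_v(t)-S_v^0\bigr)\!\int_0^\infty\!\!\int_0^\infty\!\beta_H i_H\,ds\,d\tau+\int_0^\infty\!\beta_v(s)\,i_v(t,s)\bigl(S_v^0 S_H(t)\Phi_H(0,s)-S_H^0\bigr)ds. \]
Since $S_v(t)\le S_v^0$ and $\mathcal R_v\le 1$, the first bracket is nonpositive; since $S_H(t)\le S_H^0$ and $S_v^0\Phi_H(0,s)\le \mathcal R_H\le 1$, one has $S_v^0 S_H(t)\Phi_H(0,s)\le S_H^0\mathcal R_H\le S_H^0$, so the second bracket is nonpositive as well. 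Hence $V'(t)\le 0$ on the attracting set.

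Finally I would invoke LaSalle's invariance principle: trajectories converge to the largest invariant subset of $\{V'=0\}$. On that set the sign structure forces $\int\!\int\beta_H i_H=0$ and $\int\beta_v i_v(\cdots)=0$, which (wherever the transmission kernels are positive) gives $i_H\equiv i_v\equiv 0$; substituting back, the scalar equations $S_H'=\Lambda-dS_H$, $S_v'=\eta-\mu S_v$, $R_H'=-dR_H$ relax to $S_H^0,S_v^0,0$, so the only invariant set is $\mathcal E_0$. The main obstacle I anticipate is the rigorous application of the invariance principle in this infinite-dimensional setting: one must first establish that the semiflow $\Psi(t)$ is asymptotically smooth with precompact bounded orbits, so that nonempty compact $\omega$-limit sets exist; and one must handle the degenerate threshold cases $\mathcal R_v=1$ or $\mathcal R_H=1$, where $V'=0$ can hold without the brackets being strictly negative, using instead the susceptible dynamics on the invariant set (for instance $S_v'<0$ whenever $S_v=S_v^0$ and $i_H\not\equiv 0$) to still force the infected components to vanish.
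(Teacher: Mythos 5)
Your proposal is correct in substance and, at its core, rests on the same ingredients as the paper's proof: your kernels $\Phi_v$ and $\Phi_H$ are exactly the paper's $\alpha(s)=\int_s^\infty \beta_v(l)e^{-\mu(l-s)}dl$ and $\omega(\tau,s)=\int_\tau^\infty \beta_H(l,s)\,\frac{\pi_H(l,s)}{\pi_H(\tau,s)}\,dl$, and you use the same transport identities $\Phi_v'=\mu\Phi_v-\beta_v$, $\partial_\tau\Phi_H=(\nu_H+\gamma_H+d)\Phi_H-\beta_H$ together with the same integration-by-parts and renewal-condition bookkeeping. Where you genuinely diverge is in the shape of the functional: the paper's $W$ also carries the susceptible Volterra terms $S_H^0\,g(S_H/S_H^0)+S_v^0\,g(S_v/S_v^0)$ with $g(x)=x-\ln x-1$, whereas your $V$ is linear in the infected densities only and compensates by invoking $S_H(t)\le S_H^0$, $S_v(t)\le S_v^0$ on the attracting set. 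Each choice buys something. The paper's $g$-terms make $\dot W\le 0$ hold on the whole positive cone (no prior restriction to the attractor) and produce the negative-definite terms $-\frac{d}{S_H}(S_H^0-S_H)^2-\frac{\mu}{S_v}(S_v^0-S_v)^2$, which pin $S_H\equiv S_H^0$, $S_v\equiv S_v^0$ on the invariant set and make the LaSalle endgame, including the threshold cases $\mathcal R_v=\mathcal R_H=1$, immediate from the $S_H'$, $S_v'$ equations. Your leaner functional avoids the $g$-terms at the cost of two extra obligations that you correctly anticipate but leave open: (i) the bounds $S_H\le S_H^0$, $S_v\le S_v^0$ hold only asymptotically (as $\limsup$ inequalities), so $V'\le 0$ must be run on $\omega$-limit sets or the global attractor, which presupposes the asymptotic smoothness/precompactness you flag; and (ii) in the degenerate cases $\mathcal R_v=1$ or $\mathcal R_H=1$ your brackets can vanish without the infection integrals vanishing, so you must argue via the susceptible dynamics on the invariant set — which is, in effect, exactly the paper's closing argument, reached there more directly because its quadratic terms already force $S_H=S_H^0$, $S_v=S_v^0$. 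To be fair, the paper's own justification of the invariance principle (finite maximum transmission age, eventual constancy of $\beta_v$ and of the inoculum) is itself informal, so your proposal sits at a comparable level of rigor; completing either argument carefully requires the same compactness work.
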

\begin{proof} First define the positive functions
\begin{align*}
\alpha(s):=\int_s^{\infty}{\beta_v(l) e^{-\int_s^l{\mu dz}}dl,} \quad \omega(\tau,s):=\int_{\tau}^{\infty}{\beta_H(l,s) e^{-\int_\tau^l{\tilde{\delta}(z,s)dz}}dl},
\end{align*}
where $\tilde{\delta}(\tau,s):= \nu_H(\tau,s)+\gamma_H(\tau,s)+ d$. Then by Leibiniz rule, the derivatives of $\alpha(s)$ with respect to $s$ and $ \omega(\tau,s)$ with respect to $\tau$ satisfy 
\begin{align}\label{derivatives}
\alpha^\prime (s)=\dfrac{d \alpha(s)}{ds}=\mu \alpha(s)-\beta_v(s) \quad \textit{and} \quad 
\dfrac{\partial \omega(\tau,s)}{\partial \tau}=\tilde{\delta}(\tau,s)\omega(\tau,s)-\beta_H(\tau,s).
\end{align}
Let us consider any solution $(S_H(t),i_H(t,\tau,s), R_H(t),S_v(t), i_v(t,s))$  of the model \eqref{SIRhost_n}-\eqref{SIvector_n} with the non-negative initial data. We define a function  $W(t)$ as follows:
\begin{align*}
W=S_H^0g(\dfrac{S_H}{S_H^0})+S_v^0g(\dfrac{S_v}{S_v^0})+\dfrac{1}{\alpha(0)}\int_0^\infty{\alpha(s)i_v(t,s)ds}+\int_0^\infty{\dfrac{1}{\omega(0,s)} \int_0^\infty{\omega(\tau,s) i_H(t,\tau,s)ds}d\tau}
\end{align*}
where $g(x)= x-\ln(x)-1.$
First note that
\begin{align*}
&\dfrac{d}{dt}\left(\dfrac{1}{\alpha(0)}\int_0^\infty{\alpha(s)i_v(t,s)ds} \right)\\
&\qquad=\dfrac{1}{\alpha(0)}\left(\int_0^\infty{\alpha(s)\left( -\dfrac{\partial i_v(t,s)}{\partial s} -\mu i_v(t,s) \right)ds}\right)\\
&\qquad=\dfrac{1}{\alpha(0)}\left( -\int_0^\infty{\alpha(s)\mu i_v(t,s)ds}-\int_0^\infty{\alpha(s)\dfrac{\partial i_v(t,s)}{\partial s}ds}\right) \\
&\qquad\qquad\qquad \textit{(by integration by parts)}\\
&\qquad=\dfrac{1}{\alpha(0)}\left(
 -\int_0^\infty{\alpha(s)\mu i_v(t,s)ds}-\alpha(s) i_v(t,s)|_{s=0}^{\infty} +\int_0^\infty{\alpha^\prime(s) i_v(t,s)ds}\right)\\
 &\qquad\qquad\qquad \textit{(Note that $\alpha(s) i_v(t,s)|_{s=0}^{\infty}=- \alpha(0) i_v(t,0)$)}\\
 &\qquad=\dfrac{1}{\alpha(0)}\left(
 -\int_0^\infty{\alpha(s)\mu i_v(t,s)ds}+\alpha(0) S_v(t) \int_0^{\infty}\int_0^{\infty}\beta_H(\tau,s)i_H(t,\tau,s) ds d\tau
 \right.\\
 &\qquad\qquad\qquad \left.+\int_0^\infty{\alpha^\prime(s) i_v(t,s)ds}\right)\\
 &\qquad\qquad\qquad \textit{(Recall that $\alpha^\prime (s)=\mu \alpha(s)-\beta_v(s)$ by \eqref{derivatives})}\\
 &\qquad=\dfrac{1}{\alpha(0)}\left(
\alpha(0) S_v(t) \int_0^{\infty}\int_0^{\infty}\beta_H(\tau,s)i_H(t,\tau,s) ds d\tau-\int_0^\infty{\beta_v(s) i_v(t,s)ds}\right)\\
&\qquad=S_v(t) \int_0^{\infty}\int_0^{\infty}\beta_H(\tau,s)i_H(t,\tau,s) ds d\tau-\dfrac{1}{\alpha(0)}\int_0^\infty{\beta_v(s) i_v(t,s)ds},
\end{align*}
\begin{align*}
&\dfrac{d}{dt}\left(\int_0^\infty{\dfrac{1}{\omega(0,s)}\int_0^\infty{\omega(\tau,s)i_H(t,\tau,s)d\tau}ds} \right)\\
&\qquad=\int_0^\infty{\dfrac{1}{\omega(0,s)}\int_0^\infty{ \omega(\tau,s)\dfrac{\partial i_H(t,\tau,s)}{\partial t}  d\tau}ds} \\
&\qquad=-\int_0^\infty{\dfrac{1}{\omega(0,s)}\int_0^\infty{ \omega(\tau,s) \left( \dfrac{\partial i_H(t,\tau,s)}{\partial \tau}+\tilde{\delta}(\tau,s)i_H(t,\tau,s) \right) d\tau}ds}\\
&\qquad=-\int_0^\infty{\dfrac{1}{\omega(0,s)} \int_0^\infty{ \omega(\tau,s) \dfrac{\partial i_H(t,\tau,s)}{\partial \tau}d\tau} ds}
-\int_0^\infty{\dfrac{1}{\omega(0,s)} \int_0^\infty{ \omega(\tau,s)\left(\tilde{\delta}(\tau,s)i_H(t,\tau,s) \right) d\tau}ds}\\
&\qquad=-\int_0^\infty{\dfrac{1}{\omega(0,s)}  \left(\omega(\tau,s)i_H(t,\tau,s)|_{\tau=0}^{\infty}-\int_0^\infty{\dfrac{\partial \omega(\tau,s)}{\partial \tau} i_H(t,\tau,s)d\tau}\right)ds} \\
&\qquad\qquad\qquad -\int_0^\infty{\dfrac{1}{\omega(0,s)} \int_0^\infty{ \omega(\tau,s)\left(\tilde{\delta}(\tau,s)i_H(t,\tau,s) \right) d\tau}ds}\\
 &\qquad\qquad\qquad \textit{(Recall that $\dfrac{\partial \omega(\tau,s)}{\partial \tau}=\tilde{\delta}(\tau,s)\omega(\tau,s)-\beta_H(\tau,s)$ by \eqref{derivatives})}\\
&\qquad=-\int_0^\infty{\dfrac{1}{\omega(0,s)}  \left(-\omega(0,s)i_H(t,0,s)-\int_0^\infty{\left(\tilde{\delta}(\tau,s)\omega(\tau,s)-\beta_H(\tau,s) \right) i_H(t,\tau,s)d\tau}\right)ds} \\
&\qquad\qquad\qquad -\int_0^\infty{\dfrac{1}{\omega(0,s)} \int_0^\infty{ \omega(\tau,s)\left(\tilde{\delta}(\tau,s)i_H(t,\tau,s) \right) d\tau}ds}\\
&\qquad=\int_0^\infty{i_H(t,0,s)}ds+\int_0^\infty{\dfrac{1}{\omega(0,s)}\left( \int_0^\infty{\left(\tilde{\delta}(\tau,s)\omega(\tau,s)-\beta_H(\tau,s) \right) i_H(t,\tau,s)d\tau}\right)ds} \\
&\qquad\qquad\qquad -\int_0^\infty{\dfrac{1}{\omega(0,s)} \int_0^\infty{ \omega(\tau,s)\left(\tilde{\delta}(\tau,s)i_H(t,\tau,s) \right) d\tau}ds}\\
&\qquad=\int_0^\infty{i_H(t,0,s)}ds-\int_0^\infty{\dfrac{1}{\omega(0,s)}\left( \int_0^\infty{\beta_H(\tau,s) i_H(t,\tau,s)d\tau}\right)ds} \\
&\qquad=\int_0^\infty{S_H(t)\beta_v(s)i_v(t,s)}ds-\int_0^\infty{\dfrac{1}{\omega(0,s)}\left( \int_0^\infty{\beta_H(\tau,s) i_H(t,\tau,s)d\tau}\right)ds}.
\end{align*}
Therefore the derivative of $W$ along solutions is:
\begin{align*}
\dot{W}=&\dfrac{d}{dt}\left(S_H^0g(\dfrac{S_H}{S_H^0})+S_v^0g(\dfrac{S_v}{S_v^0})\right)+\dfrac{d}{dt}\left(\dfrac{1}{\alpha(0)}\int_0^\infty{\alpha(s)i_v(t,s)ds}\right)\\
  &\qquad\qquad+\dfrac{d}{dt}\left(\int_0^\infty{\dfrac{1}{\omega(0,s)} \int_0^\infty{\omega(\tau,s) i_H(t,\tau,s)ds}d\tau}\right)\\
  &= \left(1-\dfrac{S_H^0}{S_H} \right)(S_H^\prime)+ \left(1-\dfrac{S_v^0}{S_v} \right)(S_v^\prime)+\dfrac{d}{dt}\left(\dfrac{1}{\alpha(0)}\int_0^\infty{\alpha(s)i_v(t,s)ds}\right)\\
  &\qquad\qquad+\dfrac{d}{dt}\left(\int_0^\infty{\dfrac{1}{\omega(0,s)} \int_0^\infty{\omega(\tau,s) i_H(t,\tau,s)ds}d\tau}\right)\\ 
   &= \left(1-\dfrac{S_H^0}{S_H} \right)\left(\Lambda -S_H(t)\int_0^{\infty} {\beta_v(s)i_v(t,s)ds}-d S_H(t)\right)\\
     &\qquad\qquad+ \left(1-\dfrac{S_v^0}{S_v} \right)\left(\eta -  S_v(t) \int_0^{\infty}\int_0^{\infty}\beta_H(\tau,s)i_H(t,\tau,s) ds d\tau-\mu S_v(t)\right)\\
    &\qquad\qquad +S_v(t) \int_0^{\infty}\int_0^{\infty}\beta_H(\tau,s)i_H(t,\tau,s) ds d\tau-\dfrac{1}{\alpha(0)}\int_0^\infty{\beta_v(s) i_v(t,s)ds}\\
  &\qquad\qquad+\int_0^\infty{S_H(t)\beta_v(s)i_v(t,s)}ds-\int_0^\infty{\dfrac{1}{\omega(0,s)}\left( \int_0^\infty{\beta_H(\tau,s) i_H(t,\tau,s)d\tau}\right)ds}\\  
&=\left(\Lambda -S_H(t)\int_0^{\infty} {\beta_v(s)i_v(t,s)ds}-d S_H(t)\right)-\dfrac{S_H^0}{S_H} \left(\Lambda -S_H(t)\int_0^{\infty} {\beta_v(s)i_v(t,s)ds}-d S_H(t)\right)\\
     &\qquad\qquad+ \left(1-\dfrac{S_v^0}{S_v} \right)\left(\eta -  S_v(t) \int_0^{\infty}\int_0^{\infty}\beta_H(\tau,s)i_H(t,\tau,s) ds d\tau-\mu S_v(t)\right)\\
    &\qquad\qquad +S_v(t) \int_0^{\infty}\int_0^{\infty}\beta_H(\tau,s)i_H(t,\tau,s) ds d\tau-\dfrac{1}{\int_0^{\infty}{\beta_v(l) \pi_v(l)dl}}\int_0^\infty{\beta_v(s) i_v(t,s)ds}\\
  &\qquad\qquad+\int_0^\infty{S_H(t)\beta_v(s)i_v(t,s)}ds-\int_0^\infty{\dfrac{1}{\int_{0}^{\infty}{\beta_H(l,s) \pi_H(l,s)}dl}\left( \int_0^\infty{\beta_H(\tau,s) i_H(t,\tau,s)d\tau}\right)ds}\\  
  &=\left(\Lambda -d S_H(t)\right)-\dfrac{S_H^0}{S_H} \left(\Lambda -S_H(t)\int_0^{\infty} {\beta_v(s)i_v(t,s)ds}-d S_H(t)\right)\\
     &\qquad\qquad+ \left(\eta -\mu S_v(t)\right)-\dfrac{S_v^0}{S_v} \left(\eta -  S_v(t) \int_0^{\infty}\int_0^{\infty}\beta_H(\tau,s)i_H(t,\tau,s) ds d\tau-\mu S_v(t)\right)\\
    &\qquad\qquad -\dfrac{1}{\int_0^{\infty}{\beta_v(l) \pi_v(l)dl}}\int_0^\infty{\beta_v(s) i_v(t,s)ds}\\
  &\qquad\qquad-\int_0^\infty{\dfrac{1}{\int_{0}^{\infty}{\beta_H(l,s) \pi_H(l,s)}dl}\left( \int_0^\infty{\beta_H(\tau,s) i_H(t,\tau,s)d\tau}\right)ds}\\  
   &=-\dfrac{d}{S_H(t)}(S_H^0 -S_H(t))^2 - \dfrac{\mu}{S_v(t)}(S_v^0-S_v(t))^2\\
    &\qquad\qquad +(1-\dfrac{1}{S_H^0\int_0^{\infty}{\beta_v(l) \pi_v(l)dl}})S_H^0\int_0^\infty{\beta_v(s) i_v(t,s)ds}\\
  &\qquad\qquad+\int_0^\infty{(1-\dfrac{1}{S_v^0\int_{0}^{\infty}{\beta_H(l,s) \pi_H(l,s)}dl})\left(S_v^0 \int_0^\infty{\beta_H(\tau,s) i_H(t,\tau,s)d\tau}\right)ds}     
  \\  
   &\leq-\dfrac{d}{S_H(t)}(S_H^0 -S_H(t))^2 - \dfrac{\mu}{S_v(t)}(S_v^0-S_v(t))^2+S_H^0\left(1-\dfrac{1}{\mathcal R_v}\right)\int_0^\infty{\beta_v(s) i_v(t,s)ds}\\
  &\qquad\qquad+S_v^0\left(1-\dfrac{1}{\mathcal R_H}\right)\int_0^\infty\int_0^\infty \beta_H(\tau,s) i_H(t,\tau,s)d\tau ds    
  \end{align*}
Therefore $\mathcal R_v\leq 1$ and $\mathcal R_H\leq 1$ ensures that $\dot W\leq 0$ holds.  Note that the within-host infection eventually clears (Theorem \ref{withhost}), implying that there is a finite maximum age of host-vector transmission.  Also the within-vector viral load converges to equilibrium, implying that transmission rate $\beta_v(s)$ and inoculum $h(V(s,V_0^{h \rightarrow v}))$ are eventually constant, allowing us essentially to separate solutions into a part with variable rates dependent on $s$ for $s<s_M$ and constant rates for $s>s_M$.  These two features ensure a finite maximum age (so that all forward paths have compact closure), and allow us to apply Lyapunov-Lasalle Invariance Principle for functional differential equations \cite{hale1963stability}.  Thus solutions tend to the largest invariant set, $\mathcal A$, where $\dot W=0$.  Equality requires that $S_H(t)=S_H^0$ and $S_v(t)=S_v^0$.  If $\mathcal R_v<1$ or $\mathcal R_H< 1$, then either $i_H(t,\tau,s)\equiv 0$ or $i_v(t,s)\equiv 0$ on account of characteristic solutions \eqref{charac_equ}-\eqref{charac_equ_vector} and invariance of $\mathcal A$.  This readily implies $i_H(t,\tau,s)\equiv 0$ and $i_v(t,s)\equiv 0$.  If $\mathcal R_v= 1$ and $\mathcal R_H=1$, from $S_H'$ and $S_v'$ equations, we still obtain that $\int_0^\infty{\beta_v(s) i_v(t,s)ds}=0$ and $\int_0^\infty\int_0^\infty \beta_H(\tau,s) i_H(t,\tau,s)d\tau ds=0$.  So the same argument implies $i_H(t,\tau,s)\equiv 0$ and $i_v(t,s)\equiv 0$.  Therefore the DFE $\mathcal E_0$ is globally asymptotically stable.
\end{proof}
We remark that $\mathcal R_0\leq \mathcal R_v \mathcal R_H$ by H\"{o}lder's inequality.  The interpretation of $\mathcal R_v$ is the average secondary host transmissions due to an infected vector, and for $\mathcal R_H$, the expected secondary vector transmissions by an infected host maximized over all possible vector to host inoculum sizes, $V(s,V_0^{h \rightarrow v})$.  The result guarantees when both are less than unity, the disease eradicates.

\begin{proposition} Assume that $V(s,V_0^{h \rightarrow v})=V_0^{h \rightarrow v}.$  Then if $\mathcal R_0<1,$ the DFE $\mathcal E_0$ is globally asymptotically stable. 
\end{proposition}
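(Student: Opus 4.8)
The plan is to exploit the drastic simplification that the hypothesis $V(s,V_0^{h\rightarrow v})\equiv V_0^{h\rightarrow v}$ produces. First I would observe that a constant within-vector load makes both $\beta_v(s)=\dfrac{d_0 V_0^{h\rightarrow v}}{d_1+V_0^{h\rightarrow v}}=:\beta_v$ and the inoculum $P_0^{v\rightarrow h}=\hat c\,V_0^{h\rightarrow v}$ independent of $s$; consequently the within-host trajectories $P(\tau,s),M(\tau,s),G(\tau,s)$, and hence $\beta_H(\tau,s),\nu_H(\tau,s),\gamma_H(\tau,s),\pi_H(\tau,s)$, all lose their $s$-dependence and may be written $\beta_H(\tau),\pi_H(\tau),\tilde\delta(\tau)$, etc. Integrating \eqref{SIRhost_n} over $s$ and setting $j_H(t,\tau):=\int_0^\infty i_H(t,\tau,s)\,ds$ and $I_v(t):=\int_0^\infty i_v(t,s)\,ds$ then collapses the model to the reduced system
\begin{align*}
S_H'&=\Lambda-\beta_v S_H I_v-dS_H, & \frac{\partial j_H}{\partial t}+\frac{\partial j_H}{\partial \tau}&=-\tilde\delta(\tau)j_H,\quad j_H(t,0)=\beta_v S_H I_v,\\
S_v'&=\eta-S_v Q-\mu S_v, & I_v'&=S_v Q-\mu I_v,\qquad Q(t):=\int_0^\infty\beta_H(\tau)j_H(t,\tau)\,d\tau,
\end{align*}
with DFE $(S_H^0,0,S_v^0,0)$. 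A direct computation shows that in this case H\"older's inequality becomes an \emph{equality}, $\mathcal R_0=\mathcal R_v\mathcal R_H$, where $\mathcal R_v=S_H^0\beta_v/\mu$ and $\mathcal R_H=S_v^0\,\omega(0)$ with $\omega(0)=\int_0^\infty\beta_H(\tau)\pi_H(\tau)\,d\tau$; this identity is exactly what lets the sharper threshold $\mathcal R_0<1$ suffice.

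The key idea is to insert a \emph{balancing constant} $c>0$ into the Lyapunov functional so that the host and vector infection terms can be matched against each other, rather than each being controlled separately (which forced $\mathcal R_v,\mathcal R_H\le 1$ in the previous theorem). Reusing the weight $\omega(\tau)$ and the identity $\omega'(\tau)=\tilde\delta(\tau)\omega(\tau)-\beta_H(\tau)$ from the preceding proof, I would take
\begin{align*}
W=S_H^0 g\!\left(\frac{S_H}{S_H^0}\right)+\frac{1}{\omega(0)}\int_0^\infty\omega(\tau)j_H(t,\tau)\,d\tau+c\left[S_v^0 g\!\left(\frac{S_v}{S_v^0}\right)+I_v\right],\qquad g(x)=x-\ln x-1 .
\end{align*}
Differentiating along solutions, using $\Lambda=dS_H^0$, $\eta=\mu S_v^0$, the same integration by parts as before, and the fact that the normalization $1/\omega(0)$ makes the boundary weight equal to one and hence cancels the $S_H$-dependent cross term $\beta_v S_H I_v$, I expect to reach
\begin{align*}
\dot W=-\frac{d}{S_H}(S_H-S_H^0)^2-c\,\frac{\mu}{S_v}(S_v-S_v^0)^2+\bigl(\beta_v S_H^0-c\mu\bigr)I_v+\Bigl(cS_v^0-\tfrac{1}{\omega(0)}\Bigr)Q .
\end{align*}

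I would then choose $c$ in $[\mathcal R_v,\,1/\mathcal R_H]$, an interval that is nonempty precisely because $\mathcal R_v\mathcal R_H=\mathcal R_0\le 1$; this makes $\beta_v S_H^0-c\mu\le 0$ and $cS_v^0-1/\omega(0)\le 0$ hold simultaneously, so $\dot W\le 0$. When $\mathcal R_0<1$ the interval has nonempty interior, so I can pick $c\in(\mathcal R_v,1/\mathcal R_H)$ making both infection coefficients strictly negative; then $\dot W=0$ forces $S_H=S_H^0$, $S_v=S_v^0$, $I_v=0$ and $Q=0$. Substituting $I_v\equiv 0$ into $j_H(t,0)=\beta_v S_H I_v$ and the characteristic representation \eqref{charac_equ} yields $j_H\equiv 0$, and likewise $i_v\equiv 0$, so the largest invariant set in $\{\dot W=0\}$ is the singleton $\{\mathcal E_0\}$; invoking the LaSalle invariance principle then gives global attractivity, which combined with the local asymptotic stability already established for $\mathcal R_0<1$ yields global asymptotic stability.

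The main obstacle I anticipate is the rigorous application of LaSalle's principle in the infinite-dimensional state space $X$, which demands precompactness of forward orbits. As in the preceding theorem I would obtain this from Theorem \ref{withhost}: within-host clearance forces $\beta_H(\tau)$ to have effectively finite support (a finite maximal age of host-to-vector transmission), which, together with the boundedness shown in the ``Basic properties'' subsection, gives asymptotic smoothness of the semiflow $\Psi(t)$ and hence the needed compactness. The remaining steps---the integration by parts and the bookkeeping of cross terms---are routine repetitions of the computation in the previous proof.
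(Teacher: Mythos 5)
Your proof is correct, but it takes a genuinely different route from the paper. The paper does not use a Lyapunov functional here at all: it substitutes the characteristic-line representation of $i_H$ into the $I_v$ equation, bounds $S_H\le S_H^0$ and $S_v\le S_v^0$, and arrives at the differential inequality $I_v'\le \mu(\mathcal R_0-1)I_v+\mathcal O(e^{-dt})$, from which the Fluctuation Lemma gives $\limsup_t I_v=0$; the remaining compartments are then chased down one by one ($i_H(t,0,s)\to 0$, hence $I_H\to 0$, $R_H\to 0$, $S_H\to S_H^0$, $S_v\to S_v^0$). Your argument instead sharpens the paper's earlier global-stability theorem (the one requiring $\mathcal R_v\le 1$ \emph{and} $\mathcal R_H\le 1$) by observing that constancy of $V$ turns the H\"older bound $\mathcal R_0\le\mathcal R_v\mathcal R_H$ into the identity $\mathcal R_0=\mathcal R_v\mathcal R_H$, and then introducing the balancing constant $c\in[\mathcal R_v,\,1/\mathcal R_H]$ so that only the product needs to be below one; your computation of $\dot W$ is right (note your functional is exactly the paper's, since constant $\beta_v$ gives $\alpha(s)\equiv\beta_v/\mu$, so $\frac{1}{\alpha(0)}\int\alpha(s)i_v\,ds=I_v$), and the interval for $c$ is nonempty precisely when $\mathcal R_0\le 1$, with interior when $\mathcal R_0<1$. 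The trade-off: the paper's fluctuation-lemma argument is shorter and sidesteps the infinite-dimensional LaSalle machinery entirely (its only soft spot is the silent absorption of $S_H\le S_H^0$, $S_v\le S_v^0$, which strictly hold only asymptotically, into the $\mathcal O(e^{-dt})$ error via a time shift), whereas your Lyapunov route yields a monotone functional and simultaneous stability-plus-attractivity, but inherits the precompactness/asymptotic-smoothness burden that the paper itself only sketches in its Theorem 3 (finite effective transmission age via within-host clearance) --- you flag and handle this at the same level of rigor as the paper. Two cosmetic omissions worth patching: conclude $R_H\to 0$ from $R_H'=\int\!\!\int\gamma_H i_H - dR_H$ once $j_H\equiv 0$ (your functional, like the paper's, ignores $R_H$), and note that $Q=0$ alone does not force $j_H\equiv 0$ --- it is the invariance of the set $\{I_v=0\}$ together with the boundary condition $j_H(t,0)=\beta_v S_H I_v$ and the characteristic representation that does, which is indeed what you wrote.
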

 \begin{proof}
 \noindent Recall the solutions of the system \eqref{SIRhost_n}-\eqref{SIvector_n} obtained along the characteristic lines: 
\begin{equation}\label{SIRsol}
i_H(t,\tau,s) =
\begin{cases}
i_H(t-\tau,0,s) \pi_H(\tau,s)  &, t>\tau\\
i_H^0(\tau-t,s) \displaystyle\frac{\pi_H(\tau,s)}{\pi_H(\tau-t,s)} &, t<\tau
\end{cases}
\end{equation}
\noindent Substituting (\ref{SIRsol}) in the first equation in (\ref{SIvector_n}), and noting that $S_V(t)\le S_v^0$ and $S_H(t)\le S_H^0,$ we obtain
\begin{equation}\label{ineq1}
\begin{array}{l}
\displaystyle \frac{d I_V}{d t} = S_V(t)\D\int_0^\infty \D\int_0^t \beta_H(\tau,s) i_H(t-\tau,0,s)
\pi_H(\tau,s) \, d\tau ds \\
\qquad\qquad\qquad\qquad+ S_V(t) \D\int_0^{\infty} \D\int_t^{\infty} \beta_H(\tau,s)
i_H^0(\tau-t,s)\displaystyle\frac{\pi
_H(\tau,s)}{\pi_H(\tau-t,s)}\,d\tau ds -\mu I_V(t)\\
\qquad \le\displaystyle S_H^0 S_v^0\D\int_0^\infty \D\int_0^\infty \beta_v(s)i_v(t,s) \beta_H(\tau,s)\pi_H(\tau,s) d\tau ds +\underbrace{S_v^0 \bar{\beta}e^{-d t}
\int_0^\infty \int_0^\infty i_H^0(\tau,s) d\tau ds}_{\mathcal{O}{(e^{-dt})}} -\mu I_V(t)\\
\quad \textit{(Note that $V(s,V_0^{h \rightarrow v})=V_0^{h \rightarrow v} \Rightarrow \beta_v(s)=\tilde{\beta}_v$ for some constant $\tilde{\beta}_v=\dfrac{d_0 V_0^{h \rightarrow v}}{d_1+V_0^{h \rightarrow v}}>0$} \\
\quad \textit{ and $\beta_H(\tau,s)=\tilde{\beta}_H(\tau), \pi_H(\tau,s)=\tilde{\pi}_H(\tau) \ \forall s\in [0, \infty)$ )}\\
\qquad=S_H^0 S_v^0 \tilde{\beta}_v\D\int_0^\infty i_v(t,s)ds  \D\int_0^\infty\tilde{\beta}_H(\tau)\tilde{\pi}_H(\tau) d\tau +\underbrace{S_v^0 \bar{\beta}e^{-d t}
\int_0^\infty \int_0^\infty i_H^0(\tau,s) d\tau ds}_{\mathcal{O}{(e^{-dt})}} -\mu I_v(t)\\
\qquad=S_H^0 S_v^0 \tilde{\beta}_v I_v(t)  \D\int_0^\infty\tilde{\beta}_H(\tau)\tilde{\pi}_H(\tau) d\tau +\underbrace{S_v^0 \bar{\beta}e^{-d t}
\int_0^\infty \int_0^\infty i_H^0(\tau,s) d\tau ds}_{\mathcal{O}{(e^{-dt})}} -\mu I_V(t)\\
\qquad \le\displaystyle 
\mu \mathcal R_0 I_v(t)+\mathcal{O}{(e^{-dt})} -\mu I_v(t)=\mu I_v(t)(\mathcal R_0-1)+\mathcal{O}{(e^{-dt})} 
\\ \\
\end{array}
\end{equation}
\noindent Now, define $ \limsup\limits_t I_v :=I_v^\infty $.
Then by Fluctuation Lemma \cite{},  $\exists \{ t_n\}: I_v'(t_n)\to 0$ and $I_v(t_n) \rightarrow I_v^\infty $ as $t \rightarrow \infty,$ implying that $0\leq (\mathcal R_0-1)\mu I_v^\infty.$ 
Since $\mathcal R_0<1$, this implies that $ I_v ^\infty=0$. Furthermore, since $S_H(t)$ is bounded,  $\lim \limits_{t \to \infty} i_H(t,0,s) =0$. Similar
inequalities as the ones in (\ref{SIRsol}) imply that 
$\lim \limits_{t \to \infty} I_H(t) =0$. Therefore, by the differential equation in  the  system \eqref{SIRhost_n}-\eqref{SIvector_n},  $\lim \limits_{t \to \infty} R_H(t) =0.$ 
Since $\lim \limits_{t \to \infty} N = S^0_H$, that implies that $\lim \limits_{t \to \infty} S_H(t)
=S^0_H$. Similar reasoning applies to the vector population,
resulted in $\lim \limits_{t \to \infty} S_v(t) = S^0_v.$ This completes the proof.
\end{proof}

\begin{proposition} If $\mathcal R_0>1,$ there exists a unique positive endemic equilibrium $$\mathcal E^+ =\left(S^+_H,\ I^+_H=\D\int_0^\infty \D\int_0^\infty { \bar{i}_H(\tau,s)ds d\tau},\ R^+_H,\  S^+_v, \ I^+_v=\D\int_0^\infty {\bar{i}_v(s)ds}\right),$$ with components
\begin{align*}\label{eq_comp}
S^+_H &=\dfrac{S^0_H (\mathcal R_0-1)}{\mu S^0_v\mathcal R_0} 
\left[1/ \left(\dfrac{1}{\eta}+  \dfrac{\mathcal R_0}{S^0_H S^0_v}+\dfrac{1}{d}(1+ \D\int_0^\infty  \beta_v(s) \pi_v(s)  \D\int_0^\infty \nu_H(\tau,s)\pi_H(\tau,s) d\tau ds ))\right)\right]
+ \dfrac{S^0_H}{\mathcal R_0} , \vspace{1mm}\\ \vspace{1mm}
I^+_H&=S^+_H \bar{i}_v(0)\D\int_0^\infty \beta_v(s)\pi_v(s)\D\int_0^\infty \pi_H(\tau,s)d\tau ds, \quad R^+_H =\dfrac{S^+_H \bar{i}_v(0)}{d} \D\int_0^\infty \beta_v(s)\pi_v(s) \D\int_0^\infty \gamma_H(\tau,s) \pi_H(\tau,s)d\tau ds, \vspace{1mm}\\
S^+_v &= \frac{S^0_HS^0_v}{S^+_H \mathcal R_0},\quad I^+_v=  \bar{i}_v(0)\D\int_0^\infty \pi_v(s) ds,\vspace{1mm}\\
 \end{align*} 
and \begin{align*} 
\bar{i}_v(0)= \left( \dfrac{ S^0_H}{S^+_H}-\dfrac{S^0_H}{S^+_H\mathcal R_0}\right) / \left[ \dfrac{1}{\eta}+   \D\int_0^\infty  \beta_v(s) \pi_v(s)  \D\int_0^\infty \pi_H(\tau,s) d\tau ds  \right. \\
 \left. \qquad\qquad\qquad\qquad  +\dfrac{1}{d}(1+\D\int_0^\infty  \beta_v(s) \pi_v(s)  \D\int_0^\infty \nu_H(\tau,s)\pi_H(\tau,s) d\tau ds) \right].
\end{align*}
\end{proposition}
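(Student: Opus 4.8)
The plan is to solve the stationary version of \eqref{SIRhost_n}--\eqref{SIvector_n} (all time derivatives set to zero), collapse it to a single scalar equation, and show that this equation has exactly one admissible positive root precisely when $\mathcal R_0>1$. First I would integrate the two transport equations along characteristics, exactly as in \eqref{charac_equ} and \eqref{charac_equ_vector}, to obtain the time-independent profiles $\bar i_H(\tau,s)=\bar i_H(0,s)\pi_H(\tau,s)$ and $\bar i_v(s)=\bar i_v(0)\pi_v(s)$. The two renewal boundary conditions become $\bar i_H(0,s)=S_H^+\beta_v(s)\bar i_v(s)$ and $\bar i_v(0)=S_v^+\int_0^\infty\int_0^\infty\beta_H(\tau,s)\bar i_H(\tau,s)\,ds\,d\tau$. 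Substituting the host profile (together with $\bar i_v(s)=\bar i_v(0)\pi_v(s)$) into the second condition gives
\[
\bar i_v(0)=S_H^+S_v^+\,\bar i_v(0)\D\int_0^\infty\beta_v(s)\pi_v(s)\D\int_0^\infty\beta_H(\tau,s)\pi_H(\tau,s)\,d\tau\,ds.
\]

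Since an endemic equilibrium requires $\bar i_v(0)\neq0$, I cancel it and recognize the remaining integral as $\mathcal R_0/(S_H^0S_v^0)$, which yields the key relation $S_H^+S_v^+=S_H^0S_v^0/\mathcal R_0$ and hence the stated $S_v^+=S_H^0S_v^0/(S_H^+\mathcal R_0)$. The remaining scalar balances are the equilibrium $S_H$ and $S_v$ equations: writing $\Theta:=\int_0^\infty\beta_v(s)\pi_v(s)\,ds$ and using $\Lambda=dS_H^0$, $\eta=\mu S_v^0$, these read $dS_H^0=S_H^+\bar i_v(0)\Theta+dS_H^+$ and, via the vector renewal condition, $\eta=\bar i_v(0)+\mu S_v^+$. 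Eliminating $S_v^+$ through the key relation leaves two expressions for $\bar i_v(0)$ as functions of $S_H^+$, namely $\bar i_v(0)=d(S_H^0-S_H^+)/(S_H^+\Theta)$ from the host balance and $\bar i_v(0)=\eta(1-S_H^0/(S_H^+\mathcal R_0))$ from the vector balance. The remaining compartments $I_H^+,R_H^+,I_v^+$ are then obtained by substituting the profiles into $\int_0^\infty\!\int_0^\infty\bar i_H$, $\tfrac1d\int_0^\infty\!\int_0^\infty\gamma_H\bar i_H$, and $\int_0^\infty\bar i_v$; the $\pi_H$- and $\nu_H\pi_H$-weighted integrals appearing in the statement's closed forms arise when $S_H^+$ is re-expressed through the infected-host total $I_H^+$ and the disease-death term in the total-host balance $dN_H=\Lambda-\int_0^\infty\!\int_0^\infty\nu_H\bar i_H$.

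Equating the two expressions for $\bar i_v(0)$ defines $g(S_H^+):=d(S_H^0-S_H^+)/(S_H^+\Theta)-\eta\bigl(1-S_H^0/(S_H^+\mathcal R_0)\bigr)$, whose zeros are the admissible values of $S_H^+$. The host expression for $\bar i_v(0)$ is positive only for $S_H^+<S_H^0$, and the vector expression only for $S_H^+>S_H^0/\mathcal R_0$, so every component is positive exactly when $S_H^+\in(S_H^0/\mathcal R_0,\,S_H^0)$, an interval that is nonempty if and only if $\mathcal R_0>1$. On this interval $g$ is strictly decreasing (the first term decreases in $S_H^+$, and the negated second term decreases as well), with $g(S_H^0/\mathcal R_0)=d(\mathcal R_0-1)/\Theta>0$ and $g(S_H^0)=-\eta(1-1/\mathcal R_0)<0$; the intermediate value theorem together with strict monotonicity then gives a unique interior root, hence a unique positive equilibrium $\mathcal E^+$.

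The main obstacle I anticipate is bookkeeping rather than conceptual: one must carry the infinite-dimensional profiles cleanly through the two renewal conditions, verify that the scalar reduction is genuinely monotone so that uniqueness is not lost, and confirm that the several closed-form components are simultaneously positive precisely on $(S_H^0/\mathcal R_0,\,S_H^0)$. The only place the threshold is genuinely invoked is in showing this feasibility interval is nonempty, which is equivalent to $\mathcal R_0>1$; the rest is algebraic rearrangement to match the explicit formulas in the statement.
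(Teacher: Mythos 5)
Your proposal is correct, and it shares the paper's backbone---integrating the stationary transport equations along characteristics to get $\bar i_H(\tau,s)=S_H^+\beta_v(s)\bar i_v(0)\pi_v(s)\pi_H(\tau,s)$ and $\bar i_v(s)=\bar i_v(0)\pi_v(s)$, then canceling $\bar i_v(0)$ in the vector renewal condition to obtain the key relation $S_H^+S_v^+=S_H^0S_v^0/\mathcal R_0$---but it diverges in how $S_H^+$ and $\bar i_v(0)$ are pinned down. The paper closes the system through the total-host balance: it integrates the $\bar i_H$ equation, adds the equilibrium equations to get $N_H^+=\bigl(\Lambda-\int_0^\infty\int_0^\infty\nu_H\bar i_H\,d\tau\,ds\bigr)/d$, substitutes into $N_H^+=S_H^+ +I_H^+ +R_H^+$ together with $S_v^+=\eta/\bigl(\int\int\beta_H\bar i_H+\mu\bigr)$, and solves the resulting linear relation for $\bar i_v(0)$ explicitly, concluding positivity (hence existence) from $\mathcal R_0>1$; uniqueness is implicit in the explicit solution, and this route is what produces the stated closed forms with the $\nu_H$-weighted integrals and the $1/d$ and $1/\eta$ terms. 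You instead use the raw susceptible balances, obtaining $\bar i_v(0)=d(S_H^0-S_H^+)/(S_H^+\Theta)$ from the host equation and $\bar i_v(0)=\eta\bigl(1-S_H^0/(S_H^+\mathcal R_0)\bigr)$ from the vector equation after eliminating $S_v^+$, and then prove existence and uniqueness via a strictly decreasing scalar function $g$ with a sign change on the feasibility interval $(S_H^0/\mathcal R_0,\,S_H^0)$. Your monotonicity/IVT reduction makes uniqueness and the role of the threshold more transparent than the paper's explicit-solution route (where the feasibility interval is never exhibited), and it correctly identifies that $\mathcal R_0>1$ enters only through nonemptiness of that interval; what it does not do is actually reproduce the specific closed-form expressions for $S_H^+$ and $\bar i_v(0)$ in the statement, which you only gesture at---reconciling your two simpler expressions with the stated formulas requires exactly the total-population bookkeeping the paper performs, so a complete write-up of your route would still need that algebraic step (it is routine elimination, not a conceptual gap).
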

\begin{proof}
To find the endemic equilibria, we look for time-independent solutions with at least one non-zero infected compartment, which satisfy the system \eqref{SIRhost_n}-\eqref{SIvector_n} with the time derivatives equal to zero:
\begin{equation}\label{vectorhostequ.}
\begin{cases}
0 &\hspace{-4mm}= \Lambda - S^+_H \D\int_0^\infty \beta_v(s)\bar{i}_v(s)ds   -d S^+_H, \  \D\frac{\partial  \bar{i}_H(\tau,s)}{\partial \tau} =  
     -(\nu_H (\tau,s)+ \gamma_H(\tau,s)+ d)\bar{i}_H(\tau,s) \vspace{1.0mm},\\
\bar{i}_H(0,s) &\hspace{-4mm} = S^+_H  \beta_v(s)\bar{i}_v(s) , \  0 =\D\int_0^\infty \D\int _0^\infty {\gamma_H(\tau,s)\bar{i}_H(\tau,s)ds d\tau}- d R^+_H  \vspace{1.0mm},\\ 

0 &\hspace{-4mm}= \eta -  S^+_v \D\int_0^\infty \D\int_0^\infty {\beta_H(\tau,s) \bar{i}_H(\tau,s)ds d \tau}-\mu S^+_v , \ \D\frac{d \bar{i}_v(s)}{ds} =  
     -\mu(s) \bar{i}_v(s) \vspace{1.0mm},\\
\bar{i}_v(0) &\hspace{-4mm} = S^+_v\D\int_0^\infty \D\int_0^\infty \beta_H(\tau,s) \bar{i}_H(\tau,s) ds d\tau \vspace{1.0mm},\\ 
\end{cases}
\end{equation}
\vskip-0.09in
\noindent
An endemic equilibrium will be given by a non-trivial solution $(S_H^+, \ \bar{i}_H(\tau,s), \ R_H^+, \ S_v^+,\bar{i}_v(s)).$ We first solve the differential equations in the system \eqref{vectorhostequ.} and obtain the following implicit solutions: 
\begin{align}\label{iheq}
\bar{i}_H(\tau,s) = \bar{i}_H(0,s)\pi_H(\tau,s)= S^+_H  \beta_v(s)\bar{i}_v(s) \pi_H(\tau,s), \  \bar{i}_v(s)= \bar{i}_v(0)\pi_v(s)
\end{align}
Substituting this expression into the vector boundary condition in (\ref{vectorhostequ.}) and canceling $\bar{i}_v(0)$, we obtain $1= S^+_H S^+_v \D\int_0^\infty \beta_v(s)\pi_v(s) \D\int_0^\infty \beta_H(\tau,s)\pi_H(\tau,s)d\tau ds.$ Then the susceptible host equilibrium is:
\begin{align}
\label{bwoc1}
S^+_H= 1\ / \left(S^+_v \D\int_0^\infty \beta_v(s)\pi_v(s) \D\int_0^\infty \beta_H(\tau)\pi_H(\tau)d\tau ds \right).
\end{align}
From the third equation in (\ref{vectorhostequ.}), we can express $R^+_H$ in the terms of $\bar{i}_v(0):$
\begin{equation}\label{R_H}
R^+_H=\frac{S^+_H \bar{i}_v(0)}{d} \Gamma, \text{ where } \Gamma= \D\int_0^\infty \beta_v(s)\pi_v(s) \D\int_0^\infty \gamma_H(\tau,s) \pi_H(\tau,s)d\tau ds.
\end{equation}
Integrating the second differential equation in (\ref{vectorhostequ.}), we obtain
\begin{align}
\label{I_H'}
0= S^+_H\beta_v(s)\bar{i}_v(s) - \D\int_0^\infty{\left( \nu_H(\tau,s)+\gamma_H(\tau,s) \right)\bar{i}_H(\tau,s)d\tau ds}-d\D\int_0^\infty{\bar{i}_H(\tau,s)}d\tau,
\end{align}
where $\lim_{\tau \rightarrow \infty} \bar{i}_H(\tau,s)=0.$
Adding this equation to the first and the fourth equations in \eqref{vectorhostequ.}, we obtain the population size for the host at equilibrium as follows:
\begin{align} \label{N_H}
 N^+_H =\left(\Lambda-\D\int_0^\infty \D\int_0^\infty \nu_H(\tau,s) \bar{i}_H(\tau,s)d\tau ds\right) \ /d.
\end{align}
Next, we substitute \eqref{N_H} in the equilibrium for the total population size $N^+_H = S^+_H+I^+_H +R^+_H,$ and obtain
\begin{align}\label{sm}
\left(\Lambda-\D\int_0^\infty \D\int_0^\infty \nu_H(\tau,s) \bar{i}_H(\tau,s)d\tau ds\right)\ /d &= 1\ / \left(S^+_v \D\int_0^\infty \beta_v(s)\pi_v(s) \D\int_0^\infty \beta_H(\tau,s)\pi_H(\tau,s)d\tau ds\right) \\
&+ \D\int_0^\infty \D\int_0^\infty \bar{i}_H(0,s)\pi_H(\tau,s)d\tau ds+ \frac{S^+_H \bar{i}_v(0)}{d} \Gamma. \nonumber\\ \nonumber
\end{align}
Next we will solve the equation \eqref{sm} for $\bar{i}_v(0).$ Notice that by having an explicit expression for $\bar{i}_v(0),$ we can obtain $\bar{i}_H(\tau,s).$ From the equation \eqref{sm}, we have 
\begin{align}
\label{i_H}
S^0_H\left( 1-d \ / (\Lambda S^+_v \D\int_0^\infty \beta_v(s)\pi_v(s) \D\int_0^\infty \beta_H(\tau,s)\pi_H(\tau,s)d\tau ds ) \right) =  & \D\int_0^\infty \D\int_0^\infty \bar{i}_H(0,s)\pi_H(\tau,s)d\tau ds   \\
&\hspace{-20mm} + \left(S^+_H \bar{i}_v(0)+\D\int_0^\infty \D\int_0^\infty \nu_H(\tau,s) \bar{i}_H(\tau,s)d\tau ds \right) / d \nonumber \\ \nonumber
\end{align}
By the fourth equation in (\ref{vectorhostequ.}), we have the susceptible vector equilibrium as follows:
\begin{align}
\label{S_V}
S^+_v=\eta \ /  \left( \D\int_0^\infty \D\int_0^\infty {\beta_H(\tau,s) \bar{i}_H(\tau,s)ds d \tau}+\mu \right)
\end{align}
Substituting (\ref{S_V}) into the equation (\ref{i_H}) and rearranging it, we obtain 
\begin{align}\label{i_H111}
  \dfrac{S^0_H} {S^+_H \bar{i}_v(0)}(1-\frac{1}{\mathcal R_0}) &=  \dfrac{1}{\eta}
+  \D\int_0^\infty \beta_v(s) \pi_v(s) \D\int_0^\infty \pi_H(\tau,s) d\tau ds  \\
&\qquad\qquad + \dfrac{1}{d}(1+\D\int_0^\infty  \beta_v(s) \pi_v(s)  \D\int_0^\infty \nu_H(\tau,s)\pi_H(\tau,s) d\tau ds) .\nonumber
\end{align}
Rearranging the equation \eqref{i_H111}, we get
\begin{align*} 
\bar{i}_v(0)= \left( \dfrac{ S^0_H}{S^+_H}-\dfrac{S^0_H}{S^+_H\mathcal R_0}\right) / \left[ \dfrac{1}{\eta}+   \D\int_0^\infty  \beta_v(s) \pi_v(s)  \D\int_0^\infty \pi_H(\tau,s) d\tau ds  \right. \\
 \left. \qquad\qquad\qquad\qquad  +\dfrac{1}{d}(1+\D\int_0^\infty  \beta_v(s) \pi_v(s)  \D\int_0^\infty \nu_H(\tau,s)\pi_H(\tau,s) d\tau ds) \right]. \nonumber
\end{align*}
Therefore whenever $\mathcal R_0>1,$  $\bar{i}_v(0)$ is positive, establishing the result.
\end{proof}

\begin{proposition}Assume that within-vector viral load is constant through vector infection period; i.e. $V(s,V_0^{h \rightarrow v})=V_0^{h \rightarrow v}.$ Then if $\mathcal R_0>1,$ then $\mathcal E^+$ is locally asymptotically stable whenever it exists.
\end{proposition}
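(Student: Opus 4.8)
The plan is to prove local asymptotic stability exactly as in the proof of local stability of the DFE $\mathcal E_0$: linearize the system \eqref{SIRhost_n}--\eqref{SIvector_n} about $\mathcal E^+$, seek solutions of the linearized operator of the form $e^{\lambda t}$, reduce the resulting eigenvalue problem to a single scalar characteristic equation $\mathcal G(\lambda)=1$, and show that $\mathcal G(\lambda)\neq 1$ for every $\lambda$ with $\mathrm{Re}\,\lambda\geq 0$. As in \cite{martcheva2003progression}, stability is then governed by the location of these roots, the essential spectrum being confined to the left half-plane by the strictly positive removal rates $\mu$, $d$ and $\tilde\delta$. The constant-inoculum hypothesis $V(s,V_0^{h\to v})=V_0^{h\to v}$ is what makes this tractable: it renders $\beta_v(s)\equiv\tilde\beta_v$, $\beta_H(\tau,s)=\tilde\beta_H(\tau)$ and $\pi_H(\tau,s)=\tilde\pi_H(\tau)$ all $s$-independent, collapsing the double integrals into products.

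First I would specialize the general linearized system \eqref{linearized system} to the equilibrium $\mathcal E^+$. Solving the two transport equations along characteristics gives $y_H(\tau,s)=y_H(0,s)e^{-\lambda\tau}\pi_H(\tau,s)$ and $y_v(s)=y_v(0)e^{-\lambda s}\pi_v(s)$ as in \eqref{y_H}--\eqref{y_v}, together with the perturbed boundary data $y_H(0,s)=x_H\beta_v(s)\bar i_v(s)+S_H^+\beta_v(s)y_v(s)$ and the vector condition involving $\int_0^\infty\!\int_0^\infty\beta_H y_H$ and $\int_0^\infty\!\int_0^\infty\beta_H\bar i_H$. Writing $\Phi:=\int_0^\infty\beta_v(s)\bar i_v(s)\,ds=\tilde\beta_v I_v^+$ and $\Psi:=\int_0^\infty\!\int_0^\infty\beta_H\bar i_H\,ds\,d\tau$ for the two equilibrium forces of infection, and $B_H(\lambda):=\int_0^\infty\tilde\beta_H(\tau)e^{-\lambda\tau}\tilde\pi_H(\tau)\,d\tau$, the constant-inoculum assumption factors the $s$-integrals, in particular $\int_0^\infty\beta_v(s)e^{-\lambda s}\pi_v(s)\,ds=\tilde\beta_v/(\lambda+\mu)$. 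Using the equilibrium relations $\Lambda=S_H^+(\Phi+d)$ and $\eta=S_v^+(\Psi+\mu)$ from \eqref{vectorhostequ.}, the $x_H$ and $x_v$ equations reduce to $(\lambda+d+\Phi)x_H=-S_H^+\tilde\beta_v\,y_v(0)/(\lambda+\mu)$ and $(\lambda+\mu+\Psi)x_v=-S_v^+\int_0^\infty\!\int_0^\infty\beta_H y_H$; the recovered-class perturbation $z_H$ is slaved to $y_H$ and does not feed back (there is no loss-of-immunity term in the model), so it drops out of the eigenvalue problem. This leaves a homogeneous linear system in $x_H$, $x_v$ and $y_v(0)$.

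The decisive step is the algebraic collapse of this system to a scalar equation. Eliminating $x_v$ from the vector boundary condition yields $y_v(0)=S_v^+\tfrac{\lambda+\mu}{\lambda+\mu+\Psi}\int_0^\infty\!\int_0^\infty\beta_H y_H$, and inserting $y_H(0,s)$ and then $x_H$ into $\int_0^\infty\!\int_0^\infty\beta_H y_H=B_H(\lambda)\big[x_H\tilde\beta_v I_v^+ + S_H^+\tilde\beta_v\,y_v(0)/(\lambda+\mu)\big]$ produces the factor $1-\tilde\beta_v I_v^+/(\lambda+d+\Phi)$. Here the identity $\Phi=\tilde\beta_v I_v^+$ forces the cancellation $\lambda+d+\Phi-\tilde\beta_v I_v^+=\lambda+d$, and after dividing out $y_v(0)$ the characteristic equation becomes
\[
\mathcal G(\lambda):=\frac{\tilde\beta_v\,S_H^+\,S_v^+\,B_H(\lambda)\,(\lambda+d)}{(\lambda+d+\Phi)\,(\lambda+\mu+\Psi)}=1 .
\]
For $\lambda=a+ib$ with $a\geq 0$ I would then bound $|B_H(\lambda)|\leq B_H(0)$, $\big|\tfrac{\lambda+d}{\lambda+d+\Phi}\big|<1$ (since $\Phi>0$ and $a+d>0$), and $|\lambda+\mu+\Psi|\geq\mu+\Psi$, and use the equilibrium threshold identity $S_H^+S_v^+\tilde\beta_v B_H(0)=\mu$ (which is \eqref{bwoc1} specialized to constant inoculum) to conclude $|\mathcal G(\lambda)|<\mu/(\mu+\Psi)<1$. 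Hence no root lies in $\{\mathrm{Re}\,\lambda\geq 0\}$, and $\mathcal E^+$ is locally asymptotically stable.

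The main obstacle I anticipate is the algebraic reduction rather than the final estimate: one must track the coupling through $x_H$, $x_v$ and the two boundary conditions and verify that the equilibrium identities are precisely what produce the cancellation $\Phi=\tilde\beta_v I_v^+$ and the clean numerator $\lambda+d$. The strict positivity of $\Phi$ and $\Psi$ (guaranteed because $\mathcal E^+$ is a genuinely positive equilibrium when $\mathcal R_0>1$) is what upgrades the estimate to a strict inequality on the whole closed right half-plane, in particular ruling out $\lambda=0$. A secondary point to address is spectral determinacy of stability for the PDE semigroup; as in the global-stability argument, the finite maximal age of host--vector transmission (Theorem~\ref{withhost}) keeps the essential spectrum in the left half-plane, so that the roots of $\mathcal G$ alone determine stability.
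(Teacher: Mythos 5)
Your proposal is correct and follows essentially the same route as the paper: linearize at $\mathcal E^+$, eliminate $x_v$, $z_H$ and the transport variables to obtain a scalar characteristic equation, and your $\mathcal G(\lambda)=1$ is exactly the paper's equation \eqref{characteristic_equ_scl} rearranged, with your $\Phi=\tilde\beta_v I_v^+$ and $\Psi=T$. Your modulus estimate on the closed right half-plane, using the equilibrium identity $\tilde\beta_v S_H^+S_v^+\int_0^\infty\beta_H(\tau)\pi_H(\tau)\,d\tau=\mu$, is the same final argument as the paper's (there phrased as LHS modulus $>1$ versus RHS modulus $\leq 1$), so the two proofs coincide up to algebraic presentation.
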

\begin{proof}
Consider the linearized system (\ref{linearized system}), where the equilibrium of interest is the endemic equilibrium. We eliminate $x_H(t), \ y_H(t,\tau,s), \ x_v(t), \ y_v(t,s)$ so that an equation in $\lambda$ is obtained. The first equation in (\ref{linearized system}) is an equality in the terms of $x_H(t), \ y_v(t,s)$ and $\lambda.$ Also by the second equality and boundary condition in the system (\ref{linearized system}), we have $y_H(t,\tau,s)= y_H(0,s,t)e^{-\lambda \tau}\pi_H(\tau,s),$ where $y_H(0,t,s) =  x_H(t)\beta_v(s)\bar{i}_v(s) + S^+_H \beta_v(s)y_v(t,s).$
Substituting this equality in the last equation in (\ref{linearized system}), we obtain an equality in the terms of $x_H, \ y_v,\ x_v$ and $\lambda.$ By the third equation in the system (\ref{linearized system}), we have 
$$x_v=-S_v^+ \D\int_0^\infty \D\int_0^{\infty}{\beta_H(\tau,s)y_H(t,\tau,s)ds d\tau} / \left(\lambda+ \D\int_0^\infty{\beta_H(\tau,s)\bar{i}_H(\tau,s)d\tau ds}+\mu \right),$$
which is in terms of $y_H(\tau,s),$ and $\lambda.$ Then substituting this equality also in the last equation in (\ref{linearized system}), we obtain the last equality in the terms of $x_H, \ y_v(0,t)$ and $\lambda.$  We obtain the following  system:
\begin{equation}\label{linerized system1}
\footnotesize{\begin{cases}
0&\hspace{-150mm}=\lambda x_H + y_v(t,0)S_H^+\D\int_0^\infty{\beta_v(s) e^{-\lambda s}\pi_v(s)ds}+ x_H \D\int_0^\infty{\beta_v(s)\bar{i}_v(s)ds} +dx_H  \vspace{1.0mm},\\
0&\hspace{-150mm}=y_v(t,0) \left( \lambda +\mu + \D\int_0^\infty\D\int_0^\infty{\beta_H(\tau,s)\bar{i}_H(\tau,s)dsd\tau}-(\lambda+\mu) S_v^+ S_H^+ \D\int_0^\infty\D\int_0^\infty{\beta_H(\tau,s)\beta_v(s)\pi_H(\tau,s)\pi_v(s)e^{-\lambda (\tau+s)}dsd\tau}  \right) \\
\qquad\qquad\qquad\qquad\qquad\qquad\qquad\qquad - x_H(t)(\lambda+\mu)S_v^+\D\int_0^\infty \D\int_0^\infty  {\beta_H(\tau,s)\beta_v(s)\bar{i}_v(s)\pi_H(\tau,s)e^{-\lambda \tau} dsd\tau} \vspace{1.0mm}.\\
\end{cases}}
\end{equation}
\vskip-0.09in
\noindent\
Since $x_H(t)\neq 0$ and $y_v(0,t)\neq 0$, we set  the determinant below equal to  zero
\begin{align*} 
\left|\footnotesize{
\begin{array}{ll}
\dfrac{\lambda}{(\lambda+\mu)} + \D\int_0^\infty{\beta_v(s)\bar{i}_v(s)ds} +d
                 &\dfrac{S_H^+}{(\lambda+\mu)}\D\int_0^\infty{\beta_v(s) e^{-\lambda s}\pi_v(s)ds}\\
\\
-S_v^+\D\int_0^\infty \D\int_0^\infty  {\beta_H(\tau,s)\beta_v(s)\bar{i}_v(s)\pi_H(\tau,s)e^{-\lambda \tau} dsd\tau}& 
1 + \tilde{T}- S_v^+ S_H^+ \D\int_0^\infty\D\int_0^\infty{\beta_H(\tau,s)\beta_v(s)\pi_H(\tau,s)\pi_v(s)e^{-\lambda (\tau+s)}dsd\tau}\\
\end{array}}
\right|,
\end{align*}
where $\tilde{T}=\dfrac{T}{(\lambda +\mu)}$, with $T=\D\int_0^\infty\D\int_0^\infty{\beta_H(\tau,s)\bar{i}_H(\tau,s)dsd\tau}.$ Notice that $T \geq 0.$ 
We obtain the following characteristic equation:
\footnotesize{\begin{align}\label{characteristic_equ}
\dfrac{ \lambda+d+ \D\int_0^\infty{\beta_v(s)\bar{i}_v(s)ds} }{\lambda+d}\dfrac{\lambda+\mu+T} {\lambda+\mu}&=S_v^+ S_H^+ \D\int_0^\infty\D\int_0^\infty{\beta_H(\tau,s)\beta_v(s)\pi_H(\tau,s)\pi_v(s)e^{-\lambda (\tau+s)}dsd\tau}  \nonumber\\
\vspace{-.5cm}& +\dfrac{S_v^+ S_H^+ \bar{i}_v(0)}{\lambda+d}\left( \D\int_0^\infty{\beta_v(s)\pi_v(s)ds} \D\int_0^\infty\D\int_0^\infty{\beta_H(\tau,s)\beta_v(s)\pi_H(\tau,s)\pi_v(s)e^{-\lambda (\tau+s)}dsd\tau}\right.  \nonumber\\
\qquad\qquad\qquad\qquad\qquad\qquad\qquad&\left.-\D\int_0^\infty{\beta_v(s)e^{-\lambda s}\pi_v(s)ds} \D\int_0^\infty\D\int_0^\infty{\beta_H(\tau,s)\beta_v(s)\pi_H(\tau,s)\pi_v(s)e^{-\lambda \tau}dsd\tau} \right)  \nonumber\\  
\end{align}}

\noindent \normalsize Now assuming $V(s,V_0^{h \rightarrow v})=V_0^{h \rightarrow v},$ we have $\beta_v(s)=\tilde{\beta}_v$ for some constant $\tilde{\beta}_v=\dfrac{d_0 V_0^{h \rightarrow v}}{d_1+V_0^{h \rightarrow v}}>0.$
Then after cancelling some terms and rearranging \eqref{characteristic_equ}, we obtain the following characteristic equation:
\footnotesize{\begin{align}
\label{characteristic_equ_scl}
\frac{\left( \lambda+d+ \tilde{\beta}_vI_v^+ \right)}{(\lambda+d)}=\frac{\tilde{\beta}_v S_V^+S_H^+ \int_0^\infty{\beta_H(\tau)e^{-\lambda \tau}\pi_H(\tau)d\tau}}{\lambda+\mu+T}, 
\end{align}}
\noindent \normalsize where $\beta_H(\tau)=\beta_H(\tau,V_0^{h \rightarrow v}),\pi_H(\tau)=\pi_H(\tau,V_0^{h \rightarrow v}).$
Now by the way of contradiction, suppose that the characteristic equation \eqref{characteristic_equ_scl} can have a solution $\lambda$ with positive real part. Let $\lambda=a+bi$ and assume $a\geq 0.$ Taking the absolute value of both side of the equality above, we get
\footnotesize{\begin{align*}
\left| \frac{ \lambda+d+\tilde{\beta}_v I_v^+}{\lambda+d}\right|=\frac{\sqrt{(a+d+\tilde{\beta}_vI_v^+)^2+b^2}}{\sqrt{(a+d)^2+b^2}} >1.
\end{align*}}
\noindent \normalsize Notice that
\footnotesize{\begin{align*}
\begin{array}{l} 
\left|\D\frac{\tilde{\beta}_v S_v^+S_H^+ \int_0^\infty{\beta_H(\tau)e^{-\lambda \tau}\pi_H(\tau)d\tau}}{\lambda+\mu+T} \right| \leq \D\frac{\tilde{\beta}_v S_v^+S_H^+ \int_0^\infty{\beta_H(\tau)e^{-a \tau}\pi_H(\tau)d\tau}}{\sqrt{(a+\mu+T)^2+b^2}} \leq \D\frac{\tilde{\beta}_v S_v^+S_H^+ \int_0^\infty{\beta_H(\tau)e^{-a \tau}\pi_H(\tau)d\tau}}{\mu}\\ 
\qquad\qquad\qquad\qquad\qquad\qquad\quad\leq \tilde{\beta}_v \D\frac{S_V^+}{\mu}S_H^+ \int_0^\infty{\beta_H(\tau)\pi_H(\tau)d\tau}=1\\ 
\end{array}
\end{align*}}
\noindent \normalsize For $\lambda$ with nonnegative real part, the LHS of the inequality remains strictly greater than one, while the RHS is strictly smaller than one. Thus, such $\lambda$'s cannot satisfy the characteristic equation (\ref{characteristic_equ_scl}). Hence the endemic equilibrium is locally asymptotically stable whenever it exists.
\end{proof}
\noindent However the stability of $\mathcal E^+$ when $\mathcal R_0>1,$ for general case, is unknown. The interesting question is: \emph{Is it possible that the heterogeneity among the vector infectivity destabilizes the endemic equilibrium, leading to oscillatory dynamics via \emph{Hopf bifurcation}?} From dynamical systems view, it is not uncommon that while structured PDE models can present oscillatory dynamics, but not  a special case of it; for instance the ODE version of the system, where the model parameters are constant \cite{browne2016immune}. The characteristic equation \eqref{characteristic_equ} is too complicated for analysis to infer existence of a Hopf bifurcation, but future work will explore the possibility.

\noindent \normalsize In the presence of a disease, one also would like to understand under what conditions the disease will remain endemic for large time. We say the disease is \textit{uniformly weakly endemic} if there exists some $\tilde{\varepsilon}>0$ independent of the initial conditions such that 
\begin{align*}
\limsup\limits_{t \rightarrow \infty} I(t)>\tilde{\varepsilon}, \text{ whenever } I(0)>0,
\end{align*}
for all solutions of the model. However the disease is \textit{uniformly strongly endemic} if there exists some $\tilde{\varepsilon}>0$ independent of the initial conditions such that 
\begin{align*}
\liminf\limits_{t \rightarrow \infty} I(t)>\tilde{\varepsilon}, \text{ whenever } I(0)>0,
\end{align*}
for all solutions of the model. In the following results, we identify the conditions that result in the prevalence being bounded away from zero. 
\begin{proposition}Assume that within-vector viral load is constant through vector infection period; i.e. $V(s,V_0^{h \rightarrow v})=V_0^{h \rightarrow v}.$ Then if $\mathcal R_0>1,$ then disease is uniformly weakly endemic.
\end{proposition}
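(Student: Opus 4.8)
The plan is to argue by contradiction, exploiting the reduction to constant within-vector coefficients together with a renewal (Laplace-transform) estimate on the rate of newly infected vectors $b_v(t):=i_v(t,0)$. Since $V(s,V_0^{h\rightarrow v})=V_0^{h\rightarrow v}$, we have $\beta_v(s)=\tilde\beta_v$, $\beta_H(\tau,s)=\tilde\beta_H(\tau)$ and $\pi_H(\tau,s)=\tilde\pi_H(\tau)$ independent of $s$, while $\pi_v(s)=e^{-\mu s}$; I take $I(t)=I_v(t)$, the host prevalence being slaved to it through the boundary term $i_H(t,0,s)=S_H(t)\tilde\beta_v i_v(t,s)$. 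Negating the definition, suppose that for every threshold $\varepsilon>0$ there is a solution with $I_v(0)>0$ yet $\limsup_{t\to\infty}I_v(t)\le\varepsilon$. First I would show that small prevalence pins the susceptible pools near the DFE: from $S_H'=\Lambda-S_H\tilde\beta_v I_v-dS_H$ one gets $\liminf_{t\to\infty}S_H(t)\ge \Lambda/(d+\tilde\beta_v\varepsilon)=:\sigma_H(\varepsilon)$, and since $i_H(t,0,s)=S_H\tilde\beta_v i_v$ keeps $I_H=\mathcal O(\varepsilon)$ one similarly obtains $\liminf_{t\to\infty}S_v(t)\ge\sigma_v(\varepsilon)$, with $\sigma_H(\varepsilon)\to S_H^0$ and $\sigma_v(\varepsilon)\to S_v^0$ as $\varepsilon\downarrow 0$.

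Next I would assemble the renewal inequality. Integrating along characteristics as in \eqref{charac_equ}–\eqref{charac_equ_vector} gives $i_v(t,s)=e^{-\mu s}b_v(t-s)$ and $i_H(t,\tau,s)=\tilde\pi_H(\tau)S_H(t-\tau)\tilde\beta_v i_v(t-\tau,s)$ on the bulk region $\{t>\tau,\ t-\tau>s\}$. Substituting into $b_v(t)=S_v(t)\int_0^\infty\int_0^\infty\tilde\beta_H(\tau)i_H(t,\tau,s)\,d\tau ds$, discarding the nonnegative initial-data contributions (which decay like $\mathcal O(e^{-dt})+\mathcal O(e^{-\mu t})$), and inserting the lower bounds $\sigma_H,\sigma_v$ valid for all retarded arguments $\ge T$, one obtains for large $t$
\[
b_v(t)=S_v(t)\int_0^\infty\!\!\int_0^\infty \tilde\beta_H(\tau)\,i_H(t,\tau,s)\,d\tau\,ds\ \ge\ \int_0^{t-T}K(a)\,b_v(t-a)\,da,
\]
where, after the change of variable $a=\tau+s$, $K(a)=\sigma_v\sigma_H\tilde\beta_v\int_0^a\tilde\beta_H(\tau)\tilde\pi_H(\tau)e^{-\mu(a-\tau)}\,d\tau$. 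A direct computation gives total mass $\int_0^\infty K(a)\,da=\sigma_H(\varepsilon)\sigma_v(\varepsilon)\tilde\beta_v\mu^{-1}\int_0^\infty\tilde\beta_H(\tau)\tilde\pi_H(\tau)\,d\tau=:\widehat{\mathcal R}(\varepsilon)$, and $\widehat{\mathcal R}(\varepsilon)\to\mathcal R_0>1$ as $\varepsilon\downarrow 0$, so $\varepsilon$ may be fixed once and for all so small that $\widehat{\mathcal R}(\varepsilon)>1$.

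Finally I would transform. Because $\widehat K(\lambda):=\int_0^\infty e^{-\lambda a}K(a)\,da$ is continuous and strictly decreasing with $\widehat K(0)=\widehat{\mathcal R}(\varepsilon)>1$ and $\widehat K(\lambda)\to 0$, there is $\lambda_0>0$ with $\widehat K(\lambda_0)>1$. Multiplying the renewal inequality by $e^{-\lambda_0 t}$, integrating over $[T,\infty)$ and interchanging the order of integration yields $J\ge\widehat K(\lambda_0)J$ for $J:=\int_T^\infty e^{-\lambda_0 t}b_v(t)\,dt$, i.e. $(1-\widehat K(\lambda_0))J\ge 0$ with $1-\widehat K(\lambda_0)<0$, forcing $J\le 0$. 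On the other hand, positivity of the semiflow gives $b_v(t)>0$ for all large $t$: since $I_v(0)>0$ and $S_H,S_v$ stay strictly positive on finite intervals, infected vectors seed infected hosts through $\tilde\beta_v>0$, and these in turn seed new vector infections through $\tilde\beta_H(\tau)=a_0P(\tau,s)^2/(a_1+P(\tau,s)^2)>0$ (recall $P(\tau,s)>0$ for all finite $\tau$ by Theorem \ref{withhost}). Hence $J>0$, contradicting $J\le 0$. Thus $\limsup_{t\to\infty}I_v(t)>\varepsilon$ for every solution with positive initial infection, which is uniform weak endemicity with $\tilde\varepsilon=\varepsilon$. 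The main obstacle I anticipate is making the renewal reduction fully rigorous — cleanly separating the exponentially decaying initial-data terms from the convolution bulk, confirming that $\sigma_H,\sigma_v$ genuinely apply to the retarded arguments inside $K$, and verifying the positivity/irreducibility claim $b_v(t)>0$ that converts $J=0$ into a contradiction. This last point is exactly what separates weak persistence from mere instability of the DFE and is where the argument is most delicate.
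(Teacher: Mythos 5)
Your proposal is correct, and it shares the paper's overall skeleton --- argue by contradiction, use the smallness of the prevalence together with the fluctuation/semigroup argument to pin $S_H$ and $S_v$ below-near their DFE values $\sigma_H(\varepsilon)\to S_H^0$, $\sigma_v(\varepsilon)\to S_v^0$, and exploit that the perturbed reproduction number tends to $\mathcal R_0>1$ --- but it closes the argument by a genuinely different route. The paper substitutes the characteristic solutions into the $I_v$ equation and asserts the \emph{pointwise} differential inequality $I_v'(t)\ge \mu I_v(t)\left(\mathcal R_0(\varepsilon_3)-1\right)$ as in \eqref{comp_prin}--\eqref{Ivcomp}, then applies the comparison principle to force $I_v\to\infty$, contradicting boundedness of solutions. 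You instead retain the convolution structure, derive the renewal inequality $b_v(t)\ge\int_0^{t-T}K(a)\,b_v(t-a)\,da$ for $b_v(t)=i_v(t,0)$ with kernel mass $\widehat{\mathcal R}(\varepsilon)\to\mathcal R_0$, and reach a sign contradiction via the Laplace transform ($J\ge\widehat K(\lambda_0)J$ with $\widehat K(\lambda_0)>1$ and $J>0$). What each buys: the paper's route is shorter, but it quietly replaces the retarded quantity $\int_0^t\tilde\beta_H(\tau)\tilde\pi_H(\tau)I_v(t-\tau)\,d\tau$ arising from $i_v(t-\tau,s)$ by $I_v(t)\int_0^\infty\tilde\beta_H(\tau)\tilde\pi_H(\tau)\,d\tau$ in \eqref{comp_prin}, a step that is not justified as written and that, made rigorous, requires essentially the renewal/Laplace machinery you supply (or a Thieme-type lemma on delayed integral inequalities); in this sense your proof is the careful version of the paper's sketch. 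You also negate the definition correctly (``for every $\varepsilon>0$ some solution has $\limsup_{t\to\infty}I_v(t)\le\varepsilon$''), whereas the paper assumes the strictly stronger hypothesis $\lim_{t\to\infty}I_H(t)=0$. The delicate points you flag are all fine: the discarded initial-data terms are nonnegative so dropping them preserves the lower bound, $J<\infty$ because $b_v$ is bounded and $\lambda_0>0$, and $b_v(t)>0$ for large $t$ follows since $P(\tau,s)>0$ for all finite $\tau$ (so $\beta_H(\tau,s)>0$) and the boundary couplings $i_H(t,0,s)=S_H(t)\tilde\beta_v i_v(t,s)$ propagate positivity; the only cosmetic discrepancy is that you state weak endemicity for $I_v$ while the paper phrases its contradiction through $I_H$, but the two are equivalent here via the boundary terms.
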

\begin{proof}By the way of contradiction, assume that there exists a solution $I_H(t),$ with $I_H(0)>0,$ such that 
$\lim\limits_{t\rightarrow \infty} I_H(t)=0.$
Let $\varepsilon_1>0$ be given. Then $\exists t_0>0: I_H(t) \varepsilon_1, \ \forall t \geq t_0.$  Consequently, the semigroup properties of a solution \eqref{semi} imply that without loss of generality we can assume the above inequality valid for all $t\geq 0.$\\
Next note that
\begin{equation}
\int_0^\infty \int_0^\infty \beta_H (\tau,s)i_H(t,\tau,s)d\tau ds \leq K,
\end{equation}
for some positive real number $K,$ since $\beta_H(\tau)\leq a_0 (:=\hat{\beta}_H)$ and $\limsup\limits_{t\rightarrow \infty} N_H(t) \leq \dfrac{\Lambda}{d}.$\\
Then by the third equation of the system \eqref{SIRhost_n},
\begin{equation}
R^{\prime}_H \leq \hat{\gamma}\varepsilon_1-d R_H.
\end{equation}
Then $\limsup\limits_{t\rightarrow \infty} R_H(t) \leq \dfrac{\hat{\gamma}\varepsilon_1}{d}.$ Hence by the inequality above, we have $R_H(t)\leq \dfrac{\hat{\gamma}\varepsilon_1}{d}+\delta_0,$ for given $\delta_0>0$ and $\forall \ t\geq 0,$ by semigroup property. By similar argument above, we also obtain $\limsup\limits_{t\rightarrow \infty} I_v(t)\leq S^0_v\dfrac{\hat{\beta}_H\varepsilon_1}{\mu}.$ Then $I_H(t), R_H(t),I_v(t) \leq \varepsilon_2,$ where $\varepsilon_2=\dfrac{\hat{\gamma}\varepsilon_1}{d}+\delta_0+S^0_v\dfrac{\hat{\beta}_H\varepsilon_1}{\mu}+\varepsilon_1.$  Next by the first and fourth equation in the system \eqref{SIRhost_n}-\eqref{SIvector_n}, we obtain
\begin{align*}
S^{\prime}_H & = \Lambda -S_H(t)\int_0^{\infty} {\beta_v(s)i_v(t,s)ds}-d S_H(t)  \geq  \Lambda-S_H(t)\hat{\beta}_v \varepsilon_2-dS_H(t).\\
    S^{\prime}_v&=\eta -  S_v(t) \int_0^{\infty}\int_0^{\infty}\beta_H(\tau,s)i_H(t,\tau,s) ds d\tau-\mu S_v(t)  \geq \eta -S_v(t)(\hat{\beta}_H\varepsilon_1+\mu)
\end{align*}
Then 
\begin{align*}
\liminf \limits_{t} S_H(t) \geq \dfrac{\Lambda}{\hat{\beta}_v \varepsilon_2+d}:=S_H^0(\varepsilon_2), \ 
\liminf \limits_{t} S_v(t) \geq \dfrac{\eta}{\hat{\beta}_H \varepsilon_1+\mu}=S_v^0(\varepsilon_1).
\end{align*} 
Since the functions defined above are continuous and $\lim\limits_{\varepsilon_j\rightarrow 0}S^0_i(\varepsilon_j)=S^0_i$, $i \in \{v,H\}, j \in \{1,2\}$, it follows that for given $\varepsilon_3>0, \exists t_1: S_i(t)\geq S^0_i- \varepsilon_3$ for $i \in \{v,H\},$ and $\forall t\geq t_1.$ Again by semigroup property, w.l.o.g. the inequality above is valid for all $t>0.$ 
Now assuming $V(s,V_0^{h \rightarrow v})=V_0^{h \rightarrow v},$ we have $\beta_v(s)=\tilde{\beta}_v$ for some constant $\tilde{\beta}_v=d_0 V_0^{h \rightarrow v}/ (d_1+V_0^{h \rightarrow v})>0.$

\noindent Next note that by the equation \eqref{Iv}, we have
\begin{align}\label{comp_prin}
I^{\prime}_v(t)&\geq (S_v^0-\varepsilon_3)(S_H^0-\varepsilon_3)\tilde{\beta}_v I_v(t)\int_0^\infty{\beta_H(\tau)\pi_H(\tau)d\tau}-\mu I_v(t),
\end{align}
where $\beta_H(\tau)=\beta_H(\tau,V_0^{h \rightarrow v}),\pi_H(\tau)=\pi_H(\tau,V_0^{h \rightarrow v}).$
Thus we can write \eqref{comp_prin} in the following form:
\begin{align}
I^{\prime}_v(t)&\geq \mu I_v(t)\left( \mathcal R_0(\varepsilon_3)-1\right), \label{Ivcomp}
\end{align}
where $\mathcal R_0(\varepsilon_3)\searrow \mathcal R_0 $ as $\varepsilon_3\searrow 0$.  Therefore, since $\mathcal R_0>1$, for sufficiently small $\epsilon_3$, and by comparison principle applied to \eqref{Ivcomp},
$I_H(t), I_v(t) $ goes to infinity, as $t \rightarrow \infty.$ This is a contradiction to boundedness of solutions.
\end{proof}

\noindent Here we conjecture that, for general case,  $\mathcal R_0>1$ implies \emph{uniform persistence} of disease, and preserve it as future work. We turn our attention to crucial extensions of this modeling framework, and epidemiological implications.\\
\begin{SCfigure}\label{Within_vectorfitting}
 \centering
 \caption{The within-mosquito virus dynamics with fitted parameters at days post vector infection $s \in [0, 32 ]$ and the pathogen load data in midgut of mosquito, presented by blue dots. The fitted parameter values are: $r_v= 0.3258,\ K_v= 1.2303\times 10^3, \ U_v= 0.9933.$} 
\includegraphics[width=0.41\textwidth]{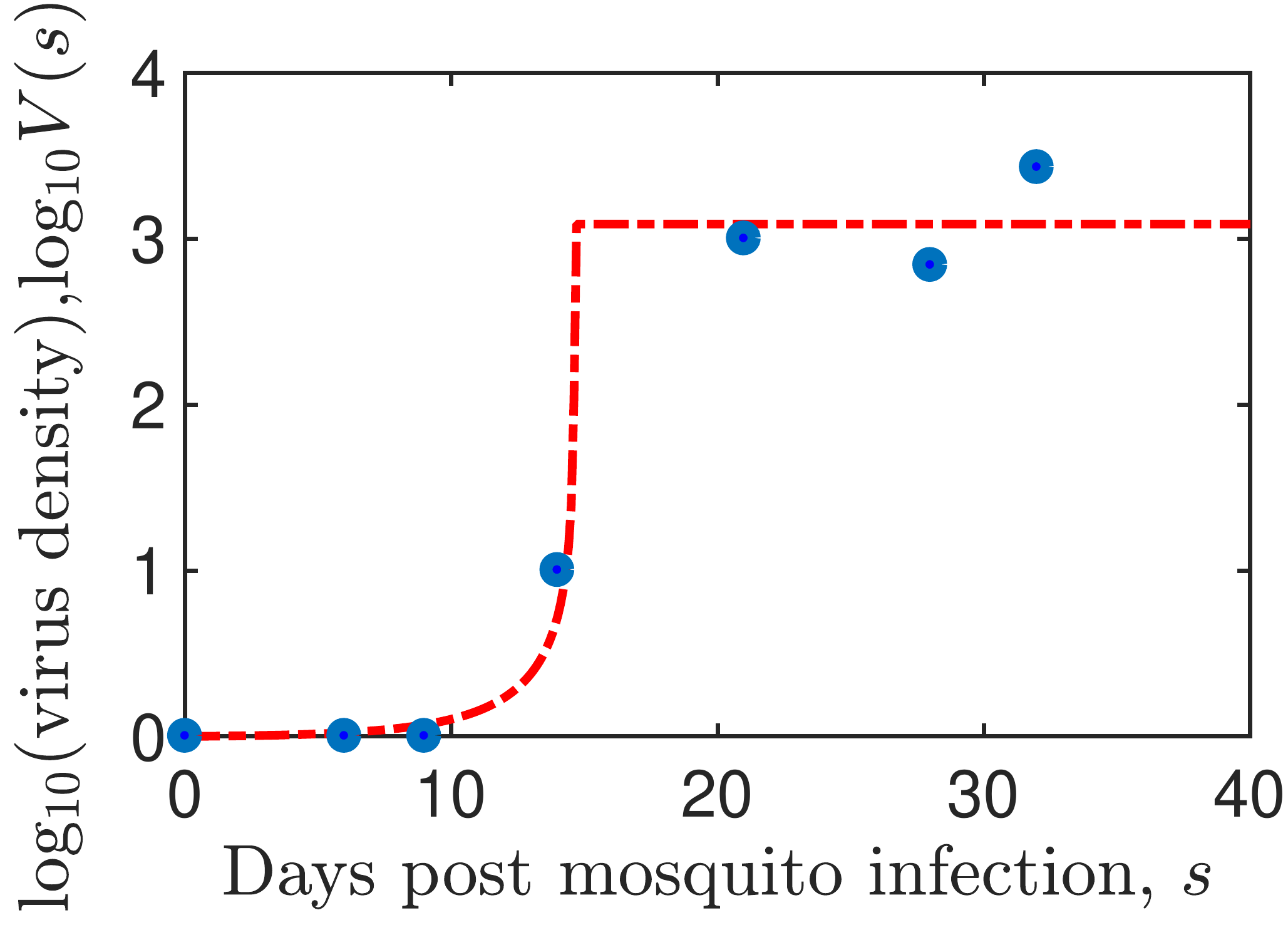}
\end{SCfigure} 

\begin{figure}[t!]\label{Within_vector_pathogen_vs_R0}
\begin{center}
(a)\includegraphics[width=0.44\textwidth]{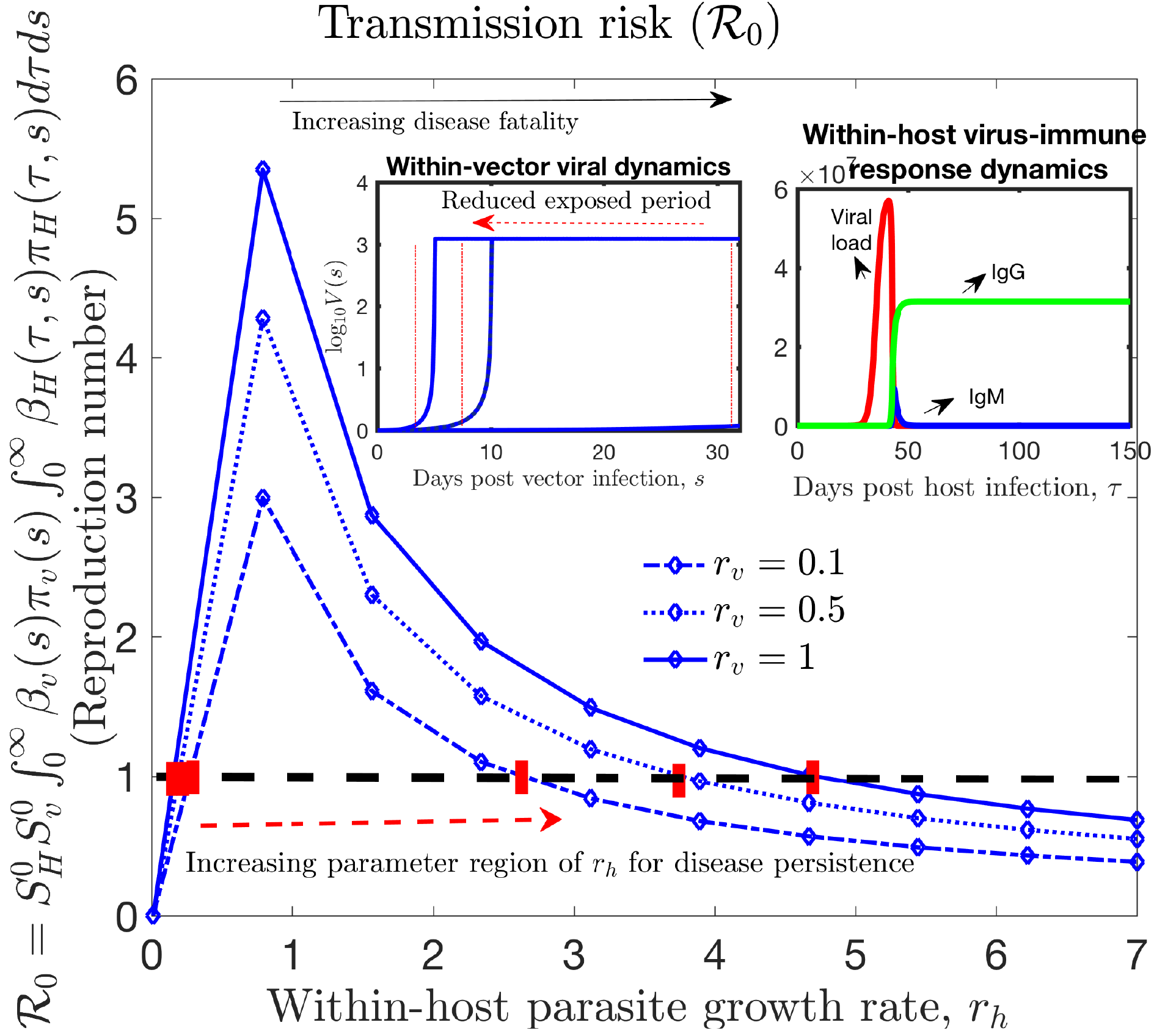} 
\mbox{\hspace{0.01\textwidth}}
b)\includegraphics[width=0.45\textwidth]{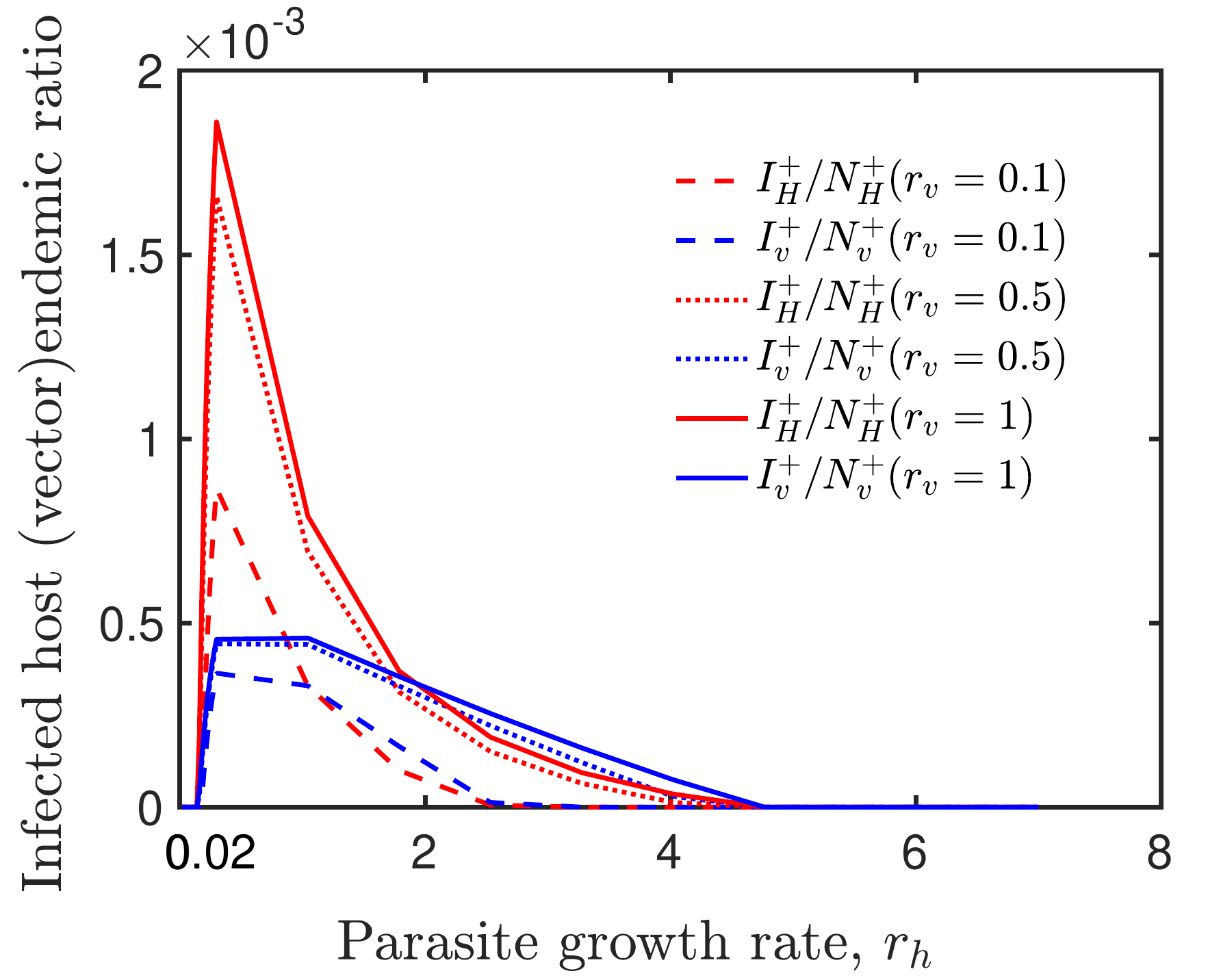}
\caption{ a)Basic reproduction number, $\mathcal R_0,$ versus within-host parasite growth rate, $r_h,$ with respect to distinct values of within-vector parasite growth rate, $r_v$. An increase in within-vector viral growth rate causes an increase in the parameter range of within-host viral growth rate, leading disease persistence (increasing the initial transmission risk $\mathcal R_0$ above one).  The inserted figures show the corresponding within-vector (left) and within-host (right) dynamics for the given parameter value of $r_v.$  b) The corresponding endemic ratio for steady state disease abundance $ \bar{\mathcal I}_H$ (host) and $\bar{\mathcal I}_v$ (vector population) w.r.t. varying values of  $r_v$ and $r_h.$}
\end{center}
\end{figure}

 
\begin{figure}[t!]\label{fig_R0}
\begin{center}
\includegraphics[width=0.9\textwidth]{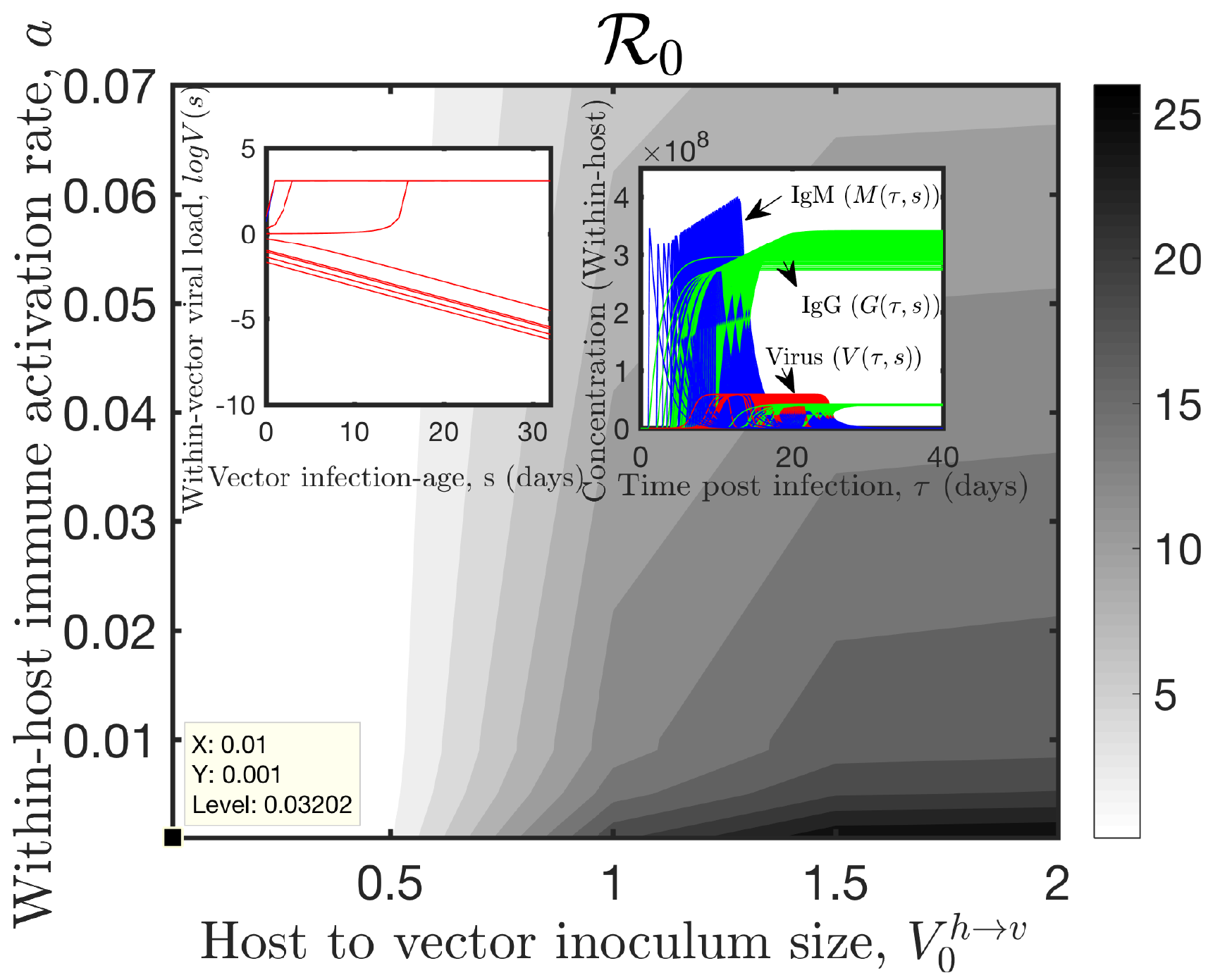}
\caption{\emph{Changing transmission risk across the parameter values.} \textbf{Main figure:} $\mathcal R_0$ versus varying host to vector inoculum size $V_0^{h \rightarrow v}$  and within-host IgM immune activation rate $a.$ \textbf{Left inserted subfigure:} It displays within-vector viral dynamics with respect to host to vector inoculum size $V_0^{h \rightarrow v}\in \{0.01, 0.05, 0.1, 0.2, 0.4, 0.5, 1, 1.5, 2 \}.$ Bistability is due to Allee effect.  \textbf{Right inserted subfigure:} It depicts time-since-infection dependent immune variables' distributions for given initial conditions $P_0^{v \rightarrow h}$ (resulted from within-vector viral kinetics,see left inserted subfigure) and for varying within-host immune activation parameter $a$ values. Recall that vector to host inoculum size (initial condition for within-host viral load) is: $P_0^{v \rightarrow h}=\hat{c} V(s,V_0^{h \rightarrow v}).$ Feedback across within-vector, within-host and between host scale disease transmission provides distinct regimes of initial transmission risk as shown by the contour plot of $\mathcal R_0$ in the main figure.} 
\end{center}
\end{figure} 
 

\section{The signatures of within-vector viral kinetics on disease dynamics}\label{implications}

 An interesting question is:  \textit{Is it possible that infectiousness of mosquitoes can be a good predictor of disease outbreaks?}  In a recent study, Churcher et al.\cite{churcher2017probability} found that the amount of parasites in a mosquito's salivary glands not only is a good indicator for how much the mosquito bite can be infectious, and also how faster infection would develop within-host upon receiving the bite. Therefore it is crucial to understand \emph{how the within-vector viral kinetics can be scaled up to the disease dynamics among host population for prediction and disease intervention.}
 
\noindent In previous section, we show that the basic reproduction number, $\mathcal R_0,$ is a threshold quantity, providing whether a disease can persist or  eventually  die out.  To numerically calculate the initial transmission risk $\mathcal R_0$, we first compute the probability function  $\pi_H(\tau,s),$ depending on the within-host model  \eqref{PB} variables, with initial condition, vector to host inoculum size $P_0^{v\rightarrow h}=h(V(s_i,\bfa p_v)),$ where $s_i$  is the vector-infection-age, and $\bfa p_v$ represent the within-vector model parameters.
Notice that the epidemiological parameters $\beta_H(\tau,s), \nu_H(\tau,s), \gamma_H(\tau,s)$ are functions of within-host immunological variables $P(\tau,s), M(\tau,s), G(\tau,s),$ and the vector to host inoculum size, $P_0^{v\rightarrow h},$ governed by within-vector variable $V(s,\bfa p_v)$. Let $V(s,\bfa p_v)$ be the solution of the system \eqref{within_vector} with initial pathogen concentration $V_0^{h \rightarrow v}$. Here, we consider the host to vector inoculum size, $V_0^{h\rightarrow v},$ to be constant, representing the mean.  Then by implementing trapezoidal rule multiple times with chosen fixed time step size $\Delta \tau =0.0005,$ we estimated $\mathcal R_0$ and the steady state disease abundance $ \bar{\mathcal I}_H$ (host) and $\bar{\mathcal I}_v$ (vector population) (see Fig.3).

\noindent For numerical simulations, we obtain the value of within-host virus-immune response model and epidemiological parameters from the literature, presented in Tables  \ref{table:fitparamPB}, \ref{table:fixparam2}, and \ref{table:fitparamepi}, respectively. To estimate the within-vector parameters, we extracted within-mosquito WNV viral data, given in \cite{fortuna2015experimental}, by using MATLAB code \emph{grabit.m}, and numerically fit these data by using the \textit{least square error}.  Fig.2 displays the fitted model solution and the within-mosquito viral data (blue dots) given in \cite{fortuna2015experimental}. The fitted parameter values are: $r_v= 0.3258,\ K_v= 1.2303\times 10^3, \ U_v= 0.9933.$ 

\noindent Prior field studies suggest that environmental factors can manipulate the mosquito's competence \cite{tabachnick2013nature,anderson2010effects}. For example, it has been shown that as temperature increases,  virus replication increases in a mosquito's tissues; therefore increasing viral replication within-mosquito and viral transmission to host. Some of the observed effects of temperature, it is suggested, are due to increased viral replication at higher temperatures that often results in a shortening of the EIP (the time it takes for a mosquito to become infectious once it has taken a viremic blood meal).  
In addition, it is revealed that the host to vector inoculum size and the length of the exposed period influence the effect of WNV transmission by  Cx. nigripalpus \cite{anderson2010effects}.  
However the impact of these factors on disease dynamics is still unclear.


\noindent In Fig.3, we vary the value of within-vector viral growth rate as $r_v=0.1$ (orange line), $r_v=0.5$ (green line), $r_v=1$ (blue line), and plot the corresponding values of basic reproduction number for distinct values of within-host viral growth rate $r_h \in [0.01, 7]$, and ask: \emph{How does the effect of external factors such as temperature can be scaled up to host population level disease transmission?}
The inserted subfigures in Fig.3 displays the corresponding within-vector viral dynamics with respect to varying values of the vector parameter $r_v$(right), and the within-host viral-immune response antibody dynamics for $r_h=7.$ and $P_0^{v \rightarrow h}=0.01.$  We observe that increasing value of $r_v$ shortens virus incubation period for vectors, mimicking field studies, mentioned above \cite{tabachnick2013nature}. Our numerical results suggest that in return, at host population scale, these mechanisms may lead to significant increase in initial transmission risks, $\mathcal R_0.$ The biological insight is that increasing viral replication rates inside the mosquito decreases the time needed for a blood-fed mosquito to be able to pass on the virus to another host \cite{dohm2002effect, soverow2009infectious, reisen2014effects}. For example, in Fig.3, an increase in within-vector parasite growth rate $r_v$ from $0.5$ to $1,$ shortens the incubation period for $4$ days (see inserted left figure), and increases the epidemic (disease persistence) parameter range of $r_h$ ($\mathcal R_0>1$) from $[0.15 \ 3.9]$ to $[0.14 \ 4.68].$  Therefore these results demonstrate that  in an environment where disease may not persist due low susceptibility of host population (lower $r_h$), an increase in within-vector viral replication rate $r_v$ via external factors may lead prolonged epidemic (disease persistence). For instance, in an environment, where a resident mosquito population has a mean value of $r_v=0.5,$ the disease only persists when within-host viral growth rate $r_h$ is in the parameter range $ [0.15 \ 3.9].$ However when within-vector parasite growth rate is increased to $r_v=1$, this range increases to $[0.14 \ 4.68],$ implying that the infectivity of vector population can make host population more susceptible to epidemics. Therefore  within-vector viral kinetics (whether it is manipulated by external factors or not) can change the fate of the disease outcomes, and might be a good predictor for disease outbreaks.

In Fig.4, we also assess how the initial transmission risk, $\mathcal R_0,$ changes across the distinct values of  $V_0^{h \rightarrow v}$ and the within-host immune response parameter for $a.$  The numerical results suggest that when the immune activation parameter $a$ is small; i.e. when host population does not have strong immunity or protection against infection, the disease dynamics is very sensitive to host population infectivity. This implies that the impact of host to vector inoculum size $V_0^{h \rightarrow v}$ on the disease dynamics among the host population is more magnified in a host population with low immune profile. When there is no sufficient host immune response, larger $V_0^{h \rightarrow v}$ increases the probability of the parasite transmission from host to vector, in return increases the transmission risk among the host population significantly. These findings could have significant implications for public health, magnifying importance of control strategies such as drug treatment or vaccination, which can be utilized to slow down the viral production within-host scale.

\noindent Another crucial motivation for considering within-vector viral dynamics explicitly in a tractable system is to assess the impact of vector to host inoculum size on the efficacy of control strategies. For example, the recent evidence suggests that the vaccine was less effective when mice or humans were bitten by mosquitoes carrying a greater number of parasites. Assuming that vaccination mainly works by increasing the within-host immune response activation rates $a,$ (or $b$), our results (mentioned above) mimic the observations from the field studies as follows: larger host to vector inoculum size $V_0^{h \rightarrow v}$ results in shorter vector exposed period (see left inserted subfigure in Fig.4), and subsequently generates infectious vector distribution with larger inoculum size $P_0^{v \rightarrow h}.$ Among host population with low immune state, this resulting increase in  $P_0^{v \rightarrow h}$ causes larger vector to host virus transmission, ultimately leading an increase in the number of secondary cases ($\mathcal R_0$). The biological insight behind of these findings from field studies is that because the vaccine can only kill a certain proportion of the parasites, it is overwhelmed when the parasite population is too large, suggesting that "it will become epidemiologically important to know how infected a mosquito is for disease elimination". 

\noindent One drawback of our model (\ref{SIRhost_n}-\ref{SIvector_n}) is that the host to vector inoculum size $V_0^{h \rightarrow v}$ is chosen to be constant, representing the mean. Indeed it should also depend on the within-host viral load $P(\tau,s).$ Next we argue the motivation and challenges in overcoming this strain, and develop a feasible way to vary the host to vector  inoculum size $V_0^{h \rightarrow v},$ depending on host infectiousness. Recall that in our model, vector to host inoculum size depends on within-vector pathogen dynamics; i.e. $P^{v \rightarrow h}_0=h(V(s, V_0^{v\rightarrow h}))$, but host to vector inoculum size is assumed to be constant, $V_0^{v\rightarrow h}=P_0.$

\begin{SCfigure}\label{Infection_cycle}
\centering
 \includegraphics[width=0.7\textwidth]{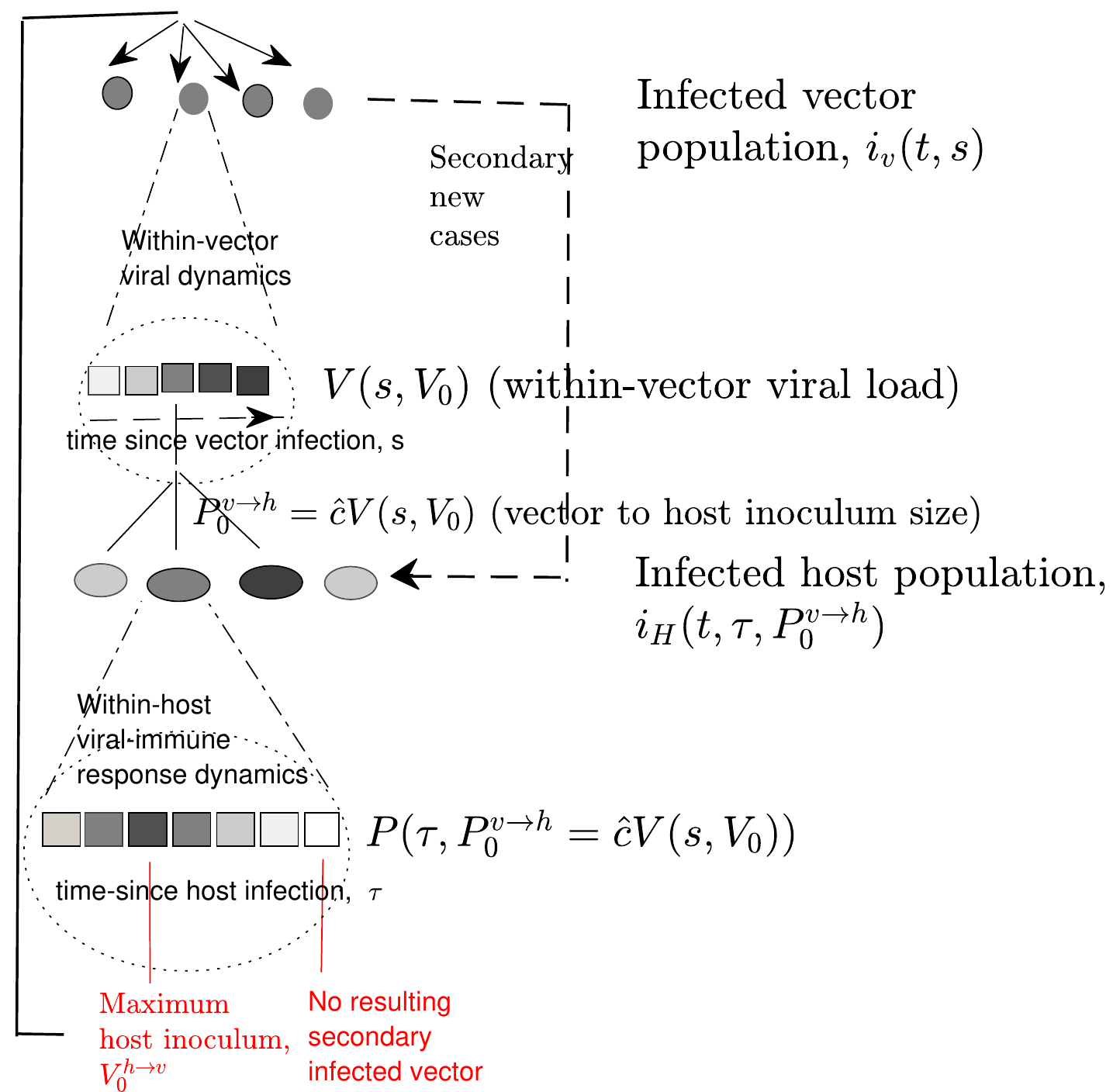}
 \caption{\emph{``Infinite-dimensional'' feedback.} The chain of across-scale interactions effectively introduces feedback from both scales. } 
\end{SCfigure} 

\begin{figure}[t!]\label{fig_R01upd2}
\begin{center}
(a)\includegraphics[width=0.35\textwidth]{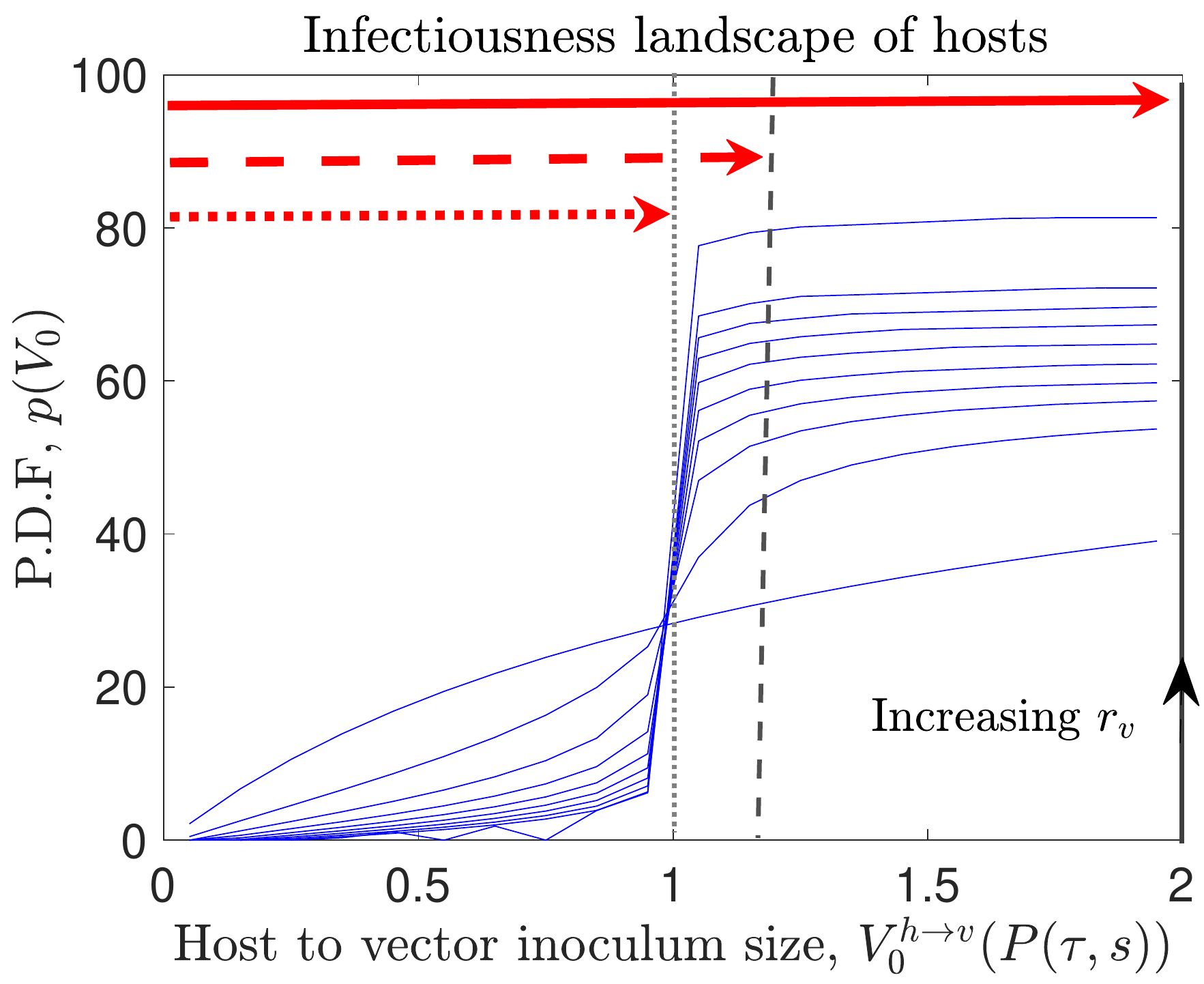}
\mbox{\hspace{0.01\textwidth}}
(b)\includegraphics[width=0.4\textwidth]{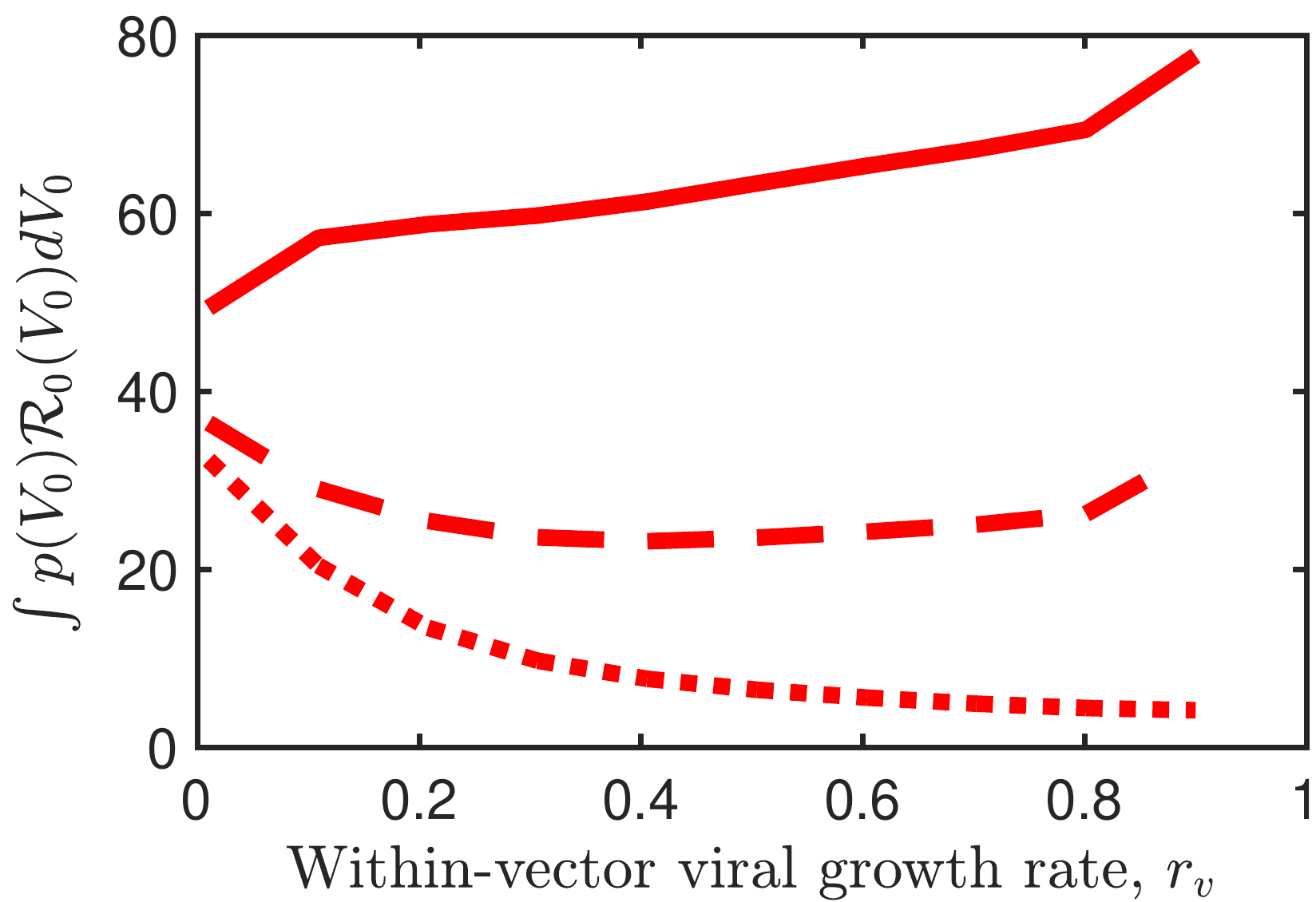}
\caption{a)The distribution of host to vector inoculum size $V_0^{h \rightarrow v}$ with respect to varying within-vector viral growth rate $r_v.$ b) $\mathcal R_0$ versus the within-vector viral growth rate $r_v$ with highly infectious (solid line, $V_0^{h \rightarrow v} \in [0,2]$), medium infectious  (dashed line, $V_0^{h \rightarrow v} \in [0,1.2]$), and less infectious host population (dotted line, $V_0^{h \rightarrow v} \in [0,1]$), respectively.
} 
\end{center}
\end{figure} 

\subsection{Incorporating two-way feedback between epidemiological and immunological scales}\label{two_way_feedback}


\noindent Mosquito vectors are often exposed to hosts that individually vary in pathogen loads, which can result in variation in the proportion of the mosquito population that becomes infectious. Although recent immuno-epidemiological models \cite{gulbudak2017vector, tuncer2016structural} have been successful in measuring host to vector transmission as a function of within-host viral load, in these models the varying within-host pathogen loads only implicitly affect the number of secondary infectious cases, where impact of host to vector inoculum size cannot be studied explicitly.
Here we introduce a feasible way to incorporate host to vector inoculum size $V_0^{h \rightarrow v},$ by defining a distribution. It affects within-vector viral kinetics, subsequently vector-to-host disease transmission chain. Finally, we numerically and analytically explored how the infectivity of host population (measured by $V_0^{h \rightarrow v}$) impacts arbovirus disease dynamics.

\noindent Notice that host to vector inoculum size, $V_0^{h \rightarrow v}$, which is a function of within-host viral load $P(\tau,s)$,  affect the within-vector dynamics, described by $V^\prime(s)=g(V(s),V_0^{h \rightarrow v}).$ It is in return affect the distribution of the vector to host inoculum size $P_0^{v\rightarrow h}=h(V(s,V_0^{h \rightarrow v})),$ governing the within-host dynamics and ultimately between-host disease dynamics.  This chain of across-scale interactions effectively introduces feedback from both scales (See Fig.5).  Yet, due to the significant mathematical complexity of a two-way ``infinite-dimensional'' feedback, closing the loop is not feasible. Next we describe a  feasible way to improve this limitation. In contrast to previous attempts of incorporating feedback between scales \cite{gandolfi2015epidemic}, our approach is amenable to analysis and biologically relevant for vector-borne diseases.

\noindent To define the distribution, we assume that the probability distribution of $V_0^{h \rightarrow v}$ depend on two factors: (i) the amount of pathogen that a susceptible vector might get upon biting an infected host, depending on \emph{the host infectiousness} $\beta_H(P(\tau,s), P_0^{v\rightarrow h})$ and (ii) the \emph{infected host density} $\bar{i}_H(\tau, s, P_0^{v\rightarrow h}).$ Therefore, we consider a distributed $V_0^{h \rightarrow v},$ depending on within-host dynamics, given by \eqref{PB}, and the corresponding steady state infected host density, $\bar{i}_H(\tau, s,P_0^{v\rightarrow h}),$ given by \eqref{vectorhostequ.}, as follows:

\begin{equation}\label{PDF}
p(V_0^{h \rightarrow v})=\dfrac{\int_0^{\infty}\int_0^{\infty}{\beta_h(P(\tau, P_0^{v\rightarrow h}=h(V(s, V_0))))\bar{i}_H(\tau, s,P_0^{v\rightarrow h}=h(V(s, V_0)))d\tau}ds}{\int_{V^0_{lower}}^{V^0_{upper}}\int_0^{\infty} \int_0^{\infty}{\beta_h(P(\tau, P_0^{v\rightarrow h}=h(V(s, V_0))))\bar{i}_H(\tau, s, P_0^{v\rightarrow h}=h(V(s, V_0)))ds d\tau}d V_0}.
\end{equation} 

\noindent Then the system has the basic reproduction number as follows:
\begin{equation*}
\int {p(V_0^{h \rightarrow v})\mathcal R_0(V_0^{h \rightarrow v}, a) dV_0^{h \rightarrow v}}.
\end{equation*}


\noindent The Fig. 6(a) displays the distribution of host to vector inoculum size with respect to varying within-vector viral growth rate $r_v$. The Fig. 6(b) displays how $\mathcal R_0$ changes w.r.t. varying $r_v,$ given how infectious the host population is, which is defined by the range of $V_0^{h \rightarrow v}$.  Numerical results suggest that in a highly infectious host population (larger inoculum  size range: $V_0^{h \rightarrow v} \in [0.01\ 2] $), increasing within-vector viral growth rate $r_v$ increases initial transmission risk, $\mathcal R_0.$ However, an increase in $r_v$ among a less infectious host population (inoculum  size $V_0^{h \rightarrow v} \in [0.01\ 1] $) decreases $\mathcal R_0,$ suggesting that when the host population is less infectious due to resulting distribution $p(V_0^{h \rightarrow v}),$ vectors as intermediate carries only dilute the effect of disease transmission due to Allee effect.

\noindent These findings could have significant implications for disease control.  Our numerical results highlight that:  (i) a significant reduction in host to vector inoculum size, which can be accomplish utilizing a drug treatment or vaccination (to hamper virus transmission from host to vector) can reduce $\mathcal R_0,$ significantly, in which case the disease can be ultimately eradicated (see Fig.6), (ii) when host population is very infectious, vector infectivity magnifies the disease outcomes.

\noindent A more rigorous approach in assessing the impact of vector, or host parameters on disease dynamics requires investigating the sensitivity of within-vector viral kinetics, within-host immune response and epidemic parameters to $\mathcal R_0,$ and $  \mathcal E^+$ which we reserve as a future work.

\section{Conclusion}\label{conc}

Within-vector viral dynamics can be a driving mechanism in disease dynamics and vector-borne pathogen evolution. Assessing the impact of within-vector viral kinetics on disease dynamics at population scale requires tractable models to measure this impact. In this study, we develop a multi-scale vector-borne disease model, connecting all scales from within-vector viral kinetics to between vector-host disease spread. By doing so, we investigate the impact of within-vector viral kinetics on disease dynamics and, in particular, address:
\begin{itemize}
\item[(i)]  How host and vector infectivity, measured by inoculum size, might affect the landscape of the initial transmission risks, $\mathcal R_0.$
\item[(ii)]  And how within-host immune response combined with within-vector viral kinetics might affect the success of a control strategy such as vaccination, and drug treatment.
\end{itemize}

 There are several reasons for explicitly modeling the heterogeneity in the within-vector dynamics in the novel manner of this work.  First the overall aim is to construct a multi-scale model so variations in parameters associated with within-vector viral kinetics can be extrapolated to the overall epidemic dynamics.  These variations may come from climate or environmental factors, or bio-control strategies such as \emph{Wolbachia}, and the within-vector viral model can be validated directly from experiment.  Tracking dynamics within vectors, as opposed to using average quantities for vector epidemiological parameters (as in ODE models), allows for important biological and mathematical features to be captured.  For instance, the delay between infecting bite and viral growth to infectious levels within the mosquito, known as \emph{ extrinsic incubation period (EIP)}, can be a very sensitive quantity for determining disease spread \cite{tjaden2013extrinsic}. While a delay differential equation (DDE) can be used and is also a special case of the PDE, a DDE will still not capture all of the heterogeneity levels of infectiousness that this model.  Furthermore, our multi-scale framework provides a natural way to have within-vector parameters shape the EIP, and, in turn, the  overall epidemic, as exemplified in Fig. $3$ where within-vector viral growth rate can largely affect $\mathcal R_0$.

In addition, a goal of this work is to model how vector-to-host inoculum size affects the dynamics.  In previous work \cite{gulbudak2017vector}, we showed that this inoculum size, as a parameter, has a very large effect on virulence evolution for vector-borne diseases.  Here by also incorporating vector viral infection kinetics, we can directly model variable inoculum size based on the within-vector dynamics determining both infectiousness and the initial condition in our within-host model which has a large impact on the host infection, as observed in Fig. $4$.

For analytical results, upon defining the basic reproduction number, $\mathcal R_0,$  (depending on host and vector infectious status), we prove that if $\mathcal R_0<1,$ the disease-free equilibrium  $\mathcal E_0$ is locally (via linearization) and globally asymptotically stable (via comparision principle). Otherwise if $\mathcal R_0>1,$ the system has a unique endemic equilibrium, $\mathcal E^\dagger,$ and it is locally asymptotically stable when the vector to host  inoculum size, $P_0^{v \rightarrow h},$ is constant during vector infectious time period. However, for general case, the stability of $\mathcal E^\dagger,$  might not be guaranteed via standard linearization method. We provide a condition that if holds, the system might present a Hopf bifurcation, leading oscillatory dynamics.  Given the constant vector to host inoculum size, we also show that whenever $\mathcal R_0>1,$ the disease is uniformly weakly persistent.

Our numerical results suggest that when immune response is very low among host population, or when vaccination do not provide sufficiently large immunity, the disease transmission between and within-host are very sensitive to vector to host inoculum size $P_0^{v \rightarrow h}$, mimicking field studies.  Indeed, recent field studies suggest that when mosquitoes were very infectious (large vector to host inoculum size), the vaccine was less effective when mice or humans were bitten by mosquitoes carrying a greater number of Malaria parasites, due to "overwhelmed" immune response. Therefore the within-vector viral kinetics, providing  how infectious mosquito population,  can be crucial determining the transmission risk, and it can impact the outcomes of disease control strategies such as vaccination. In addition, we extend this model by incorporating a distribution of host to vector inoculum size $V_0^{h \rightarrow v}.$ Our results suggest that in a highly infectious host population, the disease outbreaks are highly sensitive to vector competence, magnifying the importance of disease control strategies such as drug treatment, and vaccination, which can slow down the viral progression within-hosts. 

In conclusion, in this paper, we develop an immuno-epidemiological model, coupling within-vector viral kinetics, within-host virus-immune response and between host -vector disease transmission.  As one of the crucial applications, we show that the developed multi-scale model can be utilized to assess the impact of disease control strategies, when the infectiousness of host population and vector population vary across scales. We also investigate how environmental factors such as temperature can magnify the role of vectors in disease outcomes. 
Field studies also suggest that nutrition and competition during the larval stage may also influence the transmission capability of arboviruses for the resulting adult females. In addition, environmental factors such as exposure to insecticides in the adult or larval stages has been shown to influence mosquito competence for arboviruses  \cite{muturi2011larval,yadav2005effect,muturi2011larval}. Future work will investigate the role of these mechanisms on disease dynamics. 
In addition in host scale, we only consider adaptive, but not innate immune response, which might affect the disease outcomes.  Future work will include these complexities. The multi-scale modelling framework, introduced here, can be utilized to assess the role of vectors on disease dynamics, given many external factors affecting the vector competence and host immune response. In addition, it can be used for assessing the impact of Wolbachia-based biocontrol strategy, mainly utilized to interfere within-vector viral growth to slow down disease transmission among host population, which will be the future work. In summary,  the modeling work contained in this paper can help to understand the effect of within-vector viral kinetics on arbovirus disease dynamics and help to guide policies on strategies for disease control.



\section{Acknowledgment} This project may have benefited from discussions with Gabriela Blohm (University of Florida, College of Public Health and Health Professions). In addition, the author also thanks two anonymous reviewers for their helpful comments and feedback on the manuscript, and James M. Hyman (Tulane University), Carrie Manore (Los Alamos National Laboratory), and Gerardo Chowell (Georgia State University) for their suggestions and comments during \emph{ New Orleans workshop on Modeling the Spread of Infectious Diseases} at Tulane University. This work was supported by a grant from the Simons Foundation/SFARI($638193$, HG).

\newpage
\setcounter{page}{1}
\bibliographystyle{siamplain.bst}
\bibliography{NSF_References}

\begin{thebibliography}{10}

\bibitem{agusto2019transmission}
F.~B. Agusto, M.~Leite, and M.~E. Orive.
\newblock The transmission dynamics of a within-and between-hosts malaria
  model.
\newblock {\em Ecological Complexity}, 38:31--55, 2019.

\bibitem{anderson2010effects}
S.~L. Anderson, S.~L. Richards, W.~J. Tabachnick, and C.~T. Smartt.
\newblock Effects of west nile virus dose and extrinsic incubation temperature
  on temporal progression of vector competence in culex pipiens
  quinquefasciatus.
\newblock {\em Journal of the American Mosquito Control Association},
  26(1):103, 2010.

\bibitem{browne2016immune}
C.~Browne.
\newblock Immune response in virus model structured by cell infection-age.
\newblock {\em Mathematical Biosciences \& Engineering}, 13(5):887--909, 2016.

\bibitem{cai2017global}
L.-M. Cai, X.-Z. Li, B.~Fang, and S.~Ruan.
\newblock Global properties of vector--host disease models with time delays.
\newblock {\em Journal of mathematical biology}, 74(6):1397--1423, 2017.

\bibitem{cai2013dynamical}
L.-M. Cai, X.-Z. Li, and Z.~Li.
\newblock Dynamical behavior of an epidemic model for a vector-borne disease
  with direct transmission.
\newblock {\em Chaos, Solitons \& Fractals}, 46:54--64, 2013.

\bibitem{churcher2017probability}
T.~S. Churcher, R.~E. Sinden, N.~J. Edwards, I.~D. Poulton, T.~W. Rampling,
  P.~M. Brock, J.~T. Griffin, L.~M. Upton, S.~E. Zakutansky, K.~A. Sala, et~al.
\newblock Probability of transmission of malaria from mosquito to human is
  regulated by mosquito parasite density in na{\"\i}ve and vaccinated hosts.
\newblock {\em PLoS pathogens}, 13(1):e1006108, 2017.

\bibitem{ciupe2017host}
S.~M. Ciupe and J.~M. Heffernan.
\newblock In-host modeling.
\newblock {\em Infectious Disease Modelling}, 2(2):188--202, 2017.

\bibitem{dohm2002effect}
D.~J. Dohm, M.~L. O'Guinn, and M.~J. Turell.
\newblock Effect of environmental temperature on the ability of culex pipiens
  (diptera: Culicidae) to transmit west nile virus.
\newblock {\em Journal of medical entomology}, 39(1):221--225, 2002.

\bibitem{eikenberry2018mathematical}
S.~E. Eikenberry and A.~B. Gumel.
\newblock Mathematical modeling of climate change and malaria transmission
  dynamics: a historical review.
\newblock {\em Journal of mathematical biology}, 77:857--933, 2018.

\bibitem{fan2010impact}
G.~Fan, J.~Liu, P.~Van~den Driessche, J.~Wu, and H.~Zhu.
\newblock The impact of maturation delay of mosquitoes on the transmission of
  west nile virus.
\newblock {\em Mathematical biosciences}, 228(2):119--126, 2010.

\bibitem{forrester2014arboviral}
N.~Forrester, L.~Coffey, and S.~Weaver.
\newblock Arboviral bottlenecks and challenges to maintaining diversity and
  fitness during mosquito transmission.
\newblock {\em Viruses}, 6(10):3991--4004, 2014.

\bibitem{fortuna2015experimental}
C.~Fortuna, M.~E. Remoli, M.~Di~Luca, F.~Severini, L.~Toma, E.~Benedetti,
  P.~Bucci, F.~Montarsi, G.~Minelli, D.~Boccolini, et~al.
\newblock Experimental studies on comparison of the vector competence of four
  italian culex pipiens populations for west nile virus.
\newblock {\em Parasites \& vectors}, 8(1):463, 2015.

\bibitem{franz2015tissue}
A.~Franz, A.~Kantor, A.~Passarelli, and R.~Clem.
\newblock Tissue barriers to arbovirus infection in mosquitoes.
\newblock {\em Viruses}, 7(7):3741--3767, 2015.

\bibitem{fraser2014virulence}
C.~Fraser, K.~Lythgoe, G.~E. Leventhal, G.~Shirreff, T.~D. Hollingsworth,
  S.~Alizon, and S.~Bonhoeffer.
\newblock Virulence and pathogenesis of hiv-1 infection: an evolutionary
  perspective.
\newblock {\em Science}, 343(6177):1243727, 2014.

\bibitem{gandolfi2015epidemic}
A.~Gandolfi, A.~Pugliese, and C.~Sinisgalli.
\newblock Epidemic dynamics and host immune response: a nested approach.
\newblock {\em Journal of mathematical biology}, 70(3):399--435, 2015.

\bibitem{garira2019coupled}
W.~Garira and D.~Mathebula.
\newblock A coupled multiscale model to guide malaria control and elimination.
\newblock {\em Journal of theoretical biology}, 475:34--59, 2019.

\bibitem{gilchrist2002modeling}
M.~A. Gilchrist and A.~Sasaki.
\newblock Modeling host--parasite coevolution: a nested approach based on
  mechanistic models.
\newblock {\em Journal of Theoretical Biology}, 218(3):289--308, 2002.

\bibitem{gulbudak2014modeling}
H.~Gulbudak.
\newblock {\em Modeling culling and vaccination in poultry with application to
  avian influenza}.
\newblock PhD thesis, University of Florida, 2014.

\bibitem{DENVimmunoepi2019}
H.~Gulbudak and C.~Browne.
\newblock Two-strain multi-scale dengue model structured by dynamic host
  antibody level.
\newblock {\em Preprint}, 2019.

\bibitem{gulbudak2017vector}
H.~Gulbudak, V.~L. Cannataro, N.~Tuncer, and M.~Martcheva.
\newblock Vector-borne pathogen and host evolution in a structured
  immuno-epidemiological system.
\newblock {\em Bulletin of mathematical biology}, 79(2):325--355, 2017.

\bibitem{hale1963stability}
J.~K. Hale.
\newblock A stability theorem for functional-differential equations.
\newblock {\em Proceedings of the National Academy of Sciences of the United
  States of America}, 50(5):942, 1963.

\bibitem{handel2015crossing}
A.~Handel and P.~Rohani.
\newblock Crossing the scale from within-host infection dynamics to
  between-host transmission fitness: a discussion of current assumptions and
  knowledge.
\newblock {\em Phil. Trans. R. Soc. B}, 370(1675):20140302, 2015.

\bibitem{honjo2002molecular}
T.~Honjo, K.~Kinoshita, and M.~Muramatsu.
\newblock Molecular mechanism of class switch recombination: linkage with
  somatic hypermutation.
\newblock {\em Annual review of immunology}, 20(1):165--196, 2002.

\bibitem{kitagawa2019mathematical}
K.~Kitagawa, T.~Kuniya, S.~Nakaoka, Y.~Asai, K.~Watashi, and S.~Iwami.
\newblock Mathematical analysis of a transformed ode from a pde multiscale
  model of hepatitis c virus infection.
\newblock {\em Bulletin of mathematical biology}, 81(5):1427--1441, 2019.

\bibitem{lashari2011global}
A.~A. Lashari and G.~Zaman.
\newblock Global dynamics of vector-borne diseases with horizontal transmission
  in host population.
\newblock {\em Computers \& Mathematics with Applications}, 61(4):745--754,
  2011.

\bibitem{martcheva2013unstable}
M.~Martcheva and O.~Prosper.
\newblock Unstable dynamics of vector-borne diseases: Modeling through
  delay-differential equations.
\newblock In {\em Dynamic models of infectious diseases}, pages 43--75.
  Springer, 2013.

\bibitem{martcheva2003progression}
M.~Martcheva and H.~R. Thieme.
\newblock Progression age enhanced backward bifurcation in an epidemic model
  with super-infection.
\newblock {\em Journal of Mathematical Biology}, 46(5):385--424, 2003.

\bibitem{muturi2011larval}
E.~Muturi, C.~Kim, B.~Alto, M.~Berenbaum, and M.~Schuler.
\newblock Larval environmental stress alters adult mosquito fitness and
  competence for arboviruses.
\newblock {\em Trop. Med. Int. Health}, 16:955--964, 2011.

\bibitem{nah2014malaria}
K.~Nah, Y.~Nakata, and G.~R{\"o}st.
\newblock Malaria dynamics with long incubation period in hosts.
\newblock {\em Computers \& Mathematics with Applications}, 68(9):915--930,
  2014.

\bibitem{prosper2014optimal}
O.~Prosper, N.~Ruktanonchai, and M.~Martcheva.
\newblock Optimal vaccination and bednet maintenance for the control of malaria
  in a region with naturally acquired immunity.
\newblock {\em Journal of theoretical biology}, 353:142--156, 2014.

\bibitem{reiner2013systematic}
R.~C. Reiner~Jr, T.~A. Perkins, C.~M. Barker, T.~Niu, L.~F. Chaves, A.~M.
  Ellis, D.~B. George, A.~Le~Menach, J.~R. Pulliam, D.~Bisanzio, et~al.
\newblock A systematic review of mathematical models of mosquito-borne pathogen
  transmission: 1970--2010.
\newblock {\em Journal of The Royal Society Interface}, 10(81):20120921, 2013.

\bibitem{reisen2014effects}
W.~K. Reisen, Y.~Fang, and V.~M. Martinez.
\newblock Effects of temperature on the transmission of west nile virus by
  culex tarsalis (diptera: Culicidae).
\newblock {\em Journal of medical entomology}, 43(2):309--317, 2014.

\bibitem{rock2015age}
K.~Rock, D.~Wood, and M.~Keeling.
\newblock Age-and bite-structured models for vector-borne diseases.
\newblock {\em Epidemics}, 12:20--29, 2015.

\bibitem{salas2015viral}
J.~S. Salas-Benito, D.~Nova-Ocampo, et~al.
\newblock Viral interference and persistence in mosquito-borne flaviviruses.
\newblock {\em Journal of immunology research}, 2015, 2015.

\bibitem{sim2014mosquito}
S.~Sim, N.~Jupatanakul, and G.~Dimopoulos.
\newblock Mosquito immunity against arboviruses.
\newblock {\em Viruses}, 6(11):4479--4504, 2014.

\bibitem{soverow2009infectious}
J.~E. Soverow, G.~A. Wellenius, D.~N. Fisman, and M.~A. Mittleman.
\newblock Infectious disease in a warming world: how weather influenced west
  nile virus in the united states (2001--2005).
\newblock {\em Environmental health perspectives}, 117(7):1049--1052, 2009.

\bibitem{tabachnick2013nature}
W.~Tabachnick.
\newblock Nature, nurture and evolution of intra-species variation in mosquito
  arbovirus transmission competence.
\newblock {\em International journal of environmental research and public
  health}, 10(1):249--277, 2013.

\bibitem{taghikhani2018mathematics}
R.~Taghikhani and A.~B. Gumel.
\newblock Mathematics of dengue transmission dynamics: Roles of vector vertical
  transmission and temperature fluctuations.
\newblock {\em Infectious Disease Modelling}, 3:266--292, 2018.

\bibitem{thieme}
H.~R. Thieme et~al.
\newblock Semiflows generated by lipschitz perturbations of non-densely defined
  operators.
\newblock {\em Differential and Integral Equations}, 3(6):1035--1066, 1990.

\bibitem{tjaden2013extrinsic}
N.~B. Tjaden, S.~M. Thomas, D.~Fischer, and C.~Beierkuhnlein.
\newblock Extrinsic incubation period of dengue: knowledge, backlog, and
  applications of temperature dependence.
\newblock {\em PLoS neglected tropical diseases}, 7(6):e2207, 2013.

\bibitem{tuncer2016structural}
N.~Tuncer, H.~Gulbudak, V.~L. Cannataro, and M.~Martcheva.
\newblock Structural and practical identifiability issues of
  immuno-epidemiological vector--host models with application to rift valley
  fever.
\newblock {\em Bulletin of mathematical biology}, 78(9):1796--1827, 2016.

\bibitem{wang2017global}
X.~Wang, Y.~Chen, and S.~Liu.
\newblock Global dynamics of a vector-borne disease model with infection ages
  and general incidence rates.
\newblock {\em Computational and Applied Mathematics}, pages 1--26, 2017.

\bibitem{webb}
G.~F. Webb.
\newblock {\em Theory of nonlinear age-dependent population dynamics}.
\newblock CRC Press, 1985.

\bibitem{wei2008epidemic}
H.-M. Wei, X.-Z. Li, and M.~Martcheva.
\newblock An epidemic model of a vector-borne disease with direct transmission
  and time delay.
\newblock {\em Journal of Mathematical Analysis and Applications},
  342(2):895--908, 2008.

\bibitem{xu2016hopf}
J.~Xu and Y.~Zhou.
\newblock Hopf bifurcation and its stability for a vector-borne disease model
  with delay and reinfection.
\newblock {\em Applied Mathematical Modelling}, 40(3):1685--1702, 2016.

\bibitem{yadav2005effect}
P.~Yadav, P.~Barde, M.~Gokhale, V.~Vipat, A.~Mishra, J.~Pal, and D.~Mourya.
\newblock Effect of temperature and insecticide stresses on aedes aegypti
  larvae and their influence on the susceptibility of mosquitoes to dengue-2
  virus.

\end{thebibliography}


\begin{table}
\caption{Parameter estimates of within-vector model \eqref{within_vector} fitted to viremia levels in \cite{fortuna2015experimental})} 
\begin{tabularx}{\textwidth}{>{} lXXl}
\toprule
Parameter & Estimate &Units & Reference \\ [0.5ex]
\toprule
\\
$r_v$ & $0.3258$ &$ (\mbox{TCID}_{50}\times \mbox{ days})^{-1}$ & See Section \ref{acc_scales}\\ [0.5 ex] 
$K_v$ & $1.2303\times 10^3$ & TCID$_{50}$ & See Section \ref{acc_scales} \\[0.5 ex]
$U_v$ & $0.9933$& TCID$_{50}$ & See Section \ref{acc_scales}\\ [0.5 ex]
\bottomrule
\end{tabularx}
\label{table:fitparam} 
\end{table}

\begin{table}
\caption{Parameter estimates of within-host model  \eqref{PB} fitted to viremia levels in \cite{tuncer2016structural})} 
\begin{tabularx}{\textwidth}{>{} lXX}
\toprule
Parameter & Estimate &Units\\ [0.5ex]
\toprule
\\
$r$ & $7.21859433$ &$ (\mbox{TCID}_{50}\times \mbox{ days})^{-1}$ \\ [0.5 ex] 
$K$ & $5.828521156794433\times 10^7$ & TCID$_{50}$ \\[0.5 ex]
$a$ & $1.1\times 10^{-7}$& $(\mbox{ELISA PP} \times \mbox{days})^{-1}$\\ [0.5 ex]
$q$ & $ 0.48442451$ &$\mbox{days}^{-1}$ \\ [0.5 ex] 
$w$ &  $0.40599756$ &$\mbox{days}^{-1}$ \\ [0.5 ex]
$b$ & $5\times 10^{-8}$ & $(\mbox{ELISA PP} \times \mbox{days})^{-1}$\\ [0.5 ex]
\bottomrule
\end{tabularx}
\label{table:fitparamPB} 
\end{table}
\begin{table}
\caption{Estimated parameter values of epidemiological model  \eqref{SIRhost_n} - \eqref{SIvector_n} fitted in \cite{tuncer2016structural})} 
\footnotesize{
\begin{tabularx}{\textwidth}{>{} llXllX}
\toprule
Parameter & Fixed Value &Unit&Parameter &Fixed Value&Unit\\ [0.5ex]
\toprule
\\
$\eta$ & ${1}/{40}$ & $\D \frac{\mbox{vector}}{\mbox{time}}$& $I_v(0)$ & $0.000005$ &vector  \\ [2.5 ex]
$\mu$ & ${1}/{40}$ &$\D \frac{1}{\mbox{time}}$&$S_H(0)$ & $0.9999$&host \\[2.5 ex]
$\Lambda$ & ${1}/{(365\times10)}$&$\D \frac{\mbox{host}}{\mbox{time}}$& $i_H(0,\tau)$ & $0.0000001$&$\D \frac{\mbox{host}}{\mbox{time}}$\\ [2.5 ex]
$d$ & $ {1}/{(365\times10)}$ &$\D \frac{1}{\mbox{time}}$& $R_H(0)$ & $0$&host  \\ [2.5 ex] 
$\beta_v$ &  $0.2$ &$\D \frac{1}{\mbox{host}\times\mbox{time}}$&$S_v(0)$ & $0.999995$&vector \\ [2.5 ex]
\bottomrule
\end{tabularx}
}
\label{table:fixparam2} 
\end{table}

\begin{table}
\caption{Parameter estimates of the epidemiological model fitted to human incidences as reported by CDC} 
\footnotesize{
\begin{tabularx}{\textwidth}{>{} lXXlXX}
\toprule
Parameter & Estimate &Unit&Parameter & Estimate&Unit \\ [0.5ex]
\toprule
\\
$b_0$ & $1.609\times 10^{-8}$ &$\D\frac{1}{\mbox{pathogen}\times\mbox{time}}$&$c_0$ &  $0.00296802039$&$\D\frac{\mbox{pathogen}}{\mbox{antibody}\times\mbox{time}}$  \\ [2.5 ex] 
$b_1$ & $8.35602\times 10^{-6}$ & $\D\frac{1}{\mbox{antibody}\times\mbox{time}}$& $\epsilon_0$ & $  2.89724 \times 10^{-6}$&pathogen   \\[2.5 ex]
$a_0$ & $0.68487870728$&$\D\frac{1}{\mbox{host}\times\mbox{time}}$& \\ [2.5 ex]
$a_1$ & $1.7261 \times 10^4$  &pathogen&   \\ [2.5 ex] 
\bottomrule
\end{tabularx}
}
\label{table:fitparamepi} 
\end{table}

\newpage

\end{document}